\documentclass[11pt,reqno]{amsart}
\usepackage[a4paper]{anysize}\marginsize{2.5cm}{2.5cm}{2.5cm}{2.5cm}
\pdfpagewidth=\paperwidth \pdfpageheight=\paperheight
\allowdisplaybreaks
\parskip=1ex

\usepackage{inputenc}

\usepackage{graphicx}
\usepackage{amssymb,amsmath,amstext,amsfonts,amsthm}
\usepackage{bm}

\usepackage[dvipsnames]{xcolor}

\PassOptionsToPackage{hyphens}{url}\usepackage[pdfusetitle,backref=page]{hyperref}
\usepackage{bookmark}
\hypersetup{bookmarksnumbered}
\hypersetup{colorlinks=true,
urlcolor=Black,
linkcolor=MidnightBlue,citecolor=MidnightBlue}

\renewcommand*{\backref}[1]{}
\renewcommand*{\backrefalt}[4]{%
    \ifcase #1 Not~cited.%
    \or        Cited~on~page~#2.%
    \else      Cited~on~pages~#2.%
    \fi}

\usepackage{import}
\usepackage{comment}
\usepackage{optidef}
\usepackage{epsfig}

\usepackage{color}

\usepackage{graphicx}
\usepackage{float}
\usepackage[font=scriptsize,labelfont=bf]{caption}
\usepackage{subcaption}
\graphicspath{ {./images/} } 
\usepackage[all]{xy}

\theoremstyle{plain}
\newtheorem*{theorem*}{Theorem}
\newtheorem{theorem}{Theorem}
\numberwithin{theorem}{section}

\newtheorem{lemma}[theorem]{Lemma}
\newtheorem{corollary}[theorem]{Corollary}
\newtheorem{conjecture}[theorem]{Conjecture}

\newtheorem{remark}[theorem]{Remark}

\theoremstyle{definition}

\numberwithin{equation}{section}

\newcommand{\CC}{\mathbb{C}}

\newcommand{\RR}{\mathbb{R}}


\DeclareMathOperator*{\sgn}{sgn}
\newcommand{\qq}{Q} 

\usepackage{upquote} 

\begin{document}

\title[3D genome reconstruction from partially phased Hi-C data]{3D genome reconstruction from partially\\ phased Hi-C data}

\author[D. Cifuentes, J. Draisma, O. Henriksson, A. Korchmaros, and K. Kubjas]{Diego Cifuentes, Jan Draisma, Oskar Henriksson,\\ Annachiara Korchmaros, and Kaie Kubjas}

\date{February 22, 2024}

\begin{abstract}
The 3-dimensional (3D) structure of the genome is of significant importance for many cellular processes. In this paper, we study the problem of reconstructing the 3D structure of chromosomes from Hi-C data of diploid organisms, which poses additional challenges compared to the better-studied haploid setting. With the help of techniques from algebraic geometry, we prove that a small amount of phased data is sufficient to ensure finite identifiability, both for noiseless and noisy data. In the light of these results, we propose a new 3D reconstruction method based on semidefinite programming, paired with numerical algebraic geometry and local optimization. The performance of this method is tested on several simulated datasets under different noise levels and with different amounts of phased data. We also apply it to a real dataset from mouse X chromosomes, and we are then able to recover previously known structural features.
\end{abstract}

\maketitle

\section{Introduction}
The eukaryotic chromatin has a three-dimensional (3D) structure in the cell nucleus, which has been shown to be important in regulating basic cellular functions, including gene regulation, transcription, replication,
recombination, and DNA repair~\cite{uhler2017regulation,wang2018crispr}. The 3D DNA organization is also associated with brain development and function; in particular, it is shown to be misregulated in schizophrenia ~\cite{rajarajan2018neuron,rhie2018using} and Alzheimer’s disease ~\cite{nott2019brain}.

All genetic material is stored in chromosomes, which interact in the cell nucleus, and the 3D chromatin structure influences the frequencies of such interactions.
A benchmark tool to measure such frequencies is high-throughput chromosome conformation capture (Hi-C)~\cite{lafontaine2021hi}. 
Hi-C first crosslinks cell genomes, which ``freezes'' contacts between DNA segments. Then the genome is cut into fragments, the fragments are ligated together and then are associated with equally-sized segments of the genome using high-throughput sequencing~\cite{rao20143d}. 
These segments of the genome are called \emph{loci}, and their size is known as \emph{resolution} (e.g., bins of size $1$Mb or $50$Kb). The result of Hi-C is stored in a matrix called \emph{contact matrix} whose elements are the \emph{contact counts} between pairs of loci. 

According to the structure they generate, computational methods for inferring the 3D chromatin structure from a contact matrix fall into two classes: ensemble and consensus methods. In a haploid setting (organisms having a single set of chromosomes), ensemble models such as MCMC5C~\cite{rousseau2011three}, BACH-MIX~\cite{hu2013bayesian} and Chrom3D~\cite{paulsen2017chrom3d}, try to account for structure variations on the genome across cells by inferring a population of 3D structures. On the other hand, consensus methods aim at reconstructing one single 3D structure which may be used as a model for further analysis. In this category, probability-based methods such as PASTIS \cite{varoquaux2014statistical,cauer2019inferring} and ASHIC \cite{ye2020ashic} model contact counts as Poisson random variables of the Euclidean distances between loci, and distance-based methods such as ChromSDE~\cite{zhang2013inference} and ShRec3D ~\cite{lesne20143d} model contact counts as functions of the Euclidean distances. 
An extensive overview of different 3D genome reconstruction techniques is given in~\cite{oluwadare2019overview}.

Most of the methods for 3D genome reconstructions from Hi-C data are for haploid organisms. However, like most mammals, humans are diploid organisms, in which the genetic information is stored in pairs of chromosomes called homologs. Homologous chromosomes are almost identical besides some single nucleotide polymorphisms (SNPs)~\cite{li2021understanding}. In the case of diploid organisms, the Hi-C data does not generally differentiate between homologous chromosomes.   
If we model each chromosome as a string of beads, then we associate two beads to each locus $i\in \{1,\ldots,n\}$, one bead for each homolog. Therefore, each observed contact count $c_{i,j}$ between loci $i$ and $j$
represents aggregated contacts of four different types of interactions, more precisely one of the two homologous beads associated to locus $i$ gets in contact with one of the two homologous beads associated to locus $j$, see Figure~\ref{fig:beads-loci}. This means that the Hi-C data is \emph{unphased}. \emph{Phased} Hi-C data that distinguishes contacts for homologs is rare. 
In our setting, we assume that the data is \emph{partially phased}, i.e., some of the contact counts can be associated with a homolog. For example, in the (mouse) Patski (BL6xSpretus)~\cite{deng2015bipartite,ye2020ashic} cell line, $35.6\%$ of the contact counts are phased; while this value is as low as $0.14\%$ in the human GM12878 cell line~\cite{rao20143d,ye2020ashic}. Therefore, methods for inferring diploid 3D chromatin structure need to take into account the ambiguity of diploid Hi-C data to avoid inaccurate reconstructions. 

\begin{figure}[h]
    \centering
        \includegraphics[width=0.85\textwidth]{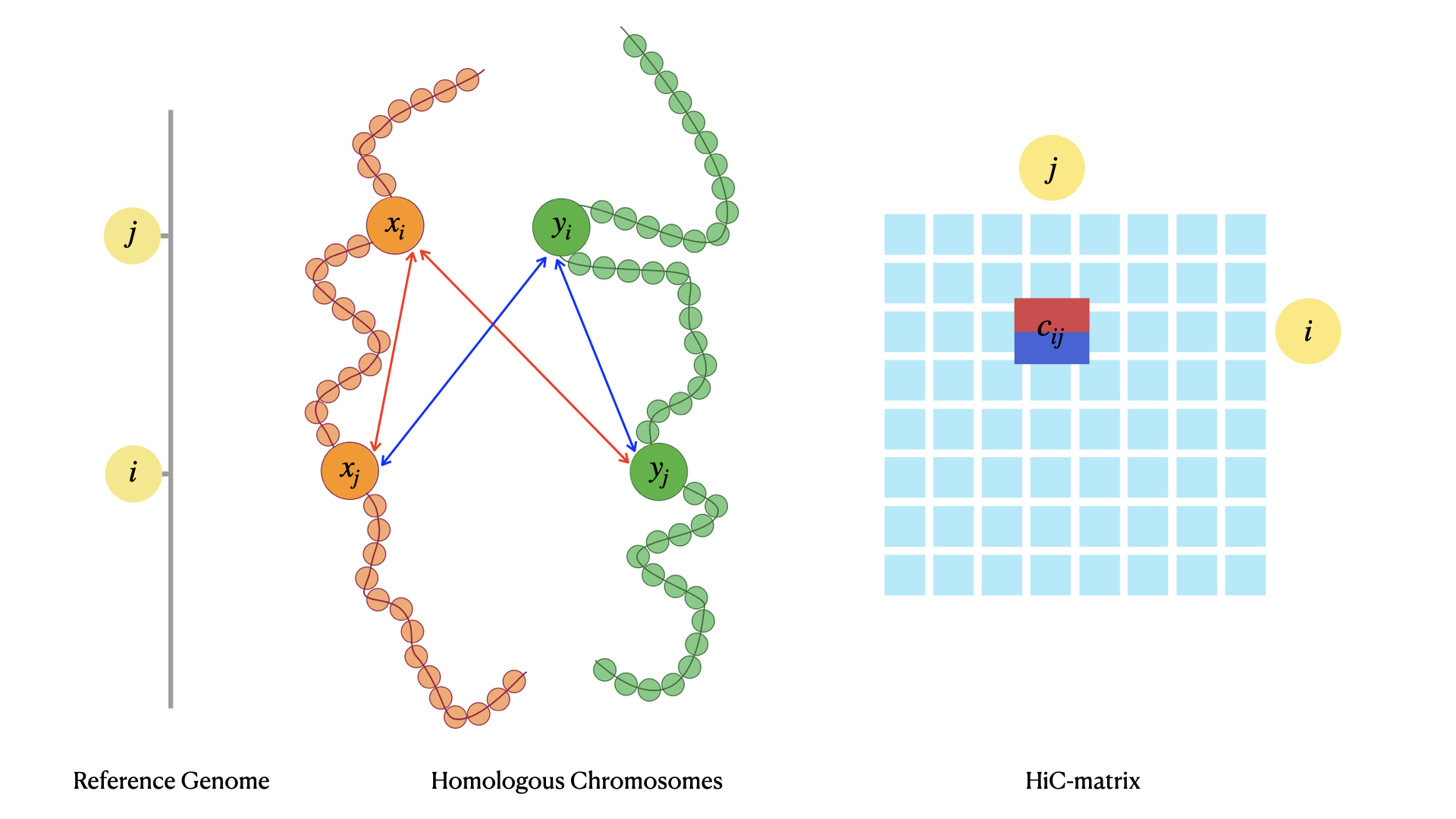}
    \caption{\textbf{Ambiguity of phased data.} Each entry $c_{i,j}$ of the Hi-C matrix corresponds to four different contacts between the two pairs $(x_i,y_i)$ for locus $i$ and $(x_j,y_j)$ for locus $j$. }
    \label{fig:beads-loci}
\end{figure}

Methods for 3D genome reconstruction in diploid organisms have been studied in~\cite{tan2018three,ye2020ashic,cauer2019inferring,luo2020hichap,belyaeva2021identifying,lindsly2021functional,segal2022can}. One approach is to phase Hi-C data~\cite{tan2018three,luo2020hichap,lindsly2021functional}, for example by assigning haplotypes to contacts based on assignments at neighboring contacts~\cite{tan2018three,lindsly2021functional}.
 Cauer et al.~\cite{cauer2019inferring} and Ye and Ma~\cite{ye2020ashic} model contact counts as Poisson random variables. To find the optimal 3D chromatin structure,  Cauer et al. maximize the associated likelihood function combined with two structural constraints. 
 The first constraint imposes that the distances between neighboring beads are similar, and the second one requires that homologous chromosomes are located in different regions of the cell nucleus. On the other hand, Ye and Ma first compute the maximum likelihood estimate of model parameters for each of the homologs separately; these estimates are then refined by estimating the distance between the homologs.
Belyaeva et
al.~\cite{belyaeva2021identifying} show identifiability of the 3D structure when the Euclidean distances between neighboring beads and higher-order contact counts between three or more loci simultaneously are given. Under these assumptions, the 3D reconstruction is obtained by combining distance geometry with semidefinite programming. Segal~\cite{segal2022can} applies recently developed imaging technology, in situ genome sequencing (IGS)~\cite{payne2021situ}, to point out issues in the assumptions made in~\cite{tan2018three,cauer2019inferring,belyaeva2021identifying}, and suggests as alternative assumptions that intra-homolog distances are smaller than corresponding inter-homolog distances and intra-homolog distances are similar for homologous chromosomes. IGS~\cite{payne2021situ} provides yet another method for inferring the 3D structure of the genome, however, at present the resolution and availability of IGS data is limited.

\subsubsection*{Contributions} In this work, we focus on a distance-based approach for partially phased Hi-C data. In particular, we assume that contacts only for some loci are phased. 
In the string of beads model, the locations of the pair of beads associated to $i$-th loci are denoted by $x_i,y_i\in \mathbb{R}^3$. Then homologs are represented  by two sequences $x_1,x_2,\ldots,x_n$ and $y_1,x_2,\ldots,y_n$ in $\mathbb{R}^3$; see Figure~\ref{fig:beads-loci}. Inferring the 3D chromatin structure corresponds to estimating the bead coordinates. Based on Lieberman-Aiden et al.~\cite{lieberman2009comprehensive}, we assume the power law dependency $c_{i,j}= \gamma d_{i,j}^{\alpha}$, where $\alpha$ is a negative conversion factor, between the distance $d_{i,j}$ and contact count $c_{i,j}$ of loci $i$ and $j$. Following Cauer et al.~\cite{cauer2019inferring}, we assume that a contact count between loci is given by the sum of all possible contact counts between the corresponding beads. We call a bead unambiguous if the contacts for the corresponding locus are phased; otherwise, we call a bead ambiguous.

Our first main contribution is to show that for negative rational conversion factors $\alpha$, knowing the locations of six unambiguous beads ensures that there are generically finitely many possible locations for the other beads, both in the noiseless (Theorem~\ref{theorem:noiseless-finite-identifiability}) and noisy (Corollary~\ref{cor:finite_identifiability_noisy}) setting. Moreover, we prove finite identifiability also in the fully ambiguous setting when $\alpha=-2$ and the number of loci is at least $12$ (Theorem~\ref{theorem:ambiguous-noiseless-finite-identifiability}).  Note that the identifiability does not hold for $\alpha=2$ as shown in~\cite{belyaeva2021identifying}.

Our second main contribution is to provide a reconstruction method when $\alpha=-2$, based on semidefinite programming combined with numerical algebraic geometry and local optimization (section~\ref{sec:reconstruction}). The general idea is the following: We first estimate the coordinates of the unambiguous beads using only the unambiguous contact counts (which precisely corresponds to the haploid setting) using the SDP-based solver implemented in ChromSDE~\cite{zhang2013inference}. We then exploit our theoretical result on finite identifiability to estimate the coordinates of the ambiguous beads, one by one, by solving several polynomial systems numerically. These estimates are then improved by a local estimation step considering all contact counts. Finally, a  clustering algorithm is used to overcome the symmetry $(x_i,y_i)\mapsto (y_i,x_i)$ in the estimation for the ambiguous beads.

The paper is organized as follows. In section~\ref{sec:math_set_up}, we introduce our mathematical model for the 3D genome reconstruction problem. In section~\ref{sec:identifiability}, we recall identifiability results in the unambiguous setting (section~\ref{subsec:euclidean_distance_geometry}) and then prove identifiability results in the partially ambiguous setting (section~\ref{subsec:partially_ambigous}) and in the fully ambiguous setting (section~\ref{subsec:identifiability_ambiguous_setting}).  We describe our reconstruction method in section~\ref{sec:reconstruction}. We test the performance of our method on synthetic datasets and on a real dataset from the mouse X chromosomes in section~\ref{sec:experiments}. We conclude with a discussion about future research directions in section~\ref{discussion}.

\section{Mathematical model for 3D genome reconstruction}~\label{sec:math_set_up}

In this section we introduce the distance-based model under which we study 3D genome reconstruction. In section~\ref{subsec:contact_count_matrices} we give the background on contact count matrices. In section~\ref{subsec:distances} we describe a power-law between contacts and distances, which allows to translate the information about contacts into distances.

\subsection{Contact count matrices} \label{subsec:contact_count_matrices}

We model the genome as a string of $2n$ beads, corresponding to $n$ pairs of homologous beads. The positions of the beads are recorded by a matrix 
\[Z=[x_1,\ldots,x_n,y_1,\ldots,y_n]^T \in \RR^{2n \times 3}.\] 
The positions $x_i$ and $y_i$ correspond to homologous beads. When convenient, we use the notation $z_1:=x_1,\ldots,z_n:=x_n,z_{n+1}:=y_1,\ldots,z_{2n}:=y_n$. In this notation, 
\[Z=[z_1,\ldots,z_n,z_{n+1},\ldots,z_{2n}]^T \in \RR^{2n \times 3}.\] Let $U$ be the subset of pairs that are unambiguous, i.e., beads in the pair can be distinguished, and let $A$ be the subset of pairs that are ambiguous, i.e.,  beads in the pair cannot be distinguished. The sets $U$ and $A$ form a partition of $[n]$. 

A Hi-C matrix $C$ is a matrix with each row and column corresponding to a genomic locus. Following Cauer et al.~\cite{cauer2019inferring}, we call these contact counts ambiguous and denote the corresponding contact count matrix by $C^A$. If parental genotypes are available, then one can use SNPs to map some reads to each haplotype~\cite{deng2015bipartite,minajigi2015comprehensive,rao20143d}. If both ends of a read contains SNPs that can be associated to a single parent, then the contact count is called unambiguous and the corresponding contact count matrix is denoted by $C^U$. Finally, if only one of the genomic loci present in an interaction can be mapped to one of the homologous chromosomes, then the count is called partially ambiguous and the contact count matrix is denoted by $C^P$.

The unambiguous count matrix $C^U$ is a $2n \times 2n$ matrix with the first $n$ indices corresponding to $x_1,\ldots,x_n$ and the last $n$ indices corresponding to $y_1,\ldots,y_n$. The ambiguous count matrix $C^A$ is an $n \times n$ matrix and we assume that each ambiguous count is the sum of four unambiguous counts:
$$
c^A_{i,j} =  c^U_{i,j}+c^U_{i,j+n}+c^U_{i+n,j}+c^U_{i+n,j+n}.
$$
The partially ambiguous count matrix $C^P$ is a $2n \times n$ matrix and each partially ambiguous count is the sum of two unambiguous counts:
$$
c^P_{i,j} = c^U_{i,j} + c^U_{i,j+n}.
$$

\begin{figure}[ht]
\centering
\vspace{1em}
\begin{subfigure}{0.24\textwidth}
\centering
\scriptsize
\def\svgwidth{0.8\linewidth}
\includegraphics[width=\svgwidth]{./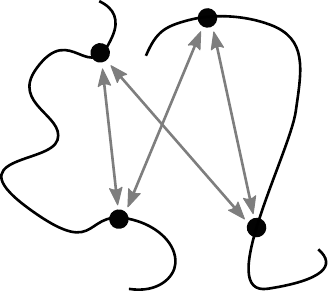}
\caption{$c^A_{i,j}$ for $i,j\in A$}
\end{subfigure}
\hspace{0.2em}
\begin{subfigure}{0.24\textwidth}
\centering
\scriptsize
\def\svgwidth{0.8\linewidth}
\includegraphics[width=\svgwidth]{./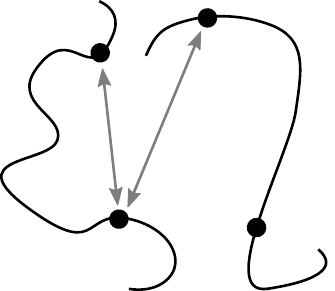}
\caption{$c^P_{i,j}$ for $i\in U,j\in A$}
\end{subfigure}
\hspace{0.2em}
\begin{subfigure}{0.24\textwidth}
\centering
 \scriptsize
 \def\svgwidth{0.8\linewidth}
\includegraphics[width=\svgwidth]{./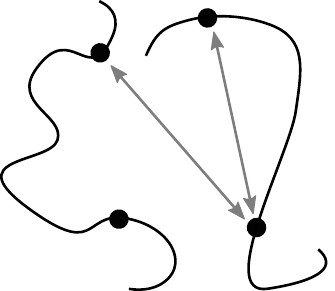}
\caption{$c^P_{i+n,j}$ for $i\in U,j\in A$}
\label{fig:conact_counts}
\end{subfigure}

\vspace{2em}

\begin{subfigure}{0.23\textwidth}
\centering
 \scriptsize
 \def\svgwidth{0.8\linewidth}
\includegraphics[width=\svgwidth]{./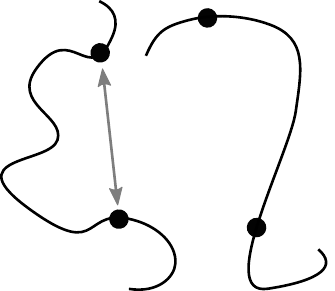}
\caption{$c^U_{i,j}$ for $i,j\in U$}
\end{subfigure}
\hspace{0.2em}
\begin{subfigure}{0.23\textwidth}
\centering
 \scriptsize
 \def\svgwidth{0.8\linewidth}
\includegraphics[width=\svgwidth]{./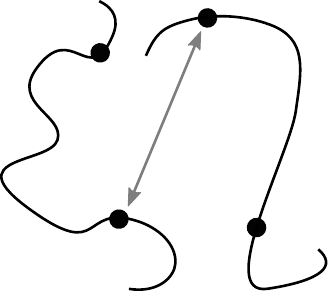}
\caption{$c^U_{i,j+n}$ for $i,j\in U$}
\end{subfigure}
\hspace{0.2em}
\begin{subfigure}{0.23\textwidth}
\centering
 \scriptsize
 \def\svgwidth{0.8\linewidth}
\includegraphics[width=\svgwidth]{./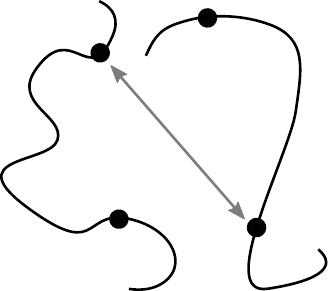}
\caption{$c^U_{i+n,j}$ for $i,j\in U$}
\end{subfigure}
\hspace{0.2em}
\begin{subfigure}{0.23\textwidth}
\centering
 \scriptsize
 \def\svgwidth{0.8\linewidth}
\includegraphics[width=\svgwidth]{./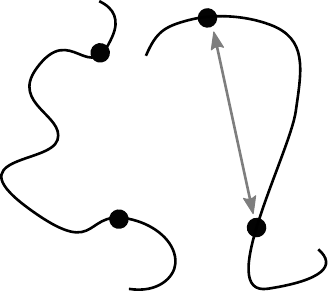}
\caption{$c^U_{i+n,j+n}$ for $i,j\in U$}
\end{subfigure}

\caption{Seven different types of contacts between the $i$th and $j$th locus.}

\end{figure}

\subsection{Contacts and distances} \label{subsec:distances}

Denoting the distance $\| z_i -z_j \|$ between $z_i$ and $z_j$ by $d_{i,j}$, the power law dependency observed by Lieberman-Aiden et al.~\cite{lieberman2009comprehensive} can be written as
\begin{equation} \label{eqn: relation between contacts and distances}
c^U_{i,j} = \gamma d_{i,j}^{\alpha},  
\end{equation}
where $\alpha<0$ is a conversion factor and $\gamma>0$ is a scaling factor. This relationship between contact counts and distances is assumed in~\cite{belyaeva2021identifying,zhang2013inference}, while in~\cite{cauer2019inferring,varoquaux2014statistical} the contact counts $c_{i,j}$ are modeled as Poisson random variables with the Poisson parameter being $\beta d_{i,j}^{\alpha}$.  

In our paper, we assume that contact counts are related to distances by~(\ref{eqn: relation between contacts and distances}). Similarly to~\cite{belyaeva2021identifying}, we set $\gamma=1$ and in parts of the article  $\alpha=-2$. In general, the conversion factor $\alpha$ depends on a dataset and its estimation can be part of the reconstruction problem~\cite{varoquaux2014statistical,zhang2013inference}. Setting $\gamma=1$ means that we recover the configuration up to a scaling factor. In practice, the configuration can be rescaled using biological knowledge, e.g., the radius of the nucleus. 

Our approach to 3D genome reconstruction builds on the power law dependency between contacts and distances between unambiguous beads. We convert the empirical contact counts to Euclidean distances and then aim to reconstruct the positions of beads from the distances. This leads us to the following system of equations:
\begin{equation}\label{eq:full_system}
\hspace{-0.25cm}
\begin{cases}
c^A_{i,j} = \|x_i-x_j\|^{\alpha} +  \|x_i-y_j\|^{\alpha} +  \|y_i-x_j\|^{\alpha} +  \|y_i-y_j\|^{\alpha}  &\hspace{-5pt} \forall i, j \in A \\
c^P_{i,j} = \|x_i-x_j\|^{\alpha} \hspace{-2pt} + \hspace{-2pt}   \|x_i-y_j\|^{\alpha},\,\,  c^P_{i+n,j} = \|y_i-x_j\|^{\alpha} \hspace{-2pt} + \hspace{-2pt}  \|y_i-y_j\|^{\alpha} & \hspace{-5pt}  \forall i \in U, j \in A \\
c^U_{i,j} =  \|x_i-x_j\|^{\alpha}, \,\, c^U_{i,j+n} =  \|x_i-y_j\|^{\alpha}, 
& \\
c^U_{i+n,j} =  \|y_i-x_j\|^{\alpha}, \,\, c^U_{i+n,j+n} =  \|y_i-y_j\|^{\alpha} &\hspace{-5pt} \forall i, j \in U
\end{cases}
\end{equation}

If $\alpha$ is an even integer, then~(\ref{eq:full_system}) is a system of rational equations.

Determining the points $x_i, y_i$, where $i\in U$, is the classical Euclidean distance problem: We know the (noisy) pairwise distances between points and would like to construct the locations of points, see section~\ref{subsec:euclidean_distance_geometry} for details. Hence after section~\ref{subsec:euclidean_distance_geometry} we assume that we have estimated the locations of points $x_i, y_i$, where $i\in U$, and we would like to determine the points $x_i, y_i$, where $i\in A$.

\section{Identifiability} \label{sec:identifiability}

In this section, we study the uniqueness of the solutions of the system~(\ref{eq:full_system}) up to rigid transformations (translations, rotations and reflections), or in other words, the identifiability of the locations of beads. We study the unambiguous, partially ambiguous and ambiguous settings in sections~\ref{subsec:euclidean_distance_geometry},~\ref{subsec:partially_ambigous} and~\ref{subsec:identifiability_ambiguous_setting}, respectively.

\subsection{Unambiguous setting and Euclidean distance geometry} \label{subsec:euclidean_distance_geometry}

If all pairs are unambiguous, i.e., $U=[n]$, then constructing the original points translates to a classical problem in Euclidean distance geometry. The principal task in Euclidean distance geometry is to construct original points from pairwise distances between them.  In the rest of the subsection, we will recall how to solve this problem. Since pairwise distances are invariant under translations, rotations and reflections (rigid transformations), then the original points can be reconstructed up to rigid transformations. For an overview of distance geometry and Euclidean distance matrices, we refer the reader to~\cite{dokmanic2015euclidean,krislock2012euclidean,liberti2014euclidean,mucherino2012distance}.

The Gram matrix of the points $z_1,\ldots,z_{2n}$ is defined as
$$
G = Z Z^T =  [z_1,\ldots,z_{2n}]^T \cdot [z_1,\ldots,z_{2n}] \in \RR^{2n \times 2n}.
$$
Let $\overline{z} = \frac{1}{2n} \sum_{i=1}^{2n} z_i$ and   $\tilde{z}_i= z_i - \overline{z}$ for $i=1,\ldots,2n$. The matrix $\tilde{Z} = [\tilde{z}_1,\ldots, \tilde{z}_{2n}]^T$ gives the locations of points after centering them around the origin. Let $\tilde{G}$ denote the Gram matrix of the centered point configuration $\tilde{z}_1,\ldots, \tilde{z}_{2n}$.

Let $D_{i,j} = \|z_i - z_j\|^2$ denote the squared Euclidean distance between the points $z_i$ and $z_j$. The Euclidean distance matrix of the points $z_1,\ldots,z_{2n}$ is defined as $D=(D_{i,j})_{1 \leq i,j \leq 2n} \in \RR^{2n \times 2n}$.  To express the centered Gram matrix in terms of the Euclidean distance matrix, we define the geometric centering matrix
$$
J=I_{2n} - \frac{1}{2n} \bm{1} \bm{1}^T,
$$
where $I_{2n}$ is the $2n \times 2n$ identity matrix and $\bm{1}$ is the vector of ones.
The linear relationship between $\tilde{G}$ and $D$ is given by
$$
\tilde{G} = -\frac{1}{2} JDJ.
$$
Therefore, given the Euclidean distance matrix, we can construct the centered Gram matrix for the points $z_1,\ldots,z_{2n}$.

The centered points up to rigid transformations are extracted from the centered Gram matrix $\tilde{G}$ using the eigendecomposition $\tilde{G}=Q \Lambda Q^{-1}$, where $Q$ is orthonormal  and $\Lambda$ is a  diagonal matrix with entries ordered in decreasing order $\lambda_1 \geq \lambda_2 \geq \ldots \geq \lambda_{2n} \geq 0$. We define $\Lambda_3^{1/2} := [\text{diag}(\sqrt{\lambda_1},\sqrt{\lambda_2}, \sqrt{\lambda_3}),\bm{0}_{3 \times (2n-3)}]^T$ and set $\hat{Z} = Q \Lambda_3^{1/2}$. In the case of noiseless distance matrix $D$, the Gram matrix $\tilde{G}$ has rank three and the diagonal matrix $\Lambda$ has precisely three non-zero entries. Hence we could obtain $\hat{Z}$ also from $Q \Lambda^{1/2}$ by truncating zero columns. Using $\Lambda_3^{1/2}$ has the advantage that it gives an approximation for the points also for a noisy distance matrix $D$. The uniqueness of $\hat{Z}$ up to rotations and reflections follows from~\cite[Proposition 3.2]{krislock2010semidefinite}, which states that $AA^T = BB^T$ if and only if $A=BQ$ for some orthogonal matrix $Q$.

The procedure that transforms the distance matrix to origin centered Gram matrix and then uses eigendecomposition for constructing original points is called classical multidimensional scaling (cMDS)~\cite{cox2008multidimensional}. Although cMDS is widely used in practice, it does not always find the distance matrix that minimizes the Frobenius norm to the empirical noisy distance matrix~\cite{sonthalia2021can}. 
Other approaches to solving the Euclidean distance and Euclidean completion problems include non-convex~\cite{fang2012euclidean,mishra2011low} as well semidefinite formulations~\cite{alfakih1999solving,fazel2003log,nie2009sum,weinberger2007graph,zhang2013inference,cayton2006robust}.

\subsection{Partially ambiguous setting}\label{subsec:partially_ambigous}

The next theorem establishes the uniqueness of the solutions of the system~(\ref{eq:full_system}) in the presence of ambiguous pairs. In particular, it states that there are finitely many possible locations for beads in one ambiguous pair given the locations of six unambiguous beads. The identifiability results in this subsection hold for all negative rational numbers $\alpha$. In the rest of the paper, we denote the true but unknown coordinates by $x^*$ and the symbol $x$ stands for a variable that we want to solve for. We write $\|\cdot\|$ for the standard inner product on $\mathbb{R}^3$.

\begin{theorem} \label{theorem:noiseless-finite-identifiability}
Let $\alpha$ be a negative rational number. Then for
$a^*,b^*,\ldots,f^*,x^*,\\ y^* \in \mathbb{R}^3$ sufficiently general, the system
of six equations
\begin{equation} \label{eqn:partial_contact_alpha}
 \|x-t^*\|^\alpha + \|y-t^*\|^\alpha = \|x^*-t^*\|^\alpha +
\|y^*-t^*\|^\alpha \text{ for } t^*=a^*,b^*,\ldots,f^* 
\end{equation}
in the six unknowns $x_1,x_2,x_3,y_1,y_2,y_3 \in \mathbb{R}$ has only finitely
many solutions.
\end{theorem}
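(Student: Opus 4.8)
The plan is to treat the six left-hand sides as a single semialgebraic self-map of $\mathbb{R}^6$ and to prove that this map is \emph{dominant} (has $6$-dimensional image); finiteness of the fibre over a generic right-hand side then follows from a generic-smoothness argument. Concretely, I set
\[
\Phi(x,y) = \bigl(\|x-t^*\|^\alpha + \|y-t^*\|^\alpha\bigr)_{t^*\in\{a^*,\dots,f^*\}} \colon \mathbb{R}^6 \dashrightarrow \mathbb{R}^6,
\]
defined wherever $x,y\notin\{a^*,\dots,f^*\}$, so that the solution set of~\eqref{eqn:partial_contact_alpha} is exactly $\Phi^{-1}(c)$ with $c=\Phi(x^*,y^*)$. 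Since $\alpha$ is rational, $\Phi$ is a semialgebraic map and $\Phi^{-1}(c)$ is a semialgebraic set, hence finite if and only if it is $0$-dimensional. (Equivalently, one can clear the fractional exponents by adjoining $N$-th roots $\xi_t^N=\|x-t^*\|^2$ and $\eta_t^N=\|y-t^*\|^2$ with $N=2b$ when $\alpha=-a/b$, turning~\eqref{eqn:partial_contact_alpha} into a polynomial system over $\mathbb{C}$; finitely many complex solutions would imply finitely many real ones.)

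Granting dominance, the theorem follows by a standard fibre-dimension argument that I would run in the semialgebraic category: fix $a^*,\dots,f^*$ generic so that $\Phi$ is dominant; then the locus of values admitting a positive-dimensional fibre has dimension $<6$, and, since the generic fibre is $0$-dimensional, so does its full preimage in the domain. Hence for a generic choice of $x^*,y^*$ the value $c=\Phi(x^*,y^*)$ is a regular value, $\Phi^{-1}(c)$ is $0$-dimensional, and therefore finite. It thus remains to show that the Jacobian determinant of $\Phi$ does not vanish identically. Differentiating, the row indexed by $t^*$ equals $\alpha\,\psi(t^*)$, where
\[
\psi(t) = \bigl(\|x-t\|^{\alpha-2}(x-t),\ \|y-t\|^{\alpha-2}(y-t)\bigr)\in\mathbb{R}^6 .
\]
As $\alpha\neq 0$, the Jacobian is nonsingular at a generic $(x,y)$ (for generic $a^*,\dots,f^*$) precisely when $\psi(a^*),\dots,\psi(f^*)$ are linearly independent, and six generic points of the image of $\psi$ are independent if and only if the image of $\psi$ linearly spans $\mathbb{R}^6$.

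The crux, and the step I expect to be the main obstacle, is this spanning statement, which I would prove by a blow-up analysis at the poles of $\psi$. Suppose a linear functional $(p,q)\in\mathbb{R}^3\times\mathbb{R}^3$ annihilates the entire image, i.e.
\[
\|x-t\|^{\alpha-2}\langle p,\,x-t\rangle + \|y-t\|^{\alpha-2}\langle q,\,y-t\rangle = 0 \quad\text{for all } t .
\]
Approaching $t=x$ along $t=x+r\omega$ with $\omega$ a unit vector, the first summand equals $-r^{\alpha-1}\langle p,\omega\rangle$ while the second stays bounded (here $x\neq y$). Since $\alpha<0$ forces $\alpha-1<0$, the factor $r^{\alpha-1}$ blows up as $r\to0^+$, so the identity forces $\langle p,\omega\rangle=0$ for every unit $\omega$, whence $p=0$; letting $t\to y$ then yields $q=0$. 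Thus no nonzero functional annihilates the image, $\psi$ spans $\mathbb{R}^6$, and the argument closes.

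Finally, it is worth recording that the negativity of $\alpha$ enters precisely through the blow-up rate $r^{\alpha-1}\to\infty$. For $\alpha=2$ the map $\psi$ is affine in $t$, with image contained in a $4$-dimensional affine subspace of $\mathbb{R}^6$, so the Jacobian is singular and finiteness genuinely fails, consistent with the non-identifiability at $\alpha=2$ observed in~\cite{belyaeva2021identifying}. This is a reassuring sanity check that the hypothesis $\alpha<0$ is used in an essential way. The only point demanding care in the writeup, beyond the spanning lemma, is the first-step reduction: making the passage from \emph{dominance} to \emph{finiteness of the specific fibre over $\Phi(x^*,y^*)$} rigorous in the semialgebraic (or, after adjoining roots, complex-algebraic) setting.
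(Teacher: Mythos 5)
You take a genuinely different route from the paper, and your key lemma is correct. The paper works over $\mathbb{C}$: because $\alpha/2=m/n$ is fractional, it builds an algebraic covering variety $X$ by adjoining root variables satisfying $Q(x^*-t^*)^m=r_{t^*}^n$, and must then verify that the $6\times 6$ Jacobian is nonsingular at one explicit configuration (vertices of the unit cube) \emph{simultaneously for all $n^{12}$ branches} of the covering, i.e., for all choices of roots of unity $(\xi_{t^*},\eta_{t^*})$; this is settled by a $2$-adic valuation argument on the determinant. You instead stay in the real, semialgebraic category, where $\|x-t\|^\alpha$ is a single-valued, real-analytic, semialgebraic function, and you prove generic nonsingularity of the Jacobian via your spanning lemma, whose blow-up proof (the term $-r^{\alpha-1}\langle p,\omega\rangle$ is unbounded as $r\to 0^+$ unless $\langle p,\omega\rangle=0$) is sound. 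Your route avoids any explicit choice of points and all root-of-unity bookkeeping, makes transparent exactly where the sign of $\alpha$ enters (the blow-up only needs $\alpha<1$), and yields the $\alpha=2$ sanity check. What it gives up is the paper's stronger conclusion that even the \emph{complex} solution set is generically finite, which the paper records in a remark and whose machinery is reused verbatim in the proof of Corollary~\ref{cor:finite_identifiability_noisy}; your real argument would have to be rerun (it does adapt) to cover that corollary.

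There is, however, one inference in your reduction that is invalid as written, and since you explicitly defer exactly this step, it deserves to be named. You argue that the set $B$ of values with a positive-dimensional fibre has dimension $<6$, ``and, since the generic fibre is $0$-dimensional, so does its full preimage in the domain.'' This is a non-sequitur: $\Phi^{-1}(B)$ is by definition a union of fibres of dimension $\geq 1$ lying over a set of dimension $\leq 5$, so fibre-dimension counting only gives $\dim\Phi^{-1}(B)\le \dim B + (\text{maximal fibre dimension over }B)$, which is useless here. The repair needs the Jacobian a second time: $\det D\Phi$ is real-analytic and $\not\equiv 0$ on the connected domain, so its zero locus $\Sigma$ has empty interior; if $\Phi^{-1}(B)$ were $6$-dimensional it would contain a nonempty open set $O$, then $O\setminus\Sigma\neq\emptyset$, and by the inverse function theorem $\Phi(O)\subseteq B$ would contain a nonempty open set, contradicting $\dim B\le 5$. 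Combined with the semialgebraic fibre-dimension theorem (e.g., Bochnak--Coste--Roy, \emph{Real Algebraic Geometry}, Thm.~9.3.2) to obtain $\dim B\le 5$ in the first place, the fact that a $0$-dimensional semialgebraic set is finite, and a Fubini-type dimension count on the full parameter space $(\mathbb{R}^3)^8$ to convert ``for generic $(a^*,\ldots,f^*)$, for generic $(x^*,y^*)$'' into genericity of the whole tuple, this closes the proof. In short: right strategy, correct and elegant key lemma, but the dominance-to-finiteness step requires the argument above, not the justification you wrote.
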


\begin{remark}
The proof will show that this system has only
finitely many solutions over the {\em complex} numbers.

We believe that the theorem holds for general nonzero rational $\alpha$.
Indeed, our argument works, with a minor modification, also for
$\alpha>2$, but for $\alpha$ in the range $(0,2]$ a refinement of the
argument is needed.
\end{remark}

\begin{proof}
First write $\qq(x):=x_1^2 + x_2^2 + x_3^2$, so that $\|x\|=\sqrt{\qq(x)}$
for $x \in \mathbb{R}^3$. The advantage of $\qq$ over $\|x\|$ is that it is
well-defined on $\CC^3$.

Write $\frac{\alpha}{2}=\frac{m}{n}$ with $m,n$ relatively prime integers, $m \neq
0$, and $n>0$. Consider the affine variety $X \subseteq (\CC^3)^8 \times
(\CC^2)^6$ consisting of all tuples
\[ 
((a^*,\ldots,f^*,x^*,y^*),(r_{t^*},s_{t^*})_{t^*=a^*,\ldots,f^*})
\]
such that 
\[ \qq(x^*-t^*)^m = r_{t^*}^n \neq 0 \text{ and } \qq(y^*-t^*)^m = s_{t^*}^n \neq
0  \text{ for } t^*=a^*,\ldots,f^*.\]
Note that, if $x^*,t^*$ are real, then it follows that
\[ \qq(x^*-t^*)^m = (\|x^*-t^*\|^{\alpha})^n, \]
and similarly for $\qq(y^*-t^*)$. Hence if
$a^*,\ldots,y^*$ are all real, then the point
\begin{equation} \label{eq:Real}
((a^*,\ldots,f^*,x^*,y^*),(\|x^*-t^*\|^\alpha,\|y^*-t^*\|^\alpha)_{t^*})
\end{equation}
is a point in $X$ with real-valued coordinates. 

The projection $\pi$ from $X$ to the open affine subset $U \subseteq
(\CC^3)^8$ where all $\qq(x^*-t^*)$ and $\qq(y^*-t^*)$ are nonzero is a finite
morphism with fibers of cardinality $n^{12}$; to see this cardinality note
that there are $n$ possible choices for each of the numbers $r_{t^*},
s_{t^*}$. Each irreducible component of $X$ is a smooth variety of
dimension $24$.

Consider the map $\psi:X \to (\CC^3 \times \CC^1)^6$ defined by
\[ ((a^*,\ldots,f^*,x^*,y^*),(r_{t^*},s_{t^*})_{t^*})
\mapsto ((t^*,r_{t^*}+s_{t^*}))_{t^*} \]
We claim that for $q$ in some open dense subset of $X$, the derivative
$d_q \psi$ has full rank $24$. For this, it suffices to find one point
$p \in U$ such that $d_q \psi$ has rank $24$ at each of the $n^{12}$
points $q \in \pi^{-1}(p)$. We take a real-valued point 
$p:=(a^*,b^*,\ldots,f^*,x^*,y^*) \in (\mathbb{R}^3)^8$ 
to be specified later on. 
Let $q \in \pi^{-1}(p)$. Then, near $q$, the map $\psi$
factorises via $\pi$ and the unique algebraic map $\psi':U \to (\CC^3
\times \CC^1)^6$ (defined near $p$) which on a neighborhood of $p$ in 
$U \cap (\mathbb{R}^3)^8$ equals
\[ \psi'(a,\ldots,f,x,y)=((t,\xi_{t^*} \cdot \qq(x-t)^{\alpha/2} +
\eta_{t^*} \cdot 
\qq(y-t)^{\alpha/2}))_{t=a,\ldots,f}  \in (\CC^3 \times \CC^1)^6 \]
where $\xi_{t^*}$ and $\zeta_{t^*}$ are $n$-th roots of unity in
$\CC$ depending on which $q$ is chosen among the $n^{12}$
points in $\pi^{-1}(p)$. The situation is summarised in the
following diagram:
\[ 
\xymatrix{ 
(X,q) \ar[d]_\pi \ar[drr]^\psi & & \\
(U,p) \ar[rr]_{\psi'} & & ((\CC^3 \times \CC^1)^6,\psi(q)).
}
\]
Now, $d_q \psi = d_p \psi' \circ d_q \pi$, and since $d_q \pi$ is
a linear isomorphism, it suffices to prove that $d_p \psi'$ is a linear
isomorphism. Suppose that $(a',\ldots,f',x',y') \in \ker d_p \psi'$.
Then, since the map $\psi'$ remembers $a,\ldots,f$, it follows immediately
that $a'=\ldots=f'=0$. On the other hand, by differentiating we find that,
for each $t^* \in \{a^*,\ldots,f^*\}$,
\begin{align*}
&\xi_{t^*} \cdot (\alpha/2) \cdot \qq(x^*-t^*)^{\alpha/2-1}
\cdot 2 \cdot \langle x',x^*-t^* \rangle \\
+ &\eta_{t^*} \cdot (\alpha/2) \cdot \qq(y^*-t^*)^{\alpha/2-1}
\cdot 2 \cdot \langle y',y^*-t^* \rangle = 0,
\end{align*}
where $\langle \cdot, \cdot \rangle$
stands for the standard bilinear form on $\CC^3$. In other words, the
vector $(x',y') \in \CC^6$ is in the kernel of the $6 \times 6$-matrix
\[ M:=\begin{bmatrix} 
\|x^*-a^*\|^{\alpha-2} \cdot \xi_{a^*} \cdot (x^*-a^*) &
\|y^*-a^*\|^{\alpha-2} \cdot \eta_{a^*} \cdot (y^*-a^*) \\
\vdots & \vdots \\
\|x^*-f^*\|^{\alpha-2} \cdot \xi_{f^*} \cdot (x^*-f^*) & 
\|y^*-f^*\|^{\alpha-2} \cdot \eta_{f^*} \cdot (y^*-f^*) 
\end{bmatrix} 
\]
where we have interpreted $a^*,\ldots,f^*,x^*,y^*$ as row
vectors.  It suffices to show that, for some specific choice
of $p=(a^*,\ldots,f^*,x^*,y^*) \in (\mathbb{R}^3)^8$,
this matrix is nonsingular {\em for all $n^{12}$ choices of
$((\xi_{t^*},\eta_{t^*}))_{t^*}$}.

We choose $a^*,\ldots,f^*,x^*,y^*$ as the vertices of the unit
cube, as follows: 
\begin{align*}
a^*&=(1,0,0) & b^*&=(0,1,0) & c^*&=(0,0,1) \\
c^*&=(0,1,1) & d^*&=(1,0,1) & f^*&=(1,1,0) \\
x^*&=(0,0,0) & y^*&=(1,1,1).
\end{align*}
Then the matrix $M$ becomes, with $\beta=\alpha-2$:
\[
\begin{bmatrix}
-\xi_{a^*}& 0& 0& 0& 2^{\frac{\beta}{2}}\cdot\eta_{a^*}&
2^{\frac{\beta}{2}}\cdot\eta_{a^*}\\ 
0& -\xi_{b^*}& 0& 2^{\frac{\beta}{2}}\cdot\eta_{b^*}& 0&
2^{\frac{\beta}{2}}\cdot\eta_{b^*}\\
0& 0& -\xi_{c^*}& 2^{\frac{\beta}{2}}\cdot\eta_{c^*}&
2^{\frac{\beta}{2}}\cdot\eta_{c^*}& 0\\ 
0& -(2^{\frac{\beta}{2}}\cdot\xi_{d^*})&
-(2^{\frac{\beta}{2}}\cdot\xi_{d^*})& \eta_{d^*}& 0& 0\\
-(2^{\frac{\beta}{2}}\cdot\xi_{e^*})& 0&
-(2^{\frac{\beta}{2}}\cdot\xi_{e^*})& 0& \eta_{e^*}& 0\\ 
-(2^{\frac{\beta}{2}}\cdot\xi_{f^*})&
-(2^{\frac{\beta}{2}}\cdot\xi_{f^*})& 0& 0& 0& \eta_{f^*}
\end{bmatrix}. 
\]
Now, $\det(M)$ equals 
\begin{equation} \label{eq:Det} -\xi_{a^*} \cdot \xi_{b^*} \cdot \xi_{c^*} \cdot
\eta_{d^*} \cdot \eta_{e^*} \cdot
\eta_{f^*} + 2^{2+3\beta} \cdot \eta_{a^*} \cdot \eta_{b^*}
\cdot\eta_{c^*} \cdot \xi_{d^*} \cdot
\xi_{e^*} \cdot \xi_{f^*} + 2^{2\beta} \cdot R 
\end{equation}
where $R$ is a sum of (products of) roots of unity. Now $\alpha<0$ implies
that $\beta<-2$, so that $2+3\beta<2\beta<0$. Since roots of unity have
$2$-adic valuation $0$, the second term in the expression above is the
unique term with minimal $2$-adic valuation. Hence $\det(M)
\neq 0$, as desired. 

It follows that $\psi$ is a dominant morphism from each irreducible
component of $X$ into $(\CC^3 \times \CC^1)^6$, and hence for all
$q$ in an open dense subset of $X$, the fiber $\psi^{-1}(\psi(q))$ is
finite. This then holds, in particular, for $q$ in an open dense subset
of the real points as in \eqref{eq:Real}. This proves the theorem.
\end{proof}

\begin{remark}
If $\alpha>2$, then $\beta>0$, and hence the unique term
with minimal $2$-adic valuation in \eqref{eq:Det} is the
first term. This can be used to show that the theorem holds then, as well. The only subtlety is that for positive $\alpha$, solutions where $x$ or $y$ equal one of the points $a^*,\ldots,f^*$ are not automatically excluded, and these are not seen by the variety $X$. But a straightforward argument shows that such solutions do not exist for sufficiently general choices of $a^*,\ldots,f^*,x^*,y^*$.
\end{remark}

We now consider the setting when we know locations of seven unambiguous beads. In the special case when $\alpha=-2$, we construct the ideal generated by the polynomials obtained from rational equations~(\ref{eqn:partial_contact_alpha}) for seven unambiguous beads after moving all terms to one side and clearing the denominators. Based on symbolic computations in \texttt{Macaulay2} for the degree of this ideal, we conjecture that the location of a seventh unambiguous bead guarantees unique identifiability of an ambiguous pair of beads:

\begin{conjecture}\label{conj:unique_identifiability_alpha=2}
Let $a^*,b^*,c^*,d^*,e^*,f^*,g^*,x^*,y^* \in \mathbb{R}^3$ be sufficiently general. 
The system of rational equations
\begin{equation}\label{poly_eq2}
   \frac{1}{\|t^*- x^*\|^2} + \frac{1}{\|t^* - y^*\|^2}=\frac{1}{\|t^* - x\|^2} + \frac{1}{\|t^* - y\|^2} \text{ for } t^*=a^*,b^*,c^*,d^*,e^*,f^*,g^*
\end{equation}
has precisely two solutions $(x^*,y^*)$ and $(y^*,x^*)$.
\end{conjecture}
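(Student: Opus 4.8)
The plan is to establish finite identifiability first, and then rule out all but the two ``obvious'' solutions $(x^*,y^*)$ and $(y^*,x^*)$. Finite identifiability for seven points follows a fortiori from Theorem~\ref{theorem:noiseless-finite-identifiability}: the six equations already cut out a finite set, so the system~\eqref{poly_eq2} with a seventh point $g^*$ is generically finite as well. Thus the real content of the conjecture is the count, and the natural route is a Gr\"obner-basis computation specialized to $\alpha=-2$. Since $\alpha=-2$ makes the system rational, I would clear denominators to obtain genuine polynomial equations $P_{t^*}(x,y)=0$, where $P_{t^*}:=\qq(t^*-y)+\qq(t^*-x)-\bigl(\tfrac{1}{\qq(t^*-x^*)}+\tfrac{1}{\qq(t^*-y^*)}\bigr)\qq(t^*-x)\qq(t^*-y)$, and study the ideal $I=\langle P_{a^*},\ldots,P_{g^*}\rangle$ in $\mathbb{Q}(a^*,\ldots,g^*)[x_1,x_2,x_3,y_1,y_2,y_3]$.

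The key computational step is to show that the degree of $I$ (equivalently, the number of complex solutions counted with multiplicity, or the $\mathbb{C}$-vector space dimension of the coordinate ring of the zero-dimensional quotient) equals exactly $2$ for generic parameters. To make this rigorous rather than merely conjectural, I would specialize $a^*,\ldots,g^*,x^*,y^*$ to explicit rational points (for instance a configuration analogous to the unit-cube choice in the proof of Theorem~\ref{theorem:noiseless-finite-identifiability}, augmented by a seventh vertex), verify via \texttt{Macaulay2} that the specialized ideal is radical and zero-dimensional with exactly two $\mathbb{C}$-points, and confirm those two points are precisely $(x^*,y^*)$ and $(y^*,x^*)$. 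By semicontinuity of fiber dimension and the fact that the solution count can only drop under specialization (for a flat family it is constant, and a single specialization achieving the generic value certifies it), this pins the generic count at two. The built-in symmetry $(x,y)\mapsto(y,x)$ of~\eqref{poly_eq2} guarantees both $(x^*,y^*)$ and $(y^*,x^*)$ are always solutions, so once the total count is shown to be two, no spurious solutions remain.

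The main obstacle is that a raw Gr\"obner-basis computation over the function field $\mathbb{Q}(a^*,\ldots,g^*)$ in twelve parameters is likely infeasible, so the argument must instead hinge on the specialization-plus-semicontinuity principle, and the delicate point is justifying that the chosen rational specialization is \emph{generic enough}—i.e., that it lies in the open dense locus where the fiber has the generic cardinality and the ideal is radical. I would address this by checking at the specialized point that the Jacobian of the system has full rank $6$ at each of the two solutions (so they are reduced/simple points), which by the finiteness from Theorem~\ref{theorem:noiseless-finite-identifiability} and upper-semicontinuity of the fiber cardinality forces the generic fiber to contain at most the generic number of simple points. Combined with the explicit count of two at the special fiber, this would upgrade the symbolic evidence into a proof. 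Residual care is needed to exclude degenerate solutions where some $\qq(t^*-x)$ or $\qq(t^*-y)$ vanishes (introduced when clearing denominators); these lie on a proper closed subset and are avoided for generic parameters, exactly as in the concluding remark following Theorem~\ref{theorem:noiseless-finite-identifiability}.
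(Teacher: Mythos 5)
First, a point of calibration: the paper does not prove this statement at all --- it is stated as Conjecture~\ref{conj:unique_identifiability_alpha=2}, supported only by \texttt{Macaulay2} degree computations on the saturated/cleared-denominator ideal, which is essentially the computation you propose. So your task was to supply something strictly beyond the paper's own evidence, and the question is whether your ``specialization plus semicontinuity'' upgrade turns that evidence into a proof. It does not, and the gap is precisely at that step. Your preliminary observation is fine (finiteness for seven equations follows a fortiori from Theorem~\ref{theorem:noiseless-finite-identifiability}, since the solution set is contained in that of the six-equation subsystem), and the symmetry argument gives the lower bound of two solutions. The problem is the upper bound. Verifying that one rational specialization yields a radical zero-dimensional ideal with exactly two points, each with full-rank Jacobian, proves only that these two solutions are isolated, reduced, and lift to all nearby parameters --- i.e., it re-proves generic count $\geq 2$, which you already had. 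It gives no control on the generic count from above. The principle you invoke (``the solution count can only drop under specialization, so a specialization achieving the generic value certifies it'') is circular as stated --- to know that $2$ \emph{is} the generic value is exactly what is to be proved --- and the conservation-of-number statement that would make it non-circular requires the family to be finite and flat (proper) over a neighborhood of the chosen parameter point. Here the projection of the incidence variety to parameter space is not proper: as the parameters degenerate to your chosen point, solutions of the generic system can escape to infinity or onto the locus where the cleared denominators $\qq(t^*-x)\,\qq(t^*-y)$ vanish, in which case the special fiber is strictly \emph{smaller} than the generic one. So ``special fiber $=$ two reduced points'' is compatible with the generic fiber having $2, 4$, or $40$ points.

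Concretely, nothing in your proposed checks can distinguish a genuinely generic specialization from a misleading one: the same recipe applied to the six-equation system of Theorem~\ref{theorem:noiseless-finite-identifiability} (whose generic count is about $80$ according to the paper's numerics) would ``certify'' whatever count happens to appear at the chosen point, and the Jacobian test at the found solutions would pass regardless. Certifying that a given rational point lies in the Zariski-open locus where the fiber cardinality is generic is as hard as computing the generic fiber itself; this is presumably why the authors left the statement as a conjecture. A repair would require either (i) a Gr\"obner basis computation honestly over the function field $\mathbb{Q}(a^*,\ldots,g^*,x^*,y^*)$ (which you correctly note is likely infeasible), or (ii) compactifying the fibers (projective closure of the incidence variety, including the denominator locus as boundary) and verifying at the special point that the \emph{entire} compactified fiber is a reduced scheme of length $2$ with no boundary points, so that flatness/conservation of number applies; this is a much heavier computation than the one you describe, and your choice of a symmetric cube-plus-seventh-vertex configuration would be particularly risky for it, since symmetric configurations are the typical source of non-generic degenerations.
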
 

In practice, we only have noisy estimates  $a,b,\ldots,f \in \RR^3$ of the true positions of unambiguous beads $a^*,b^*,\ldots,f^* \in \RR^3$,
and we have noisy observations $c_t$ of the true contact counts $c_t^* := \|x^*-t^*\|^{\alpha}+\|y^*-t^*\|^{\alpha}$.
We aim to find $x,y \in \RR^3$ such that
\begin{equation*}
   \|x-t\|^{\alpha}+\|y-t\|^{\alpha} = c_t \text{ for } t=a,b,\ldots,f.
\end{equation*}
We may write $c_t = \|x^*-t\|^{\alpha}+\|y^*-t\|^{\alpha}+\epsilon_{t}$
for some $\epsilon_t$ that depends on the noise level.
Hence, the above system of equations can be rephrased as
\begin{equation}\label{rational_eq_noisy3}
   \|x-t\|^{\alpha}+\|y-t\|^{\alpha} = \|x^*-t\|^\alpha +
\|y^*-t\|^\alpha + \epsilon_{t} \text{ for } t=a,b,\ldots,f.
\end{equation}
In the following corollary we show that this system has generically finitely many solutions. 

\begin{corollary} \label{cor:finite_identifiability_noisy}
Let $\alpha$ be a negative rational number. Then for $a,b,\ldots,f,x^*,\\y^* \in \mathbb{R}^3$ and $\epsilon_{a},\epsilon_{b},\ldots,\epsilon_{f} \in \RR$ sufficiently general, the system of six equations
\begin{equation}\label{rational_eq_noisy}
    \|x-t\|^{\alpha}+\|y-t\|^{\alpha} = \|x^*-t\|^\alpha +
\|y^*-t\|^\alpha + \epsilon_{t} \text{ for } t=a,b,\ldots,f
\end{equation}
in the six unknowns $x_1,x_2,x_3,y_1,y_2,y_3 \in \mathbb{R}$ has only finitely
many solutions.
\end{corollary}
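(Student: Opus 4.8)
The plan is to deduce the corollary directly from the proof of Theorem~\ref{theorem:noiseless-finite-identifiability}, whose real content is not the specific form of the right-hand side but the fact that the morphism $\psi\colon X \to (\CC^3 \times \CC^1)^6$ is dominant on each irreducible component of $X$. Since each such component has dimension $24$ and the target $(\CC^3 \times \CC^1)^6 \cong \CC^{24}$ also has dimension $24$, the theorem on the dimension of fibers gives that the generic fiber of $\psi$ is finite; equivalently, the locus $B$ of points in the image over which $\psi$ has a positive-dimensional fiber is contained in a proper Zariski-closed subset of the target.

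First I would record how solutions of~\eqref{rational_eq_noisy} inject into fibers of $\psi$. Writing $\alpha/2 = m/n$ as in the theorem, any solution $(x,y)$ of~\eqref{rational_eq_noisy} lifts to the point $((a,\ldots,f,x,y),(r_t,s_t)_t) \in X$ with $r_t := \|x-t\|^{\alpha}$ and $s_t := \|y-t\|^{\alpha}$, since then $\qq(x-t)^m = r_t^n \neq 0$ and $\qq(y-t)^m = s_t^n \neq 0$ (the left-hand side of~\eqref{rational_eq_noisy} is defined only when $x,y \notin \{a,\ldots,f\}$, so these quantities are nonzero). Under $\psi$ this point maps to $((t, r_t + s_t))_t = ((t, c_t))_t$, where $c_t = \|x^*-t\|^{\alpha} + \|y^*-t\|^{\alpha} + \epsilon_t$ is exactly the right-hand side of~\eqref{rational_eq_noisy}. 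Distinct solutions have distinct $(x,y)$-coordinates, so the number of solutions is bounded by the cardinality of the fiber $\psi^{-1}\bigl(((t,c_t))_t\bigr)$.

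The key step is to argue that this target point is generic. Consider the parametrization
\[
\Phi \colon (a,\ldots,f,x^*,y^*,\epsilon) \longmapsto \bigl(\,(t,\ \|x^*-t\|^{\alpha}+\|y^*-t\|^{\alpha}+\epsilon_t)\,\bigr)_{t=a,\ldots,f}.
\]
Because the six noise parameters $\epsilon_a,\ldots,\epsilon_f$ enter additively and freely into the last six coordinates, while $a,\ldots,f$ already sweep out a dense subset of the first $18$ coordinates, the map $\Phi$ is dominant onto $\CC^{24}$. Hence $\Phi^{-1}(B)$ is a proper Zariski-closed subset of the parameter space, and for $(a,\ldots,f,x^*,y^*,\epsilon)$ sufficiently general the target point $\Phi(a,\ldots,f,x^*,y^*,\epsilon)$ avoids $B$. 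Combining this with the previous paragraph, for sufficiently general data the fiber of $\psi$ over the noisy target is finite, and therefore so is the set of solutions of~\eqref{rational_eq_noisy}, which injects into it.

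I expect the only genuinely delicate point to be the transport of genericity in the last step, namely verifying that the free additive role of $\epsilon$ really does make the noisy target a generic point of $\CC^{24}$, so that the dominance of $\psi$ from the theorem applies unchanged. Everything else is bookkeeping already carried out in the proof of Theorem~\ref{theorem:noiseless-finite-identifiability}; in particular, note that we do not even need the target to lie in the image of $\psi$, since a fiber over a point outside the image is empty and hence trivially finite. It is precisely the decoupling of the right-hand side from $x^*,y^*$ via the free parameters $\epsilon_t$ that makes the noisy statement no harder than, and in fact a formal consequence of, the noiseless one.
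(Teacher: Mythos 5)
Your proposal is correct and follows essentially the same route as the paper's proof: recall the dominant morphism $\psi$ on the $24$-dimensional irreducible components of $X$ from the proof of Theorem~\ref{theorem:noiseless-finite-identifiability}, observe that every solution of \eqref{rational_eq_noisy} lifts (injectively, via $r_t=\|x-t\|^{\alpha}$, $s_t=\|y-t\|^{\alpha}$) into the fiber of $\psi$ over the noisy target point, and conclude finiteness because that point is sufficiently general. Your additional argument with the parametrization $\Phi$ merely makes explicit the genericity claim that the paper asserts without justification (namely that the freely and additively entering $\epsilon_t$ force the target to avoid the locus of positive-dimensional fibers); this is a sound elaboration, modulo the minor imprecision that $\Phi$ is not a complex algebraic map for rational $\alpha$, so ``dominant'' should be read as ``has Zariski-dense image,'' which is all the argument needs.
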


\begin{proof}
Recall the map $\psi:X \to (\CC^3 \times \CC^1)^6$ from the proof of Theorem~\ref{theorem:noiseless-finite-identifiability} defined by
\[ ((a,\ldots,f,x^*,y^*),(r_{x^*,t},s_{y^*,t})_{t})
\mapsto ((t,r_{x^*,t}+s_{y^*,t}))_{t}. \]
We showed that $\psi$ is a dominant morphism from each irreducible
component of $X$ into $(\CC^3 \times \CC^1)^6$, and that each irreducible component of $X$ is 24-dimensional. Every solution to \eqref{rational_eq_noisy} is the $(x,y)$-component of a point in the fiber
\[ \psi^{-1}((t,||x^*-t||^\alpha+||y^*-t||^\alpha+\epsilon_t))_t.\]
Since this is a fiber over a sufficiently general point, the fiber is finite.
\end{proof}

Corollary~\ref{cor:finite_identifiability_noisy} will be the basis of a numerical algebraic geometric based reconstruction method in section~\ref{sec:reconstruction}.

\subsection{Ambiguous setting} \label{subsec:identifiability_ambiguous_setting}

Finally we consider the ambiguous setting, where one would like to reconstruct the locations of beads only from ambiguous contact counts. It is shown in~\cite{belyaeva2021identifying} that for $\alpha=2$, one does not have finite identifiability no matter how many pairs of ambiguous beads one considers. We show finite identifiability for the locations of beads  given contact counts for $12$ pairs of ambiguous beads for $\alpha=-2$ in both the noisy and noiseless setting. We believe that the result might be true for further conversion factors $\alpha$'s, however our proof technique does not directly generalize. 

\begin{theorem} \label{theorem:ambiguous-noiseless-finite-identifiability}
Let $\alpha=-2$. Then for
$(c_{ij})_{1\leq i<j\leq 12}\in \mathbb{R}^{66}$ sufficiently general, the system
of $66$ equations
\begin{equation}\label{eqn:ambiguous-system}
\begin{aligned}
 &\|x_i-x_j\|^\alpha + \|x_i-y_j\|^\alpha + \|y_i-x_j\|^\alpha + \|y_i-y_j\|^\alpha = c_{ij} \text{ for } 1 \leq i<j \leq 12 
\end{aligned}
\end{equation}
in the $72$ unknowns $x_{1,1},x_{1,2},x_{1,3},y_{1,1},y_{1,2},y_{1,3},\ldots,x_{12,1},x_{12,2},x_{12,3},y_{12,1},y_{12,2},\\y_{12,3} \in \mathbb{R}$ has only finitely
many solutions up to rigid transformations. In particular, it holds that for sufficiently general $(x_1^*,y_1^*,\ldots,x_{12}^*,y_{12}^*)\in(\RR^3)^{24}$, the system 
\begin{equation}\label{eqn:ambiguous-system_exact}
\begin{aligned}
 &\|x_i-x_j\|^\alpha + \|x_i-y_j\|^\alpha + \|y_i-x_j\|^\alpha + \|y_i-y_j\|^\alpha = \\
 &\|x_i^*-x_j^*\|^\alpha + \|x_i^*-y_j^*\|^\alpha + \|y_i^*-x_j^*\|^\alpha + \|y_i^*-y_j^*\|^\alpha \text{ for } 1 \leq i<j \leq 12 
\end{aligned}
\end{equation}
has finitely many solutions up to rigid transformation.
\end{theorem}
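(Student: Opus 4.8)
The plan is to complexify the problem and recast the statement as the dominance of a single rational map. Writing $\qq(v)=v_1^2+v_2^2+v_3^2$ as in the proof of Theorem~\ref{theorem:noiseless-finite-identifiability}, I would define
\[
\Phi\colon (\CC^3)^{24}\dashrightarrow \CC^{66},\qquad (x_i,y_i)_{i=1}^{12}\longmapsto\Big(\tfrac{1}{\qq(x_i-x_j)}+\tfrac{1}{\qq(x_i-y_j)}+\tfrac{1}{\qq(y_i-x_j)}+\tfrac{1}{\qq(y_i-y_j)}\Big)_{1\le i<j\le 12},
\]
whose fibers are exactly the complex solution sets of~\eqref{eqn:ambiguous-system}. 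The group $G$ of complex rigid motions (translations together with $O(3,\CC)$) has dimension $6$ and leaves every coordinate of $\Phi$ invariant. The dimension bookkeeping is critical: the domain has dimension $72$, the codomain $66$, and $\binom{12}{2}=66=6\cdot 12-6$, so the theorem is precisely the assertion that $\Phi$ is dominant. Indeed, if $\Phi$ is dominant then a generic fiber has dimension $72-66=6$; a generic (affinely spanning) configuration has trivial $G$-stabilizer, so its orbit also has dimension $6$, and hence a generic fiber is a finite union of orbits, i.e.\ finite up to rigid transformations.

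To prove dominance it suffices to exhibit a single point at which the Jacobian of $\Phi$ has full rank $66$: the locus $\{\operatorname{rank} d\Phi = 66\}$ is Zariski-open, and since $(\CC^3)^{24}$ is irreducible, one full-rank point forces full rank on a dense open set, whence $\Phi$ is a submersion there and its image contains a Euclidean-open set. The differential $d\Phi$ is a sparse $66\times 72$ matrix: the row indexed by $(i,j)$ is supported only on the four blocks $x_i,y_i,x_j,y_j$, with the $x_i$-block equal to $-\tfrac{2(x_i-x_j)}{\qq(x_i-x_j)^2}-\tfrac{2(x_i-y_j)}{\qq(x_i-y_j)^2}$ and analogous expressions in the other three. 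Establishing that these $66$ gradient covectors are linearly independent at a well-chosen point is the heart of the argument.

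This step is the main obstacle, and the reason the result is restricted to $\alpha=-2$. In Theorem~\ref{theorem:noiseless-finite-identifiability}, linear independence followed from a single $6\times 6$ determinant that factored through roots of unity and was certified nonzero by a $2$-adic valuation estimate; here each equation couples two pairs through four terms, so the analogous $66\times 72$ matrix has no comparably transparent block structure and the valuation trick does not survive. The route I would take is therefore computational but fully rigorous: specialize $(x_i^*,y_i^*)$ to an explicit configuration with small integer coordinates avoiding all denominators, and verify by exact arithmetic (e.g.\ in \texttt{Macaulay2}) that the resulting $66\times 72$ Jacobian over $\mathbb{Q}$ has rank $66$. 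Since the rank is certified exactly, this genuinely shows the open full-rank locus is nonempty, and dominance follows. A more conceptual alternative, which I would attempt first, is to degenerate by sending one bead of several pairs to infinity: this kills two of the four terms and reduces each affected equation to a partial-contact equation of the type analyzed in Theorem~\ref{theorem:noiseless-finite-identifiability}, suggesting that the full-rank certificate could be assembled from the nonsingular matrices already produced there; making the limit and the gluing precise requires care, and I would fall back on the direct exact check if it proves delicate.

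Finally, the displayed "in particular" statement follows immediately from dominance. Dominance yields an open dense $W\subseteq \CC^{66}$ over which every fiber is $6$-dimensional and hence finite modulo $G$, so its preimage $\Phi^{-1}(W)$ is open and dense in $(\CC^3)^{24}$. As the real points are Zariski-dense, a sufficiently general real configuration $(x_1^*,y_1^*,\ldots,x_{12}^*,y_{12}^*)$ lies in $\Phi^{-1}(W)$, and the fiber through it—namely the solution set of~\eqref{eqn:ambiguous-system_exact}—is finite up to rigid transformations, as claimed.
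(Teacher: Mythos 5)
Your core strategy coincides with the paper's: complexify, package the $66$ contact equations into a single map $\psi\colon X\to\CC^{66}$ defined on the open set where all denominators $\qq(\cdot)$ are nonzero, certify dominance by an exact-arithmetic computation showing the $66\times 72$ Jacobian has full rank at one rational point (this is literally what the paper does, via a computer calculation at a random rational configuration), and then deduce that generic fibers are $6$-dimensional and invariant under the $6$-dimensional group of complex rigid motions. Your openness-plus-irreducibility argument for why a single full-rank point yields dominance is correct.

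However, two steps that you treat as immediate are genuine gaps, and the paper devotes a separate paragraph to each. First, the inference ``a generic (affinely spanning) configuration has trivial $G$-stabilizer, so a generic fiber is a finite union of orbits'' is a non sequitur: points of a generic fiber need not be generic points of the domain, and a $6$-dimensional fiber component could a priori contain infinitely many lower-dimensional orbits through degenerate configurations with positive-dimensional stabilizers. What is needed is that the locus of such configurations does not dominate $\CC^{66}$, so that generic fibers avoid it altogether. The paper proves this with a Lie algebra argument: a positive-dimensional stabilizer yields a nonzero skew-symmetric $A$, which has rank $2$, and annihilating all differences $x_i^*-x_j^*$, $x_i^*-y_j^*$, $y_i^*-y_j^*$ forces the $24$ points onto a line; the collinear locus has dimension $28$, so its image lies in a proper closed subset of $\CC^{66}$, and one shrinks the generic locus $V\subseteq\CC^{66}$ accordingly. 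Second, your final paragraph passes from ``finitely many complex $G$-orbits'' to ``finitely many real solutions up to rigid transformations'' without justification: the real points of a single complex orbit $G\cdot q$ could conceivably decompose into infinitely many orbits of the real group $SO(3,\RR)\ltimes\RR^3$. The paper closes this by showing that $(G\cdot q)\cap(\RR^3)^{24}=(SO(3,\RR)\ltimes\RR^3)\cdot q$ whenever the configuration $q$ is not coplanar, and that the coplanar locus (dimension $51$) again fails to dominate $\CC^{66}$, so $V$ can be shrunk once more to exclude it. Both gaps are repairable by precisely these dimension counts, but as written your reduction of the theorem to bare dominance of $\Phi$ is incomplete.
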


\begin{proof}
As before, we write $\qq(x):=x_1^2 + x_2^2 + x_3^2$, so that $\|x\|=\sqrt{\qq(x)}$
for $x \in \mathbb{R}^3$. Consider the affine open subset $X \subseteq (\CC^3)^{24}$ consisting of all tuples $
(x_1^*,y_1^*,\ldots,x_{12}^*,y_{12}^*)$
such that 
\[ \qq(x_i^*-x_j^*) \neq 0,\:\: \qq(x_i^*-y_j^*) \neq 0,\:\: \qq(y_i^*-x_j^*)\neq 0  \:\:\text{and}\:\: \qq(y_i^*-y_j^*) \neq
0  \text{ for } i < j.\]
Consider also the map $\psi\colon X \to \CC^{66}$ defined by
\begin{align*}
 (x_1^*,\ldots,y_{12}^*)
\mapsto\left(\qq(x_i^*\hspace{-1pt}-x_j^*)^{-1}{\hspace{-3pt}}+\qq(x_i^*\hspace{-1pt}-y_j^*)^{-1}{\hspace{-3pt}}+\qq(y_i^*\hspace{-1pt}-x_j^*)^{-1}{\hspace{-3pt}}+\qq(y_i^*\hspace{-1pt}-y_j^*)^{-1}\right)_{i<j}. 
\end{align*}
By a computer calculation (with exact arithmetic) we found that at a randomly chosen $q \in X$ with rational coordinates, the derivative $d_q \psi$ had full rank $66$. 
It then follows that for $q$ in some open dense subset of $X$, 
$d_q \psi$ has rank $66$. Hence $\psi$ is dominant, and for any sufficiently general $c \in \CC^{66}$, all irreducible components of the fiber $\psi^{-1}(c)$ have dimension $6$. Moreover, each such component $C$ is preserved by the $6$-dimensional connected group $G=SO(3,\CC) \ltimes\CC^3$.

The stabilizer in $G$ of a sufficiently general point in $X$ is zero-dimensional. This follows from a Lie algebra argument: if a point $(x_1^*,y_1^*,\ldots,x_{12}^*,y_{12}^*) \in X$ has a positive-dimensional stabilizer in $G$, then there is a nonzero element $A$ in the Lie algebra of $SO(3,\CC)$ that maps all the differences $x_i^*-x_j^*,x_i^*-y_j^*,y_i^*-y_j^*$ to zero. Since $A$ is a skew-symmetric matrix and hence of rank 2, it follows that all points $x_i^*,y_j^*$ lie on a line. The variety of such collinear tuples has dimension 28, so it does not map dominantly to $\CC^{66}$. Hence there exists a Zariski open dense subset $V\subseteq\CC^{66}$ such that for all $c\in V$, the fiber $\psi^{-1}(c)$ contains no points with positive-dimensional stabilizers in $G$, and hence $\psi^{-1}(c)$ is a disjoint union of finitely many 6-dimensional $G$-orbits.
Likewise, $\psi^{-1}(V)$ is a Zariski open dense subset of $(\CC^3)^{24}$ such that $\psi^{-1}(\psi(q))$ consists of finitely many $G$-orbits for all $q\in\psi^{-1}(V)$. With this, we have proven the complex analog of the theorem. 

To obtain the statement over the real numbers, we note that if $c\in V$ has real-valued coordinates, then a finite number of the $G$-orbits that make up $\psi^{-1}(c)$  contain a real-valued tuple. If $G\cdot q$ for $q\in(\RR^3)^{24}$ is such an orbit, it holds that $(G\cdot q)\cap(\mathbb{R}^3)^{24}=(SO(3,\mathbb{R}) \ltimes \mathbb{R}^3) \cdot q$ whenever the 24 points that make up the tuple $q$ are not coplanar. The set of coplanar configurations form a subset of $X$ of dimension 51, and does therefore not map dominantly to $\CC^{66}$. Hence, by shrinking $V$ appropriately, we can assume that no fibers above it contain coplanar configurations. In particular, this means that the real part of the fiber over any real point in $V$ consists of a finitely many orbits under the action of $SO(3,\mathbb{R}) \ltimes \mathbb{R}^3$, as desired. 
\end{proof}

\begin{remark}
A standard numerical algebraic geometry computation with monodromy and the certification techniques of \cite{breiding2023certifying},
 using \texttt{HomotopyContinuation.jl} (see, e.g., \cite{sturmfels2021likelihood}), proves that the system \eqref{rational_eq_noisy} generically has more than 1000 complex solutions up to the action of $O(3,\mathbb{C})\ltimes\mathbb{C}^3$ and the symmetries $(x_i,y_i)\mapsto (y_i,x_i)$ for $i=1,\ldots,12$. This constitutes theoretical motivation for working with partially phased data, even if we, in principle, have finite identifiability already from the unphased data.
\end{remark}

\begin{remark}
When $\alpha=2$, which corresponds to the setting studied in~\cite{belyaeva2021identifying}, then computationally we found that for some special choices of $x_1^*,y_1^*,\ldots,x_{12}^*,y_{12}^* \in \mathbb{R}^3$ the rank of the Jacobian matrix in Theorem~\ref{theorem:ambiguous-noiseless-finite-identifiability} is $42$. This is consistent with the fact that Theorem~\ref{theorem:ambiguous-noiseless-finite-identifiability} fails for $\alpha=2$~\cite{belyaeva2021identifying}.
\end{remark}

\section{A new reconstruction method}
\label{sec:reconstruction}
In this section, we outline a new approach to diploid 3D genome reconstruction for partially phased data, based on the theoretical results discussed in subsection \ref{subsec:partially_ambigous}. The method consists of the following main steps:
\begin{enumerate}
    \item Estimation of the unambiguous beads $\{x_i,y_i\}_{i\in U}$ through semidefinite programming (discussed in subsection~\ref{subsec:estimation_of_unambiguous_beads}).
    \item A preliminary estimation of the ambiguous beads using numerical algebraic geometry, based on Corollary~\ref{cor:finite_identifiability_noisy} (discussed in subsection~\ref{subsec:NAG}).
    \item A refinement of this estimation using local optimization (discussed in subsection~\ref{subsec:local_optimization}).
    \item A final clustering step, where we disambiguate between the estimations $(x_i,y_i)$ and $(y_i,x_i)$ for each $i\in A$, based on the assumption that homolog chromosomes are separated in space (discussed in subsection~\ref{subsec:clustering}).
\end{enumerate}
In what follows, we will refer to this method by the acronym SNLC (formed from the initial letters in semidefinite programming, numerical algebraic geometry, local optimization and clustering). 

\subsection{Estimation of the positions of unambiguous beads} \label{subsec:estimation_of_unambiguous_beads}
As discussed in section \ref{subsec:euclidean_distance_geometry}, the unambiguous bead coordinates  $\{x_i,y_i\}_{i\in U}=\{z_i\}_{i\in U\cup (n+U)}$ can be estimated with semidefinite programming. 
More specifically, we use ChromSDE~\cite[Section~2.1]{zhang2013inference} for this part of our reconstruction, which relies on a  specialized solver from \cite{kaifeng2014partial}, to solve an SDP relaxation of the optimization problem 
\begin{equation}
\label{eqn:euclidean_distance_problem}
\min_{\{z_i\}_{i\in U\cup(n+U)}} \sum_{\substack{i,j\in U\cup(n+U)\\c_{ij}^U\neq 0}}\sqrt{c_{ij}^U}\left(\frac{1}{c_{ij}^U}-\|z_i-z_j\|^2\right)^2+\lambda\sum_{\substack{i,j\in U\cup(n+U)\\c_{ij}^U=0}} \|z_i-z_j\|^2
\end{equation}
with $\lambda=0.01$ (cf. \cite[Equation~4]{zhang2013inference}). The terms in the first sum are weighted by the square root for the corresponding contact counts, in order to account for the fact that higher counts can be assumed to be less susceptible to noise. 

\subsection{Preliminary estimation using numerical algebraic geometry}
\label{subsec:NAG}

To estimate the coordinates of the ambiguous beads $\{x_i,y_i\}_{i\in A}$, we will use a method based on numerical equation solving, where we estimate the ambiguous bead pairs one by one.

Let $x,y$ be the unknown coordinates in $\RR^3$ of a pair of ambiguous beads. We pick six unambiguous beads with already estimated coordinates $a,b,c,d,e,f \in \mathbb{R}^3$. For each $t\in\{a,\ldots,f\}$, let $c_{t}\in \RR$ be the corresponding partially ambiguous counts between $t$ and the ambiguous bead pair $(x,y)$. Clearing the denominators in the system~(\ref{rational_eq_noisy}), we obtain a system of polynomial equations
\begin{equation}\label{poly_eq_noisy}
   \|x-t\|^2 + \|y-t\|^2 = c_t\|x-t\|^2 \|y-t\|^2 \text{ for } t=a,b,c,d,e,f.
\end{equation}
By Corollary~\ref{cor:finite_identifiability_noisy}, this system has finitely many complex solutions both in the noiseless and noisy setting, which can be found using homotopy continuation.

We observe that the system \eqref{poly_eq_noisy} generally has 80 complex solutions, and we only expect one pair of solutions $(x,y),(y,x)$ to correspond to an accurate estimation. Naively adding another polynomial arising from a seventh unambiguous bead (as in Conjecture~\ref{conj:unique_identifiability_alpha=2}) does not work; in the noisy setting this over-determined system typically lacks solutions. Instead, we compute an estimation based on the following two heuristic assumptions:
\begin{enumerate}
    \item The most accurate estimation should be \textit{approximately real}, in the sense that the max-norm of the imaginary part is below a certain tolerance (in this work, 0.15 was used for the experiments in both subsections~\ref{subsec:synthetic} and~\ref{subsec:real_dataset}). The choice of this threshold was made based on analysing the imaginary parts of solutions to \eqref{poly_eq_noisy} for various choices of unambiguous beads, see Figure~\ref{fig:norms_of_imaginary_parts}.
    \item The most accurate estimation should be consistent when we change the choice of six unambiguous beads.
\end{enumerate}
Based on these assumptions, we apply the following strategy. We make a number $N\geq 2$, choices of sets of six unambiguous beads, and solve the corresponding $N$ square systems of the form \eqref{poly_eq_noisy}. Since larger contact counts can be expected to have smaller relative noise, we make the choices of beads among the 20 unambiguous beads $t$ that have highest contact count $c_t$ to the ambiguous locus at hand. For each system, we pick out the approximately real solutions, and obtain $N$ sets $\mathcal{S}_1,\ldots,\mathcal{S}_N\subseteq\mathbb{R}^6$ consisting of the real parts of the approximately real solutions. Up to the symmetry $(x,y)\mapsto (y,x)$, we expect these sets to have a unique ``approximately common'' element. 
We therefore compute, by an exhaustive search, the tuple $(w_1,\ldots,w_N)\in\mathcal{S}_1\times\cdots\times\mathcal{S}_N$ that minimizes the sum \[\left\Vert w_1-\frac{w_1+\cdots+w_N}{N}\right\Vert+\cdots+\left\Vert w_N-\frac{w_1+\cdots+w_N}{N}\right\Vert,\] 
and use $\frac{w_1+\cdots+w_N}{N}$ as our estimation of $(x,y)$. For the computations presented in section~\ref{sec:experiments}, we use $N=5$. 

To solve the systems, we use the Julia package \verb$HomotopyContinuation.jl$ \cite{homotopycontinuation}, and follow the two-phase procedure described in \cite[Section~7.2]{sommese2005numerical}. For the first phase, we solve \eqref{poly_eq_noisy} with randomly chosen parameters $a^*,\ldots,f^*\in\CC^3$ and $c_{a^*},\ldots,c_{f^*}\in\CC$, using a polyhedral start system \cite{Birkett1995polyhedral}. We trace 1280 paths in this first phase, since the Newton polytopes of the polynomials appearing in the system \eqref{poly_eq_noisy} all contain the origin, and have a mixed volume of 1280, which makes 1280 an upper bound on the number of complex solutions by \cite[Theorem~2.4]{Li1996bkk}. For the second phase, we use a straight-line homotopy in parameter space from the randomly chosen parameters $a^*,\ldots,f^*\in\CC^3$ and $c_{a^*},\ldots,c_{f^*}\in\CC$, to the values $a,\ldots,f$ and  $c_{a},\ldots,c_{f}\in\CC$ at hand. We observe that we generally find 80 complex solutions in the first phase, which means 40 orbits with respect to the symmetry $(x,y)\mapsto (y,x)$. By the discussion in \cite[Section~7.6]{sommese2005numerical}, it is enough to only trace one path per orbit, so in the end, we only trace 40 paths in the second phase.

\begin{remark}
If the noise levels are sufficiently high, there could be choices of six unambiguous beads for which the system lacks approximately-real solutions. If this situation is encountered, we  try to redraw the six unambiguous beads until we find an approximately-real solution. If this does not succeed within a certain number of attempts (100 in the experiments conducted for this paper), we use the average of the  closest neighboring unambiguous beads instead.
\end{remark}

\subsection{Local optimization} \label{subsec:local_optimization}

A disadvantage of the numerical algebraic geometry based estimation discussed in the previous subsection is that it only takes into account ``local'' information about the interactions for one ambiguous locus at a time, which might make it more sensitive to noise. In our proposed method, we therefore refine this preliminary estimation of $\{x_i,y_i\}_{i\in A}$ further in a local optimization step that takes into account the ``global''  information of all available data.

The idea is to estimate $\{x_i,y_i\}_{i\in A}$ by solving the optimization problem
\begin{equation}
\label{eq:local_optimization_problem}
\min_{\{x_i,y_i\}_{i\in A}}\:\:\hspace{-2pt}{\sum_{i\in U,j\in A}\hspace{-5pt}\left( \left(c^P_{i,j} \hspace{-1pt} - \hspace{-1pt} \tfrac{1}{\|x_i-x_j\|^2}\hspace{-1pt}-\hspace{-1pt}  \tfrac{1}{\|x_i-y_j\|^2}\right)^2 \hspace{-7pt}+\hspace{-3pt}\left(c^P_{i+n,j} \hspace{-1pt} - \hspace{-1pt}  \tfrac{1}{\|y_i-x_j\|^2} \hspace{-2pt} - \hspace{-2pt}   \tfrac{1}{\|y_i-y_j\|^2}\right)^2\right)}
\end{equation}
while keeping the estimates of $\{x_i,y_i\}_{i\in U}$ from the ChromSDE step fixed. We use the quasi-Newton method for unconstrained optimization implemented in the Matlab Optimization Toolbox for this step. The already estimated coordinates of $\{x_i,y_i\}_{i\in A}$ from the numerical algebraic geometry step are used for the initialization.

\subsection{Clustering to break symmetry} \label{subsec:clustering}

Our objective function remains invariant if we exchange $x_i$ and $y_i$ for any $i\in A$.
We can break symmetry by relying on the empirical observation that homologous chromosomes typically are spatially separated in different so-called compartments of the nucleus~\cite{eagen2018principles}.
Let $(\bar{x}_i,\bar{y}_i)_{i=1}^n$ denote the estimates from the previous steps.
Our final estimations will be obtained by solving the minimization problem 
\begin{align}
  \label{eqn:clustering_minimization}
  \min_{\{x_i,y_i\}_{i\in A}}\; \hspace{-3pt} \sum_{i=1}^{n-1} \;g_{i,i+1}(x,y),
  \text{ with } \,\,
  g_{i,i+1}(x,y):= \left(\|x_i - x_{i+1}\|^2 + \|y_i - y_{i+1}\|^2\right), 
\end{align}
where $(x_i,y_i)=(\bar{x}_i,\bar{y}_i)$ for $i\in U$ are fixed,
and $(x_i,y_i)\in\{(\bar{x}_i,\bar{y}_i),(\bar{y}_i,\bar{x}_i)\}$ for $i\in A$ are the optimization variables.
The optimal solution can be computed efficiently, as explained next.

We first decompose the problem into contiguous chunks of ambiguous beads.
Let $(i_1,\dots,i_{L}) := U$ be the indices of the unambiguous beads
and let $i_0 := 1$, $i_{L+1} := n$.
The optimization problem can be phrased as
\begin{align}
  \label{eqn:clustering_minimization2}
  \min_{\{x_i,y_i\}_{i\in A}}\;
  \sum_{\ell=0}^{L} G_\ell(x,y),
  \quad\text{ with }\quad
  G_\ell(x,y) := \sum_{i=i_\ell}^{i_{\ell+1}-1} \;g_{i,i+1}(x,y)
\end{align}
where there is one summand $G_\ell(x,y)$ for each contiguous chunk of ambiguous beads.
Since the summands $G_\ell(x,y)$ do not share any ambiguous bead,
we can minimize them independently.

We proceed to describe the optimal solution of the problem.
Let
\begin{align*}
  s_i = \begin{cases}
    1, &\text{ if }(x_i,y_i) = (\bar x_i, \bar y_i)\\
    -1, &\text{ if }(x_i,y_i) = (\bar y_i,\bar x_i)
  \end{cases},
  \qquad
  w_{i,i+1} = (\bar{x}_i-\bar{y}_i)^T(\bar{x}_{i+1}-\bar{y}_{i+1}).
\end{align*}
The variable $s_i$ indicates whether we keep using $(\bar x_i, \bar y_i)$ or we reverse it.
Note that $s_i = 1$ for $i \in U$.
The next lemma gives the optimal assignment of $s_i$ for $i \in A$.
This assignment is constructed by using inner products~$w_{i,i+1}$.

\begin{lemma}
  The optimal solution of \eqref{eqn:clustering_minimization} can be constructed as follows:
  \begin{enumerate}
    \item For the last chunk ($\ell = L$) we have
      \begin{align*}
        s_{i_{\ell}}^* = 1, \qquad\quad
        s_{i+1}^* = \sgn(w_{i,i+1})s_{i}^*
        \quad\text{ for } i=i_{\ell}, i_{\ell}{+}1, \dots, i_{\ell+1}{-}1
      \end{align*}
      where $\sgn(\cdot)$ is the sign function and  $\sgn(0)$ can be either $1$ or $-1$.
    \item For the first chunk ($\ell=0$) we have
      \begin{align*}
        s_{i_{\ell+1}}^* = 1, \qquad\quad
        s_{i}^* = \sgn(w_{i,i+1})s_{i+1}^*
        \quad\text{ for } i=i_{\ell+1}{-}1, i_{\ell+1}{-}2, \dots, i_\ell
      \end{align*}
    \item For any other chunk,
      let $k$ be the index of the smallest absolute value $|w_{k,k+1}|$, among $i_{\ell}\leq k \leq {i_{\ell+1}-1}$.
      The solution is
      \begin{align*}
        s_{i_{\ell}}^* &= 1, \qquad\quad
        s_{i+1}^* = \sgn(w_{i,i+1})s_{i}^*
        \quad\text{ for } i=i_{\ell}, i_{\ell}{+}1, \dots, k{-}1
        \\
        s_{i_{\ell+1}}^* &= 1, \qquad\quad
        s_{i}^* = \sgn(w_{i,i+1})s_{i+1}^*
        \quad\text{ for } i=i_{\ell+1}{-}1, i_{\ell+1}{-}2, \dots, k{+}1
      \end{align*}
  \end{enumerate}
\end{lemma}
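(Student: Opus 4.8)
The plan is to reduce the combinatorial minimization \eqref{eqn:clustering_minimization} to the maximization of a signed sum along the bead chain, and then to solve that problem on each chunk by a parity argument.

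First I would rewrite each summand in terms of the sign variables $s_i$. Expanding the squared norms gives
\[
g_{i,i+1}(x,y) = \|x_i\|^2 + \|y_i\|^2 + \|x_{i+1}\|^2 + \|y_{i+1}\|^2 - 2\left(x_i^T x_{i+1} + y_i^T y_{i+1}\right).
\]
Since $\{x_i,y_i\}=\{\bar x_i,\bar y_i\}$ as an unordered pair irrespective of $s_i$, the diagonal quantity $\|x_i\|^2+\|y_i\|^2$ is independent of all the signs, so only the cross term is relevant. A short case check shows that $x_i^T x_{i+1} + y_i^T y_{i+1}$ takes the value $\bar x_i^T\bar x_{i+1}+\bar y_i^T\bar y_{i+1}$ when $s_i=s_{i+1}$ and the value $\bar x_i^T\bar y_{i+1}+\bar y_i^T\bar x_{i+1}$ when $s_i\neq s_{i+1}$, and their difference is exactly $w_{i,i+1}=(\bar x_i-\bar y_i)^T(\bar x_{i+1}-\bar y_{i+1})$. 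Hence $g_{i,i+1}(x,y)=\kappa_i - s_is_{i+1}\,w_{i,i+1}$ for a sign-independent constant $\kappa_i$, and minimizing \eqref{eqn:clustering_minimization} is equivalent to maximizing $\sum_{i=1}^{n-1} s_is_{i+1}w_{i,i+1}$ over $s\in\{\pm1\}^n$ with $s_i=1$ pinned for $i\in U$.

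Next I would invoke the chunk decomposition \eqref{eqn:clustering_minimization2}: consecutive chunks share only an unambiguous bead, whose sign is pinned to $1$, so the chunks involve disjoint free variables and can be optimized independently. On a single chunk I would pass to the edge variables $e_i := s_is_{i+1}\in\{\pm1\}$, turning the objective into the linear form $\sum_i e_i w_{i,i+1}$. The telescoping identity $\prod_{i=i_\ell}^{i_{\ell+1}-1} e_i = s_{i_\ell}s_{i_{\ell+1}}$ shows that the edge signs are freely and independently assignable, subject only to this single product constraint. For the first and last chunks exactly one endpoint is pinned, so there is no constraint, and the optimum simply sets $e_i=\sgn(w_{i,i+1})$; propagating $s$ from the pinned endpoint recovers cases (2) and (1). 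For an interior chunk both endpoints are pinned to $1$, forcing $\prod_i e_i=1$; the free optimum $e_i=\sgn(w_{i,i+1})$ is feasible precisely when its product is already $1$, and otherwise exactly one edge sign must be reversed, at a cost of $2|w_{i,i+1}|$, so one reverses the edge of smallest $|w_{k,k+1}|$. This is case (3), in which the two propagations starting from $i_\ell$ and from $i_{\ell+1}$ meet at index $k$, leaving that one edge to absorb the parity.

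I expect the main obstacle to be twofold: the bookkeeping in the reduction $g_{i,i+1}=\kappa_i-s_is_{i+1}w_{i,i+1}$, and, more substantively, the careful justification that on an interior chunk the constrained optimum differs from the free optimum only by reversing the single smallest edge. For the latter one must argue that any feasible assignment differs from the free optimum in an even number of edges, that each reversal loses $2|w_{i,i+1}|$, and hence that a single reversal at $\argmin_k|w_{k,k+1}|$ is optimal; the degenerate case $w_{k,k+1}=0$ is exactly what allows the parity to be fixed at no cost, which explains why $\sgn(0)$ may be chosen arbitrarily in the statement.
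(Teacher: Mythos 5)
Your proposal is correct and follows essentially the same route as the paper: the same reduction of $g_{i,i+1}$ to a sign-independent constant minus $s_i s_{i+1} w_{i,i+1}$ (the paper does this via the substitution $\bar u_i = \tfrac{1}{2}(\bar x_i + \bar y_i)$, $\bar v_i = \tfrac{1}{2}(\bar x_i - \bar y_i)$ rather than your case check), the same chunk decomposition, and the same greedy/parity endgame with at most one negative term per interior chunk. Your edge-variable substitution $e_i = s_i s_{i+1}$ with the telescoping product constraint is a slightly more explicit justification of the step the paper only asserts (that an optimal interior-chunk assignment makes at most one term negative, namely the one of smallest $|w_{k,k+1}|$), but it is the same argument in substance.
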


\begin{proof}
  Denoting $\bar u_i := \tfrac{1}{2}(\bar{x}_i + \bar{y}_i)$, $\bar v_i := \tfrac{1}{2}(\bar{x}_i - \bar{y}_i)$,
  then $ x_i = u_i + s_i v_i$, $ y_i = u_i - s_i v_i $.
  Note that
  
  \begin{align*}
    \|\bar{x}_i\|^2 + \|\bar{y}_i\|^2 + \|\bar{x}_{i+1}\|^2 &+ \|\bar{y}_{i+1}\|^2
    - g_{i,i+1}(x,y)
    = 2 (x_i^T x_{i+1} + y_i^T y_{i+1})\,
    \\
    &{\hspace{-40pt}}=2(\bar u_i + s_i \bar v_i)^T (\bar u_{i+1} + s_{i+1} \bar v_{i+1})
    + 2(\bar u_i - s_i \bar v_i)^T (\bar u_{i+1} - s_{i+1} \bar v_{i+1})\\
     &{\hspace{-40pt}}= 4 (\bar u_i^T \bar u_{i+1}) + 4 (\bar v_i^T \bar v_{i+1}) s_i s_{i+1}
    \\
     &{\hspace{-40pt}}= 4 (\bar u_i^T \bar u_{i+1}) + w_{i,i+1} s_i s_{i+1}
  \end{align*}
  
  Since $\bar x_i, \bar y_i, \bar u_i, \bar v_i$ are constants,
  minimizing $g_{i,i+1}(x,y)$ is equivalent to maximizing $w_{i,i+1} s_i s_{i+1}$.
  Then for each chunk we have to solve the optimization problem
  \begin{align}\label{eqn:clustering_maximization}
    \max_{s_i\in\{1,-1\}} \;\;\sum_{i=i_{\ell}}^{i_{\ell+1}-1} w_{i,i+1} s_i s_{i+1}\,,
  \end{align}

  The formulas from the first and last chunk are such that $w_{i,i+1} s_i^* s_{i+1}^* \geq 0$ for all~$i$.
  This is possible because in these cases only one of the endpoints has a fixed value,
  and the remaining values are computed recursively starting from such a fixed point.
  Since all summands are nonnegative, the sum in \eqref{eqn:clustering_maximization} is maximized.

  For the inner chunks, the two endpoints are fixed, so it may not be possible to have that $w_{i,i+1} s_i^* s_{i+1}^* \geq 0$ for all indices.
  In an optimal assignment we should pick at most one term to be negative,
  and such a term (if it exists) should be the one with the smallest absolute value $|w_{i,i+1}|$.
  This leads to the formula from the lemma.
\end{proof}

\section{Experiments}
\label{sec:experiments}

In this section, we apply the SNLC scheme described in section~\ref{sec:reconstruction} to synthetic and real datasets, and compare its performance with the preexisting software packages ASHIC~\cite{ye2020ashic} and PASTIS~\cite{cauer2019inferring}. We chose these two reconstruction methods for comparison because they are best suited for our setting. Also Belyaeva et al.\ \cite{belyaeva2021identifying} and Tan et al.\ \cite{tan2018three} have reconstruction methods for diploid organisms, but the former method requires higher-order contact information and the latter method is targeted for single cell data.

All SNLC experiments are done using Julia~1.6.1, with ChromSDE being run in Matlab 2021a, and the Julia package \texttt{MATLAB.jl} (v0.8.3) acting as interface between Julia and Matlab. The numerical algebraic geometry part of the estimation procedure is done with \texttt{HomotopyContinuation.jl} (v2.5.5) \cite{homotopycontinuation}. The PASTIS experiments are run in Python 3.8.10, and the ASHIC experiments in Python 3.10.5. 

For the PASTIS computations, we fix $\alpha=-2$ to ensure compatibility with the modelling assumptions made in this paper. We run PASTIS without filtering, in order to make it possible to compare RMSD values. Since PASTIS only takes integer inputs, we multiply the theoretical contact counts calculated by \eqref{eq:full_system} by a factor $10^5$ and round them to the nearest integer. Following the approach taken in \cite{cauer2019inferring}, we use a coarse grid search to find the optimal coefficients for the homolog separating constraint and bead connectivity constraints. Specifically, we fix a structure simulated with the same method as used in the experiments, and compute the RMSD values for all $\lambda_1,\lambda_2\in\{1,10^1,10^2,\ldots,10^{12}\}$. In this way, we find that $\lambda_1=10^{11}$ and $\lambda_2=10^{12}$ give optimal results.

For the ASHIC computations, we use the ASHIC-ZIPM method, which has the lowest distance error rate among the ASHIC's models according to \cite[Figure 2]{ye2020ashic} and models the contact counts as a zero-inflated Poisson distribution (ZIP) to account for the sparsity of the Hi-C matrix.
We run ASHIC without filtering out any loci and with the setting \verb|aggregate| to ensure that the coordinates of all beads are estimated.

\subsection{Synthetic data}
\label{subsec:synthetic}
We conduct a number of experiments where we simulate a single chromosome pair (referred to as $X$ and $Y$ in figures) through Brownian motion with fixed step length, compute unambiguous, partially ambiguous and ambiguous contact counts according to \eqref{eq:full_system}, add noise, and then try to recover the structure of the chromosomes through the SNLC scheme described in section~\ref{sec:reconstruction}. Following \cite{belyaeva2021identifying}, we model noise by multiplying each entry of $C^U$, $C^P$ and $C^A$ by a factor $1+\delta$, where $\delta$ is sampled uniformly from the interval $(-\varepsilon,\varepsilon)$ for some chosen noise level $\varepsilon\in [0,1]$.

As a measure of the quality of the reconstruction, we use the minimal root-mean square distance (RMSD) between, on the one hand, the true coordinates $(x_i^*,y_i^*)_{i=1}^n$, and, on the other hand, the estimated coordinates $(x_i,y_i)_{i=1}^n$ after rigid transformations and scaling, i.e., we find the minimum
\[\min_{\substack{R\in \mathrm{O}(3)\\s>0,\: b\in\mathbb{R}^3}}\sqrt{\frac{1}{2n} \sum_{i=1}^n \Big(\Vert (sR x_i+b)-x_i^*\Vert^2+\Vert (sR y_i+b)-y_i^*\Vert^2\Big)}.\]
This can be seen as a version of the classical Procrustes problem solved in \cite{schonemann1966generalized}, which is implemented in Matlab as the function \verb$procrustes$.

Specific examples of reconstructions of the Brownian motion and helix-shaped chromosomes obtained with SNLC at varying noise levels and $50\%$ of ambiguous beads are shown in Figure~\ref{fig:examples_of_reconstructions}. For low noise levels the reconstructions by SNLC and the original structure highly overlap. For higher noise levels the general region occupied by the reconstructions overlaps with the original structure, while the local features become less aligned.
Analogous reconstructions obtained with SNLC without the local optimization step 
are shown in Figure~\ref{fig:reconstructions_without_local} in Appendix. 

A comparison of how the quality of the reconstruction depends on the noise level and proportion of ambiguous beads for SNLC, ASHIC and PASTIS is done in Figure~\ref{fig:comparison_rmsd_vs_noise}. We measure the RMSD value between the reconstructed and original 3D structure for different noise levels over 20 runs. The RMSD values obtained by SNLC are consistently lower than the ones obtained by ASHIC and PASTIS. The difference is specially large for low to medium noise levels. 
While our method outperforms ASHIC and PASTIS  in the setting considered in this paper, it is worth mentioning that ASHIC and PASTIS work also in a more general setting, where there might be contacts of all three types (ambiguous, partially ambiguous and unambiguous) between every pair of loci.

\begin{figure}[ht]
    \begin{subfigure}{0.3\textwidth}
    \includegraphics[width=\textwidth]{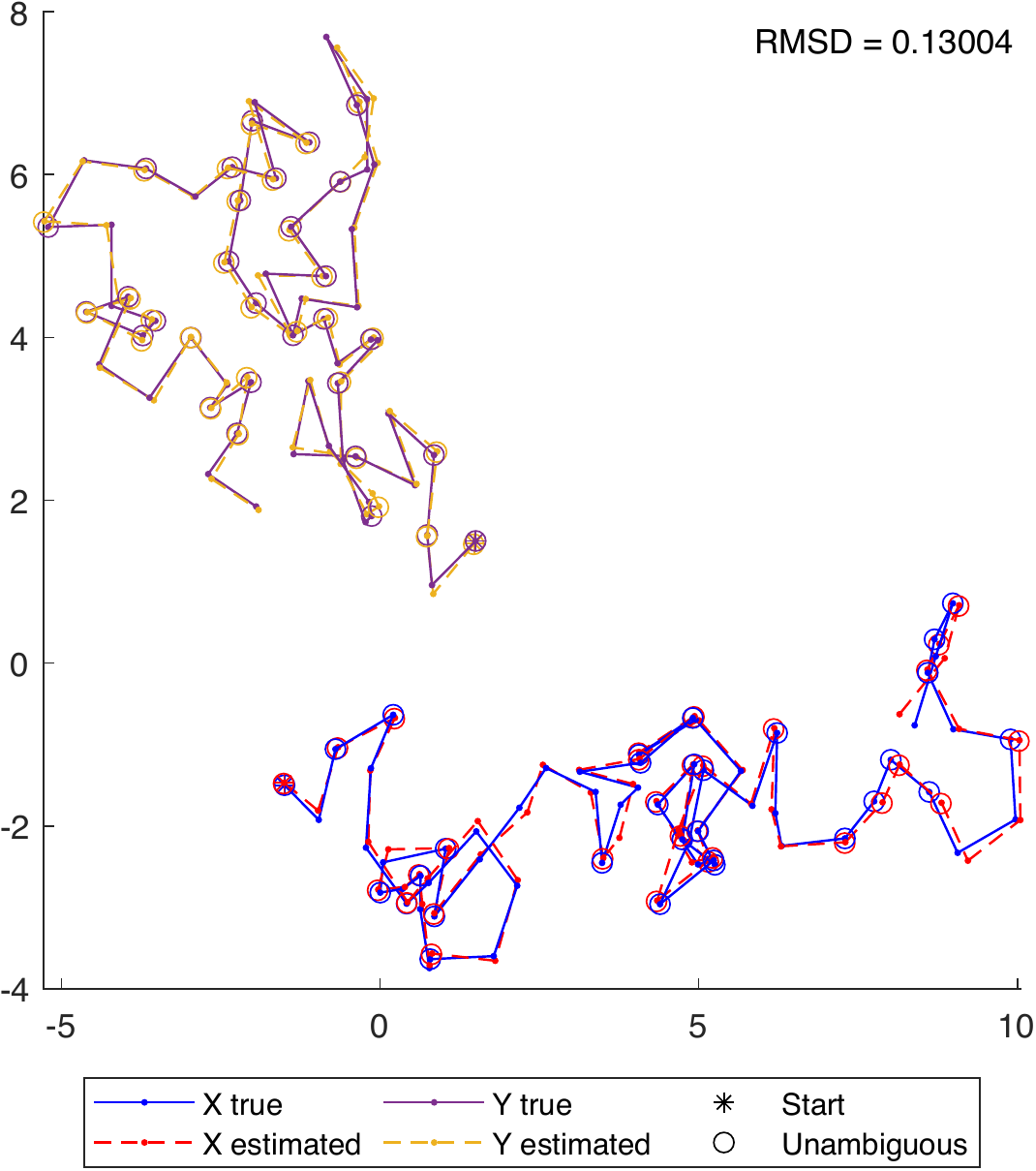}
    \caption{$\varepsilon=0.10$}
    \end{subfigure}
    ~
    \begin{subfigure}{0.3\textwidth}
    \includegraphics[width=\textwidth]{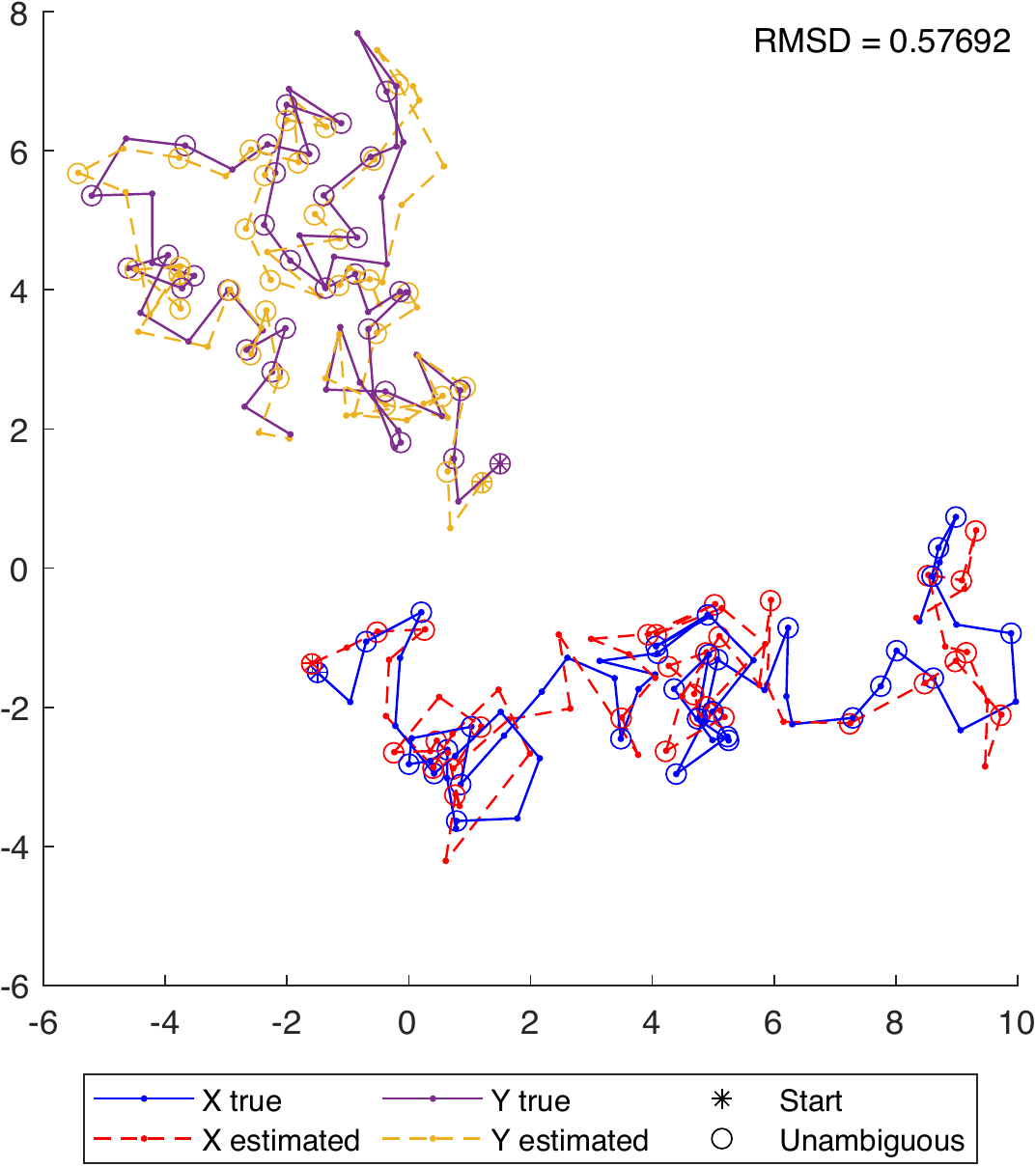}
    \caption{$\varepsilon=0.50$}
    \end{subfigure}
    ~
    \begin{subfigure}{0.3\textwidth}
    \includegraphics[width=\textwidth]{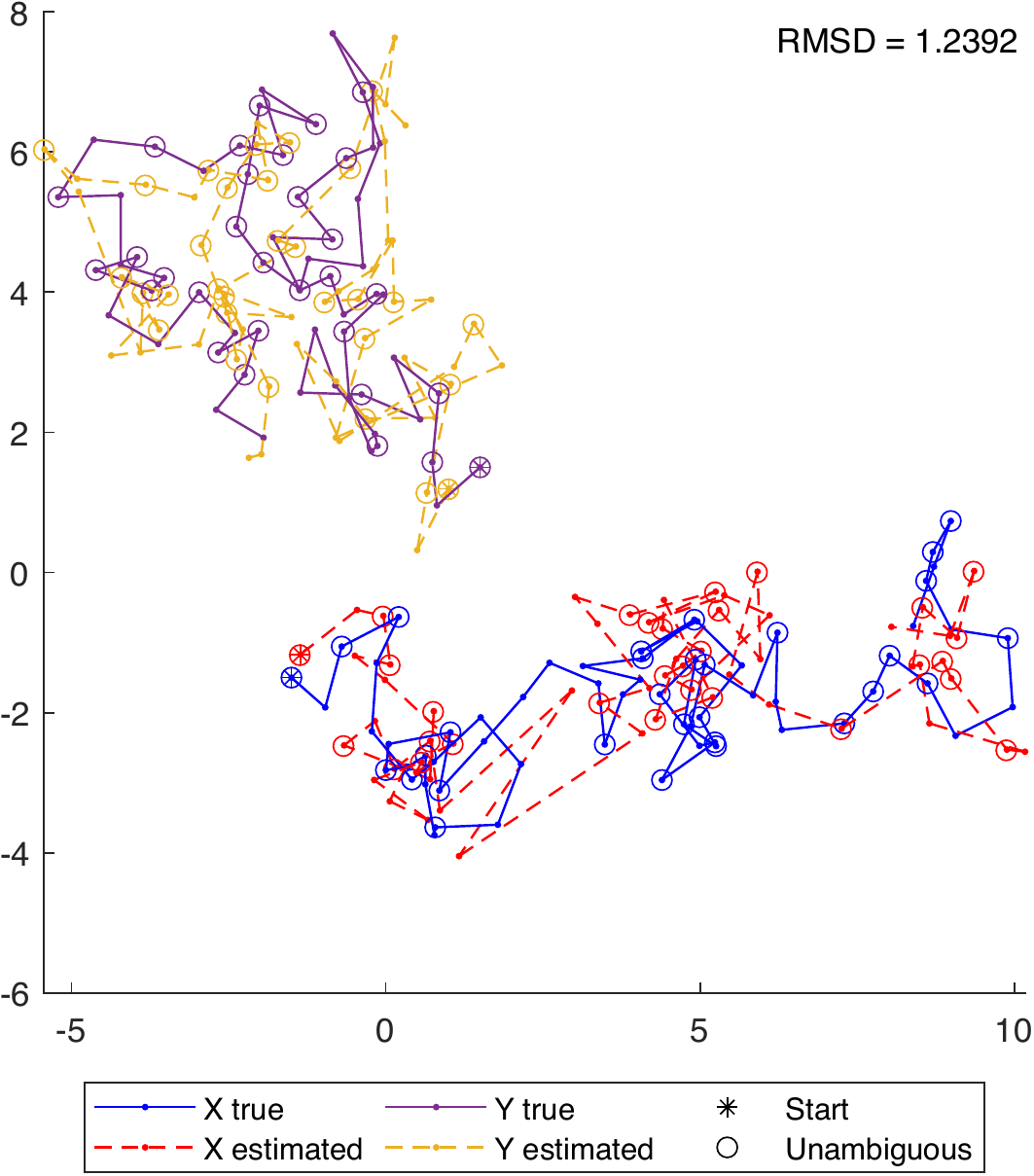}
    \caption{$\varepsilon=0.90$}
    \end{subfigure}\\[2em]
    \begin{subfigure}{0.3\textwidth}
    \includegraphics[width=\textwidth]{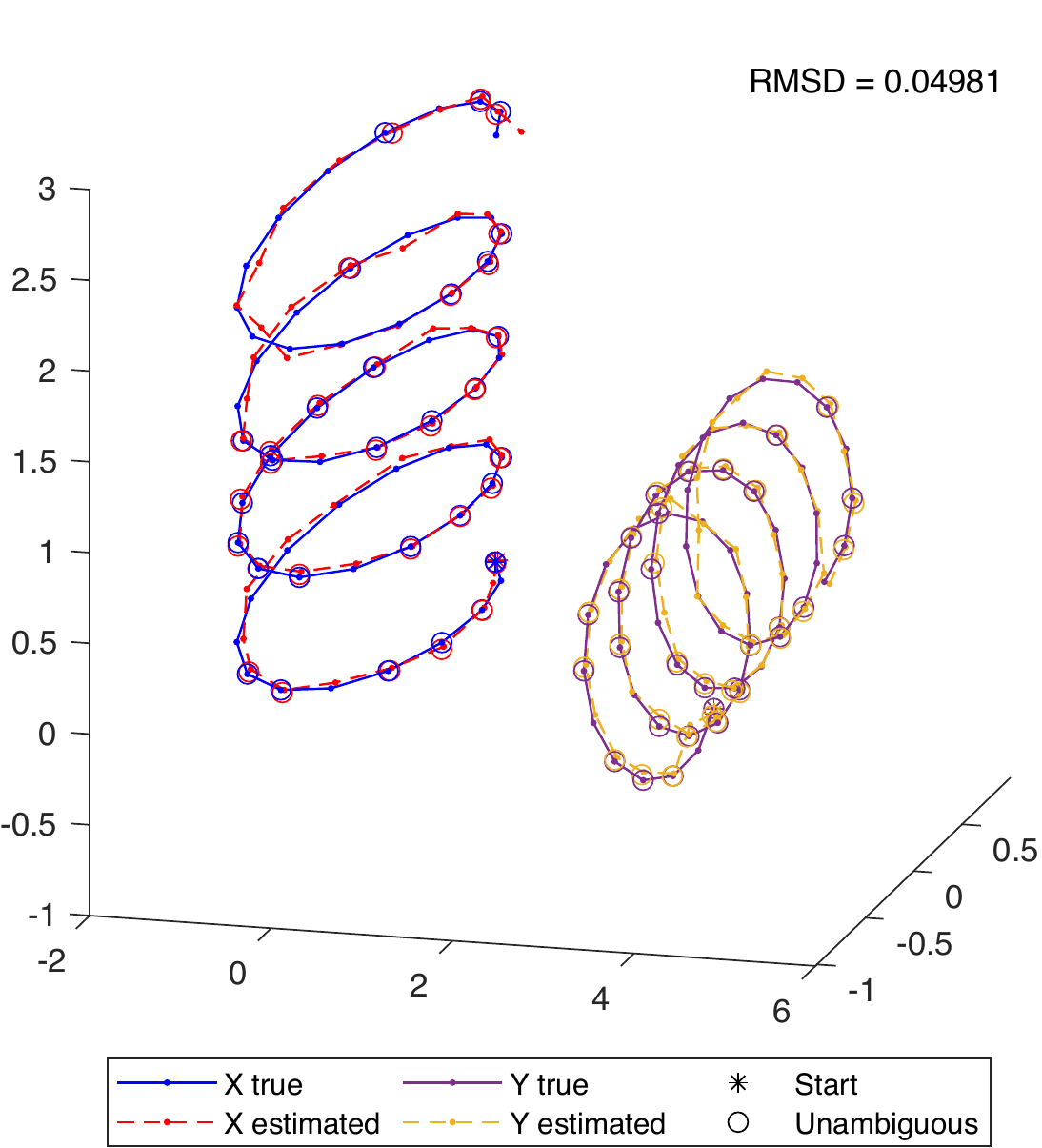}
    \caption{$\varepsilon=0.10$}
    \end{subfigure}
    ~
    \begin{subfigure}{0.3\textwidth}
    \includegraphics[width=\textwidth]{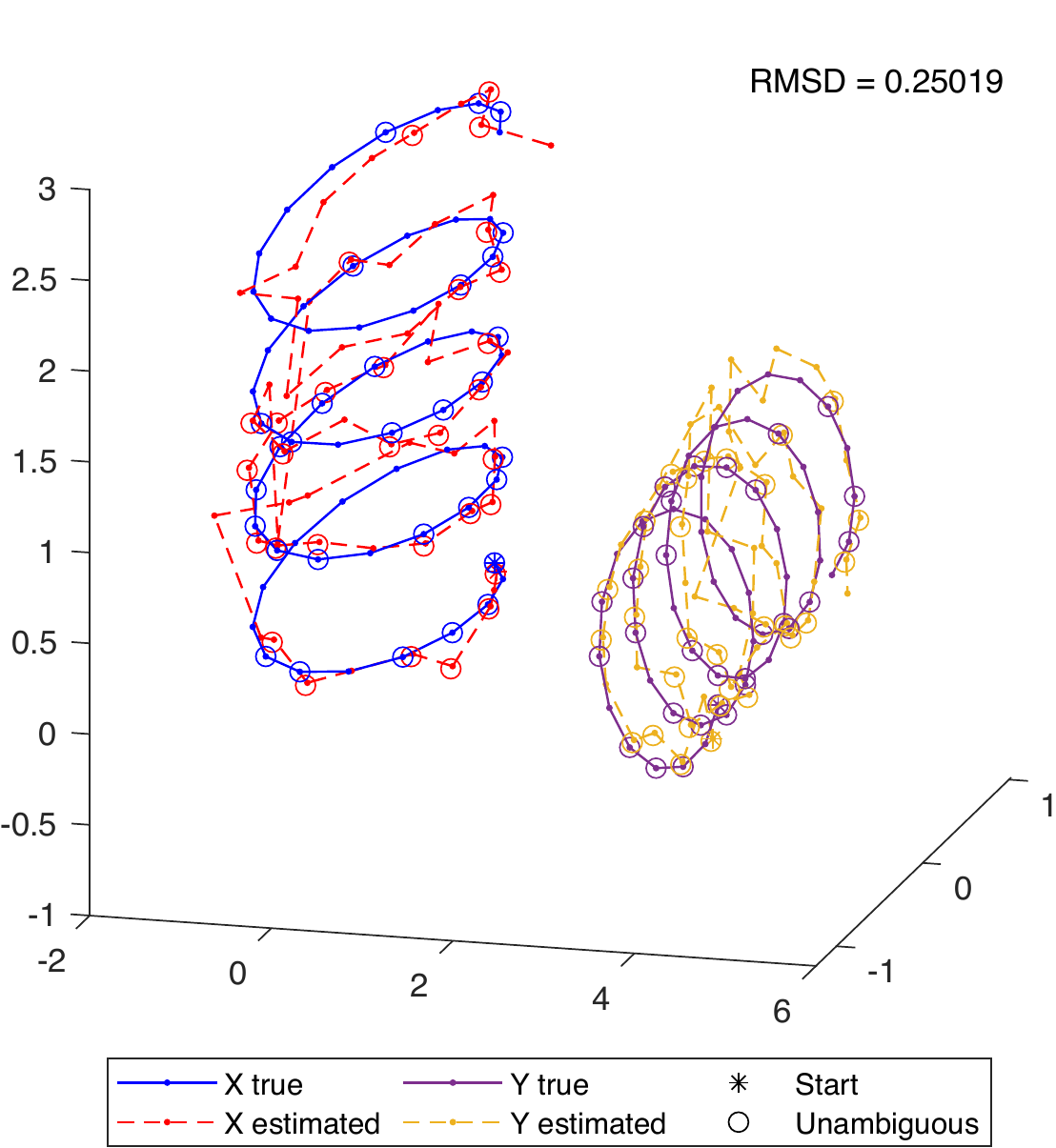}
    \caption{$\varepsilon=0.50$}
    \end{subfigure}
    ~
    \begin{subfigure}{0.3\textwidth}
    \includegraphics[width=\textwidth]{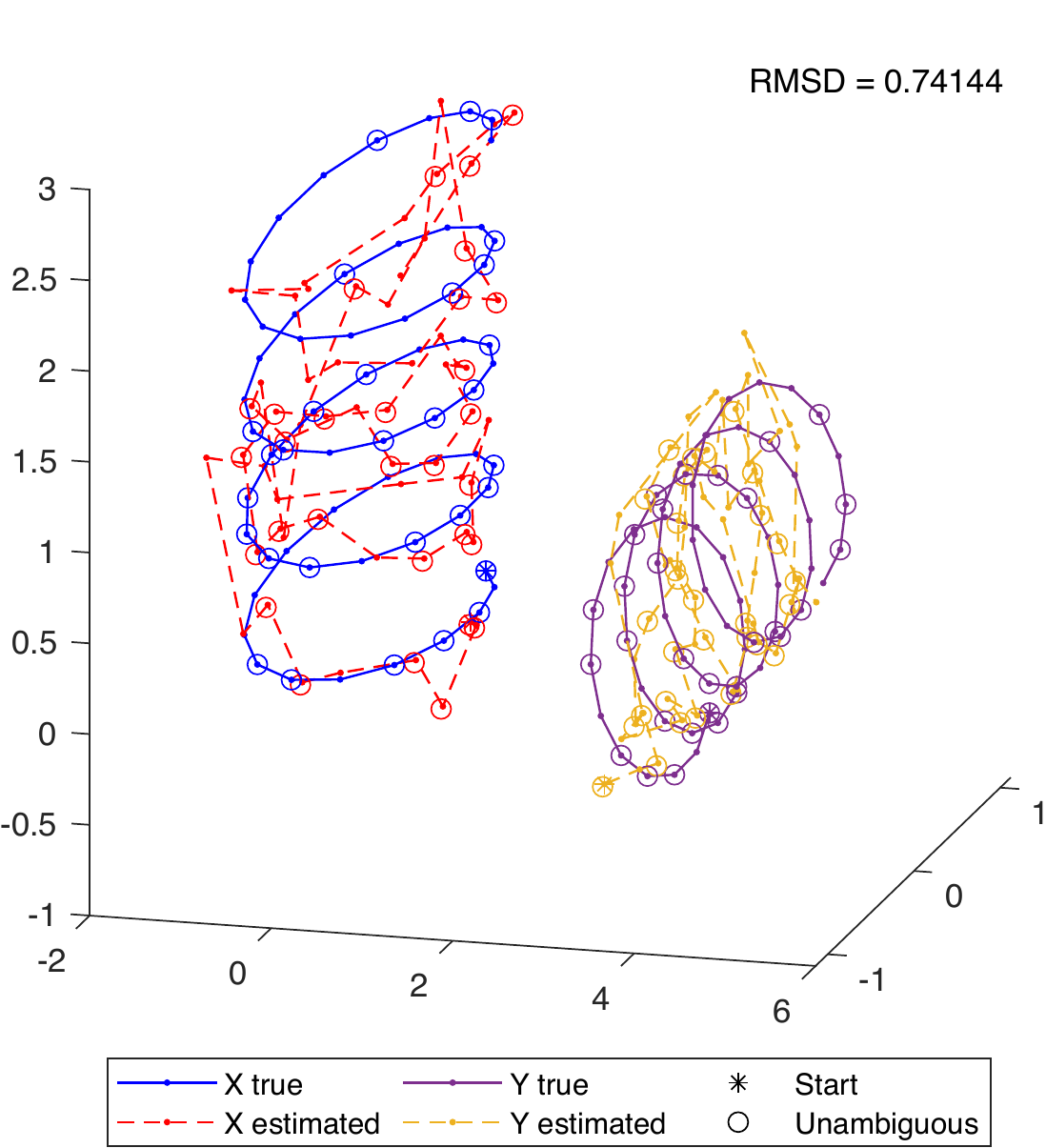}
    \caption{$\varepsilon=0.90$}
    \end{subfigure}
    \centering
    \caption{Examples of reconstructions for varying noise levels, for a chromosome pair with 60 loci, out of which $50\%$ are ambiguous. Subfigures (a)--(c) show chromosomes simulated with Brownian motion (projected onto the $xy$-plane), whereas figure (d)--(e) show helix-shaped chromosomes.}
    \label{fig:examples_of_reconstructions}
\end{figure}

\begin{figure}[ht]
    \begin{subfigure}{0.4\textwidth}
    \includegraphics[width=\textwidth]{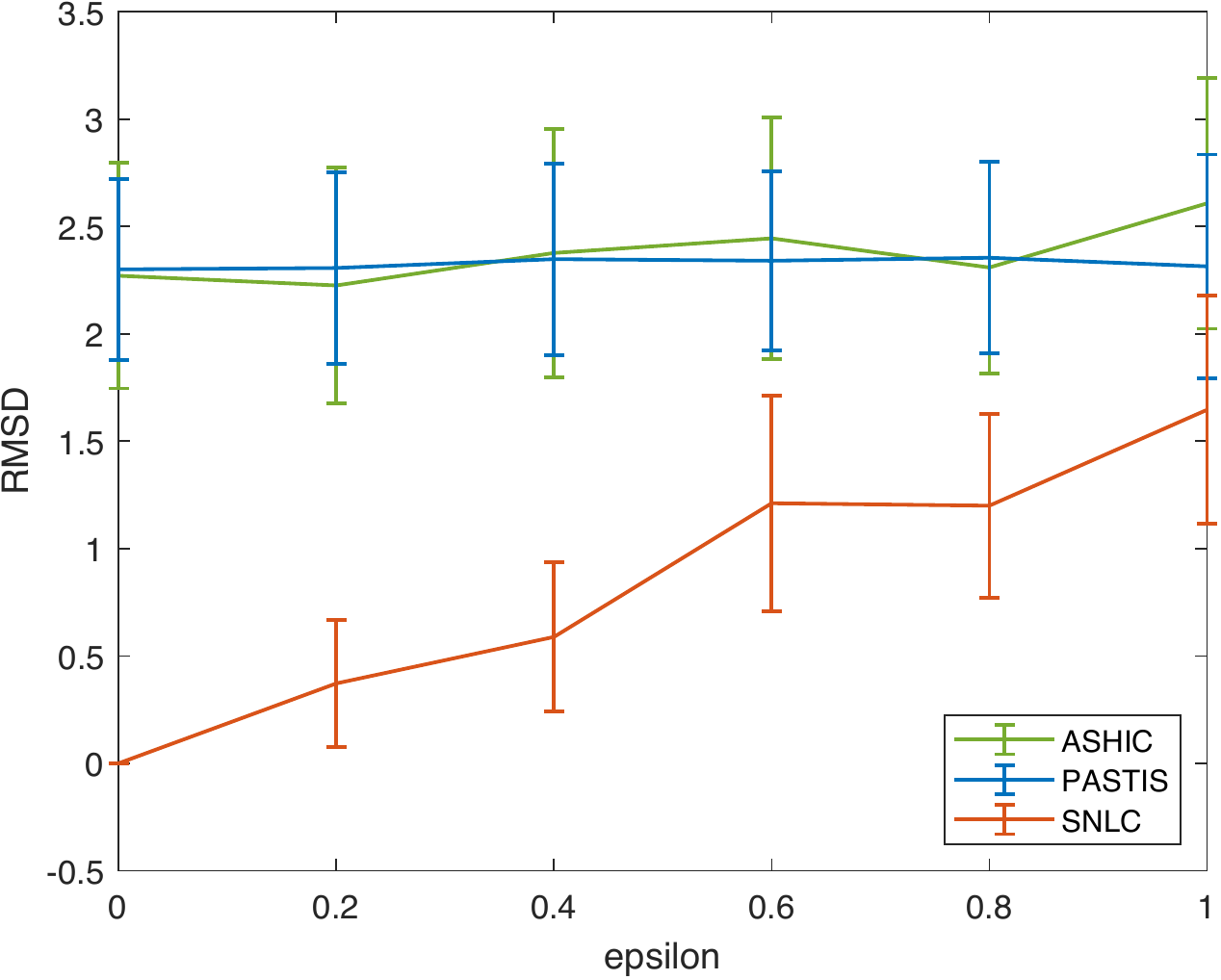} 
    \caption{25\% ambiguous loci}
    \end{subfigure}
    \hspace{2em}
    \begin{subfigure}{0.4\textwidth}
    \includegraphics[width=\textwidth]{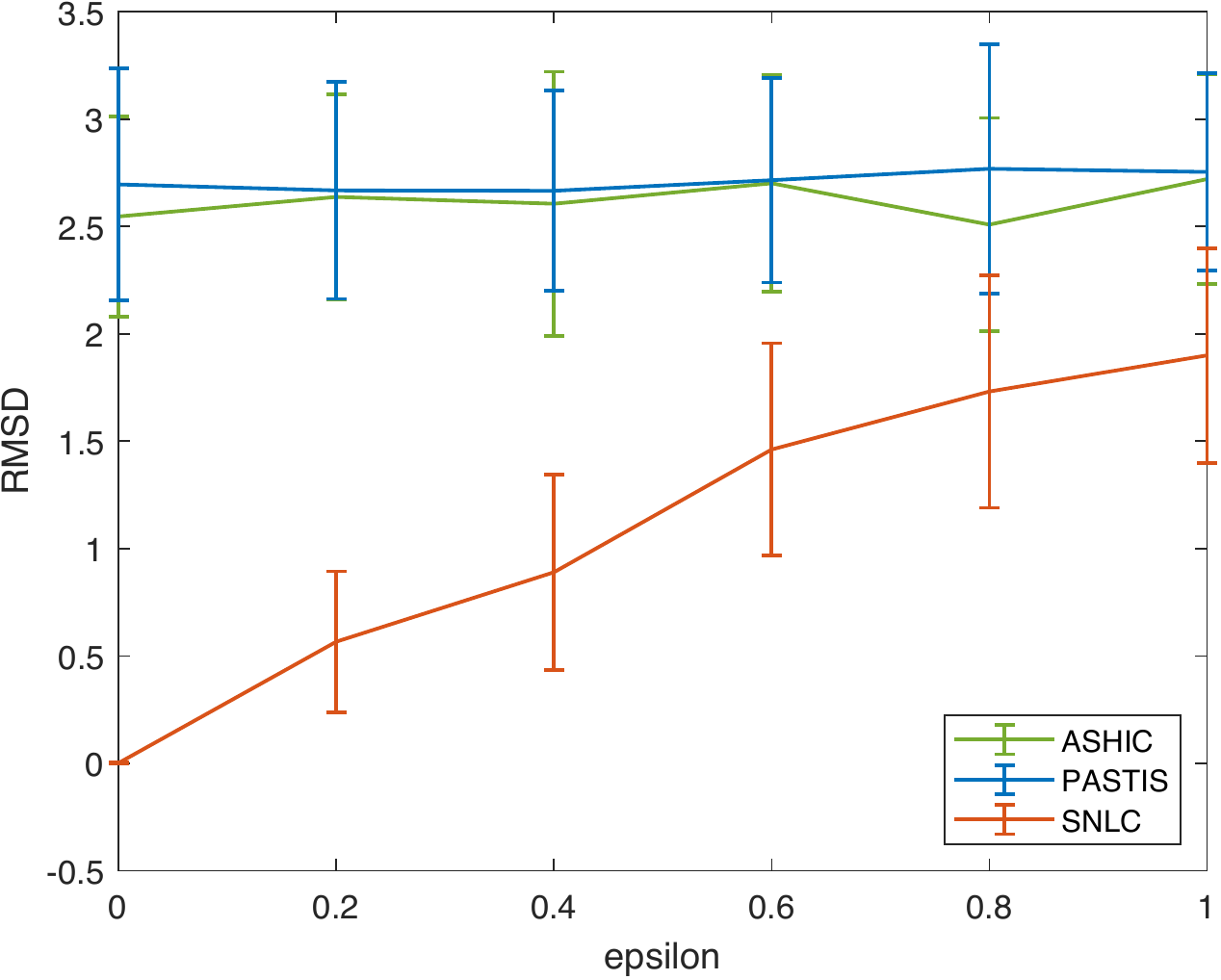} 
    \caption{50\% ambiguous loci}
    \end{subfigure}
    \centering
    \caption{Comparison between our reconstruction method, ASHIC and PASTIS. The values are the average over 20 runs, with the error bars showing the standard deviation. All experiments took place with 60 loci, with varying levels of noise, as well as varying numbers of ambiguous loci, uniformly randomly distributed over the chromosomes.}
    \label{fig:comparison_rmsd_vs_noise}
\end{figure}

\subsection{Experimentally obtained data}
\label{subsec:real_dataset}
We compute SNLC reconstructions based on the real dataset explored in \cite{cauer2019inferring}, which is obtained from Hi-C experiments on the X chromosomes in the Patski (BL6xSpretus) cell line. The data has been recorded at a resolution of 500 kb, which corresponds to 343 bead pairs in our model.

For some of these pairs, no or only very low contact counts have been recorded. Since such low contact counts are susceptible to high uncertainty and can be assumed to be a consequence of experimental errors, we exclude the $47$ loci with the lowest total contact counts from the analysis. To select the cutoff, the loci are sorted according to the total contact counts (see Figure~\ref{fig:real_data_preprocessing}~(a) in Appendix), and the ratios between the total contact counts for consecutive loci are computed. A peak for these ratios is observed at the 47th contact count, as shown in Figure~\ref{fig:real_data_preprocessing}~(b) in Appendix. 
After applying this filter, we obtain a dataset with $296$ loci. 
Out of these, we consider as ambiguous all loci $i$ for which less than $40\%$ of the total contact count comes from contacts where $x_i$ and $y_i$ were not distinguishable. These proportions for all loci are shown in Figure~\ref{fig:real_data_preprocessing}~(c) in Appendix. For the Patski dataset, we obtain 46 ambiguous loci and 250 unambiguous loci in this way.

In the Patski dataset, a locus can simultaneously participate in unambiguous, partially ambiguous and ambiguous contacts. To obtain the setting of our paper where loci are partitioned into unambiguous or ambiguous, we reassign the contacts according to whether a locus is unambiguous or ambiguous. Our reassignment method is motivated by the assignment of haplotype to unphased Hi-C reads in~\cite{lindsly2021functional}. The exact formulas are given in Appendix.

The reconstruction obtained via SNLC can be found in Figure~\ref{fig:real_reconstructions} (a).   The logarithmic heatmaps for contact count matrices for original data and the SNLC reconstruction are shown in Figure~\ref{fig:reconstructed_contacts}.

It was discovered in~\cite{deng2015bipartite} that the inactive homolog 
in the Patski X chromosome pair has a bipartite structure, consisting of two superdomains with frequent intra-chromosome contacts within the superdomains and a boundary region between the two superdomains. The active homolog 
does not exhibit the same behaviour. The boundary region on the inactive X chromosome is centered at 72.8-72.9 MB~\cite{deng2015bipartite} which at the 500 kB resolution corresponds to the bead 146~\cite{cauer2019inferring}.  We show in Figure~\ref{fig:real_reconstructions} (b) that the two chromosomes reconstructed using SNLC exhibit this structure by computing the bipartite index for the respective homologs as in~\cite{cauer2019inferring,deng2015bipartite}. We recall that, in the setting of a single chromosome with beads $z_1,\ldots,z_n\in\mathbb{R}^3$, the bipartite index is defined as the ratio of intra-superdomain to inter-superdomain contacts in the reconstruction:
\begin{equation*}
  BI(h) =
  \frac{\tfrac{1}{h^{2}}\sum_{i=1}^{h} \sum_{j=1}^{h} \frac{1}{\Vert z_i-z_j\Vert^2}+\tfrac{1}{(n-h)^{2}}\sum_{i=h+1}^{n} \sum_{j=h+1}^{n}  \frac{1}{\Vert z_i-z_j\Vert^2}}
  {\tfrac{2}{h(n-h)}\sum_{i=1}^{h} \sum_{j=h+1}^{n}  \frac{1}{\Vert z_i-z_j\Vert^2}}.
\end{equation*}

\begin{figure}[ht]
    \centering
    \begin{subfigure}{0.4\textwidth}
    \includegraphics[width=\textwidth]{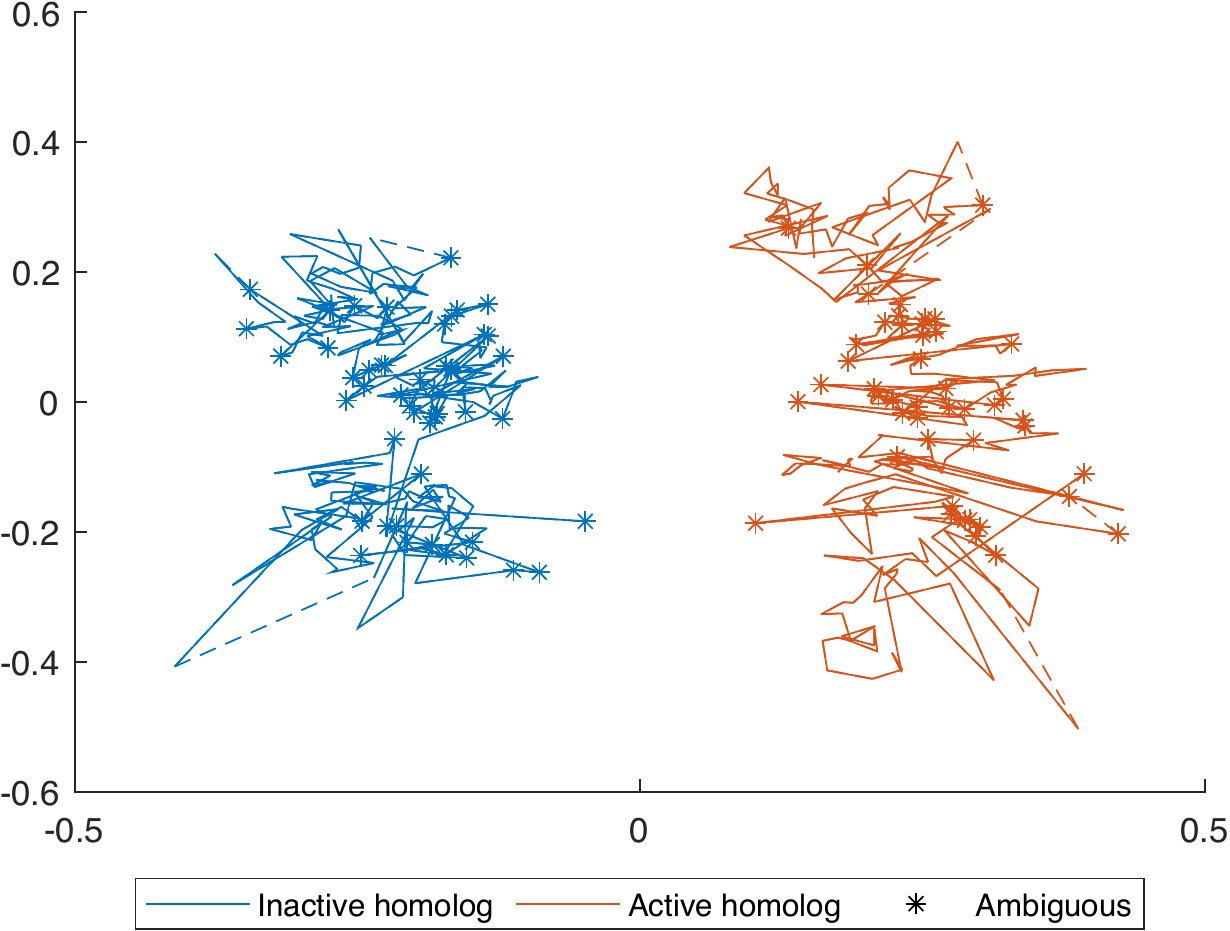}
    \caption{}
    \end{subfigure}
    \hspace{2em}
    \begin{subfigure}{0.4\textwidth}
    \vspace{0.5em}
    \includegraphics[width=\textwidth]{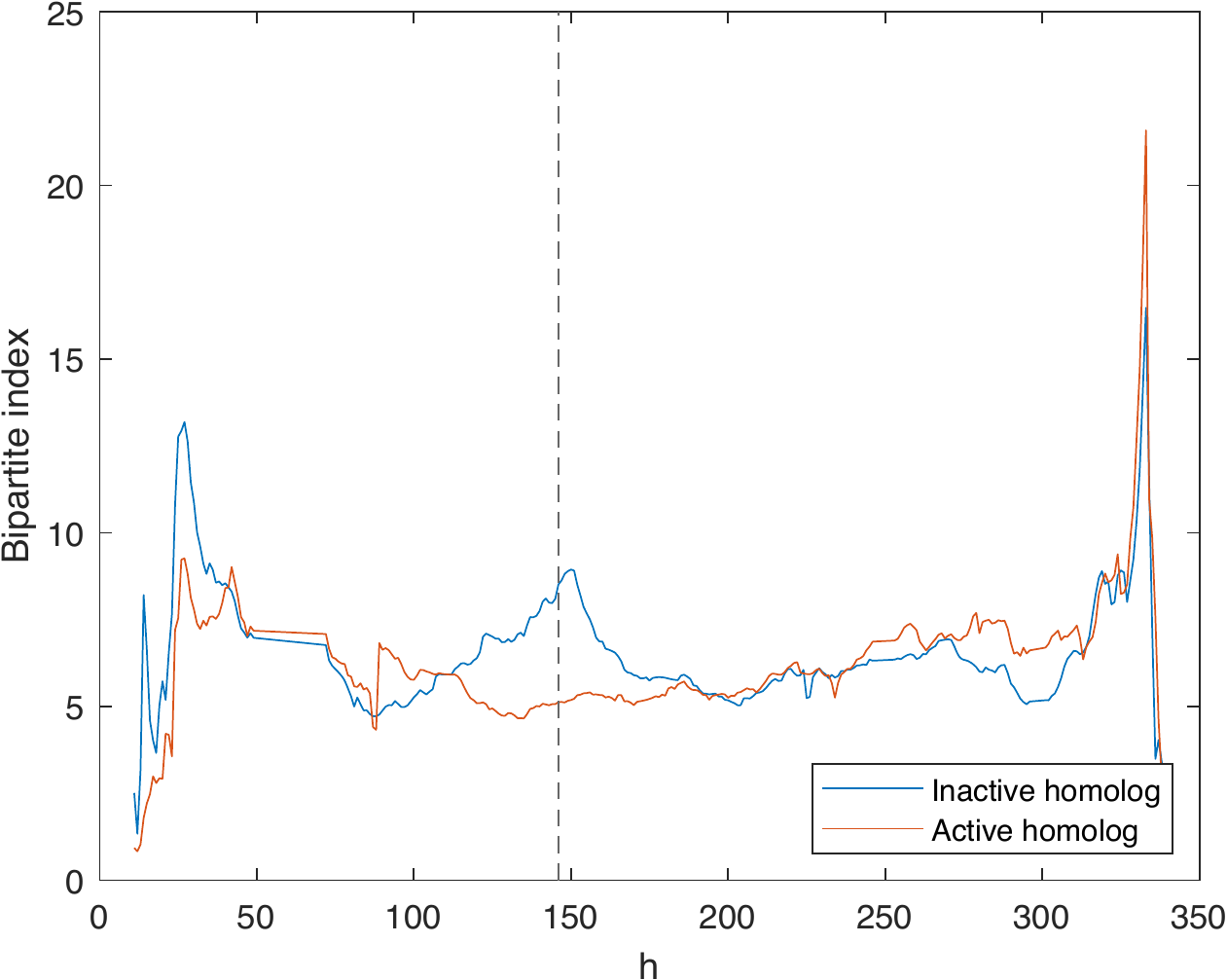}
    \caption{}
    \end{subfigure}
    
    \centering
    \caption{(a) Reconstruction from a real dataset using our reconstruction method. A dashed line between two beads is used to indicate that there is one or more beads between them, for which we have not given an estimation (due to low contact counts). (b) Bipartite index for the reconstructed chromosomes. The dashed vertical line indicates the known hinge point at locus 146.
}
    \label{fig:real_reconstructions}
\end{figure}

\section{Discussion} \label{discussion}

In this article we study the finite identifiability of 3D genome reconstruction from contact counts under the model where the distances $d_{i,j}$ and contact counts $c_{i,j}$ between two beads $i$ and $j$ follow the power law dependency $c_{i,j} = d_{i,j}^{\alpha}$ for a conversion factor $\alpha < 0$.
We show that if at least six beads are unambiguous, then the locations of the rest of the beads can be finitely identified from partially ambiguous contact counts for rational $\alpha$ satisfying $\alpha <0$ or $\alpha > 2$. In the fully ambiguous setting, we prove finite identifiability for $\alpha=-2$, given ambiguous contact counts for at least 12 pairs of beads. From~\cite{belyaeva2021identifying} it is known that finite identifiability does not hold in the fully ambiguous setting for $\alpha=2$. It is an open question whether finite identifiability of 3D genome reconstruction holds for other $\alpha \in \RR \backslash \{-2,2\}$ in the fully ambiguous setting and for rational $\alpha \in (0,2]$ in the partially ambiguous setting. We conjecture that in the partially ambiguous setting seven unambiguous loci guarantee unique identifiability of the 3D reconstruction for rational $\alpha <0$  or $\alpha > 2$. When $\alpha=-2$, then one approach to studying the unique identifiability might be via the degree of a parametrized family of algebraic varieties.

After establishing the identifiability, we suggest a reconstruction method for the partially ambiguous setting with $\alpha=-2$ that combines semidefinite programming, homotopy continuation in numerical algebraic geometry, local optimization and clustering. To speed up the homotopy continuation based part, we observe that the parametrized system of polynomial equations corresponding to six unambiguous beads has 40 pairs of complex solutions and we trace one path for each orbit. It is an open question to prove that for sufficiently general parameters the system has 40 pairs of complex solution. This question again reduces to studying the degree of a family of algebraic varieties. While our goal is to highlight the potential of our method, one could further regularize its output and use interpolation for the beads that are far away from the neighboring beads. A future research direction is to explore whether numerical algebraic geometry or semidefinite programming based methods can be proposed also for other conversion factors $\alpha < 0$.

\subsection*{Acknowledgments}
We thank Anastasiya Belyaeva, Gesine Cauer, AmirHossein Sadegemanesh, Luca Sodomaco, and Caroline Uhler for very helpful discussions and answers to our questions. 

\subsection*{Declarations}
Oskar Henriksson and Kaie Kubjas were partially supported by the Academy of Finland Grant No.~323416. Oskar Henriksson was also partially funded by the Novo Nordisk project with
grant reference number NNF20OC0065582.

The Patski dataset analyzed in subsection~\ref{subsec:real_dataset} comes from the third-party repository \url{https://noble.gs.washington.edu/proj/diploid-pastis/}, and is based on the dataset GSE68992 from the Gene Expression Omnibus, available at \url{https://www.ncbi.nlm.nih.gov/geo/query/acc.cgi?acc=GSE68992}.

The code used for generating the synthetic data discussed in subsection~\ref{subsec:synthetic} is available in the GitHub repository 
\url{https://github.com/kaiekubjas/3D-genome-reconstruction-from-partially-phased-HiC-data}. This repository also contains the code used for the computations referred to in the discussion preceeding Conjecture~\ref{conj:unique_identifiability_alpha=2}, and in the proof of Theorem~\ref{theorem:ambiguous-noiseless-finite-identifiability}.

This version of the article has been accepted for publication, after peer review 
but is not the Version of Record and does not reflect post-acceptance improvements, or any
corrections. The Version of Record is available online at: \url{http://dx.doi.org/10.1007/s11538-024-01263-7}.

\newlength{\bibitemsep}\setlength{\bibitemsep}{.2\baselineskip plus .05\baselineskip minus .05\baselineskip}
\newlength{\bibparskip}\setlength{\bibparskip}{0pt}
\let\oldthebibliography\thebibliography
\renewcommand\thebibliography[1]{
  \oldthebibliography{#1}
  \setlength{\parskip}{2\bibitemsep}
  \setlength{\itemsep}{\bibparskip}
}

\bibliographystyle{plain}

\bigskip \medskip
\noindent {\bf Authors' addresses:}

\noindent 
Diego Cifuentes, Georgia Institute of Technology\hfill{\tt diego.cifuentes@isye.gatech.edu}\\
Jan Draisma, University of Bern \hfill{\tt jan.draisma@math.unibe.ch}\\
Oskar Henriksson, University of Copenhagen \hfill {\tt oskar.henriksson@math.ku.dk}\\
Annachiara Korchmaros, University of Leipzig\hfill {\tt annachiara@bioinf.uni-leipzig.de}\\
Kaie Kubjas, Aalto University \hfill {\tt kaie.kubjas@aalto.fi}

\newpage

\appendix
\section*{Appendix} \label{supplementary}
In this part of the paper, we include additional details and figures for the experiments in section~\ref{sec:experiments}. 

Figure~\ref{fig:reconstructions_without_local} shows reconstructions of the same chromosomes as displayed in Figure~\ref{fig:examples_of_reconstructions} but without the local optimization step, indicating that semidefinite programming, numerical algebraic geometry and clustering alone can recover the main features of the 3D structure. 

\begin{figure}[h]
    \begin{subfigure}{0.3\textwidth}
    \includegraphics[width=\textwidth]{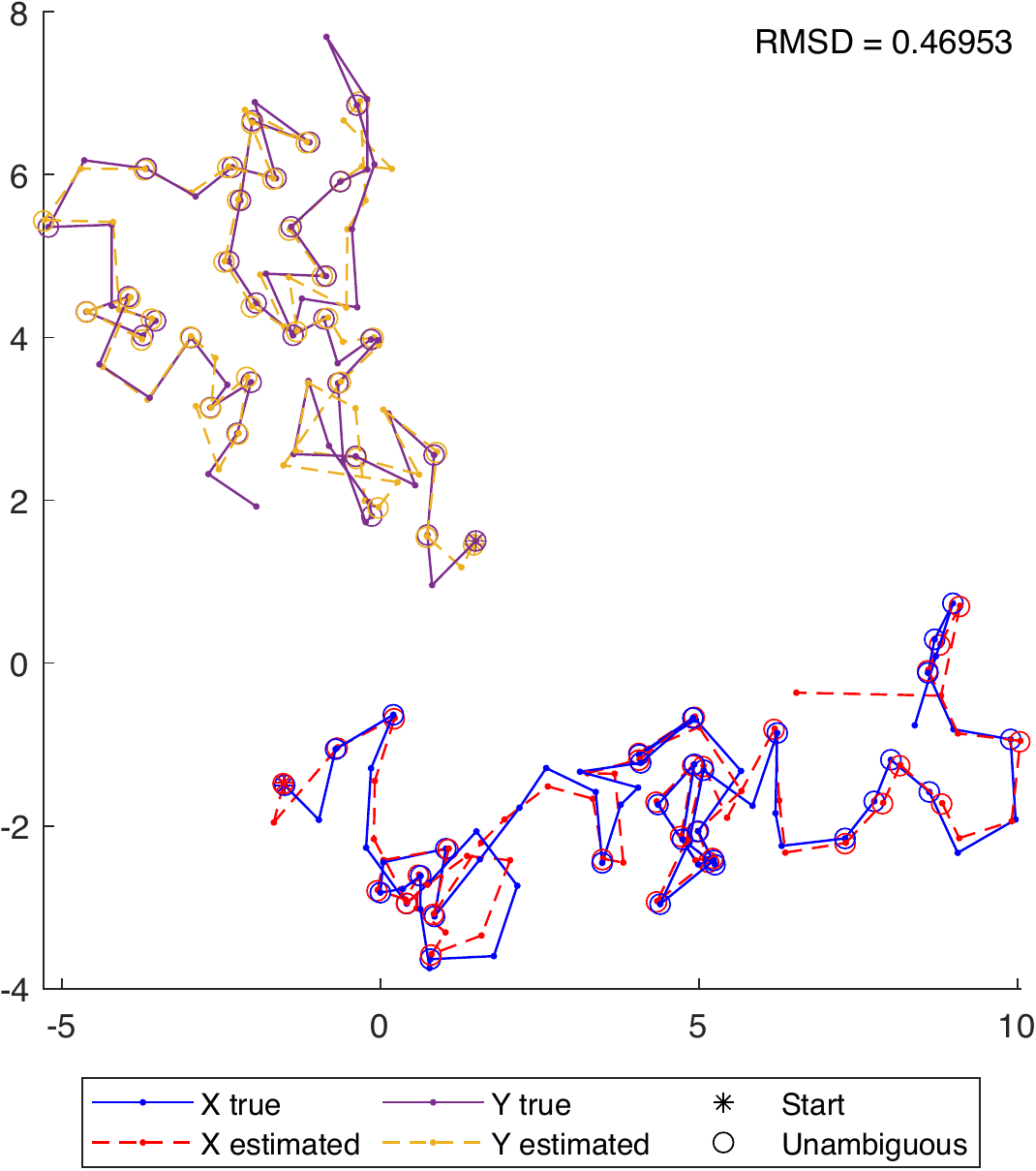}
    \caption{$\varepsilon=0.10$}
    \end{subfigure}
    ~
    \begin{subfigure}{0.3\textwidth}
    \includegraphics[width=\textwidth]{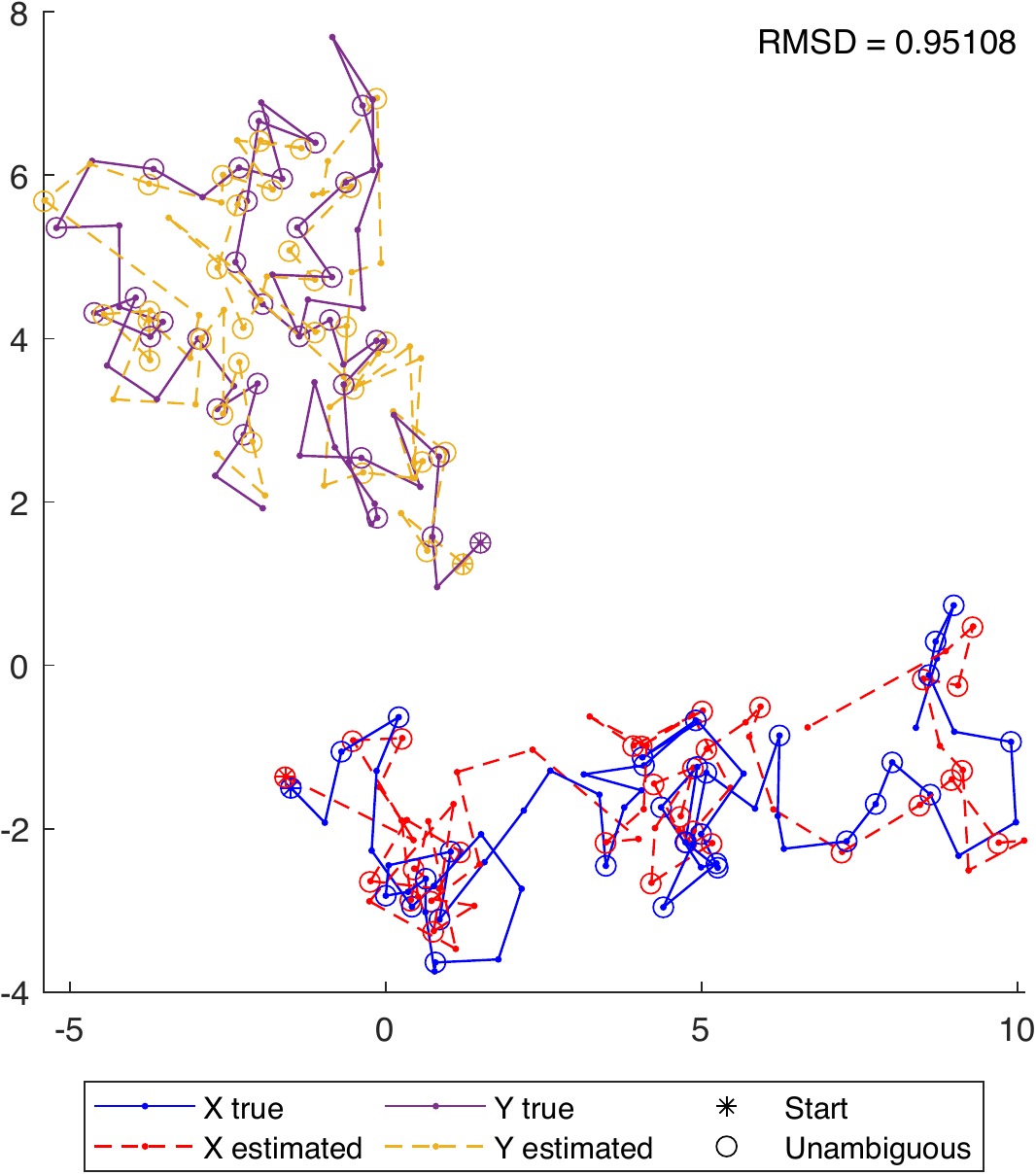}
    \caption{$\varepsilon=0.50$}
    \end{subfigure}
    ~
    \begin{subfigure}{0.3\textwidth}
    \includegraphics[width=\textwidth]{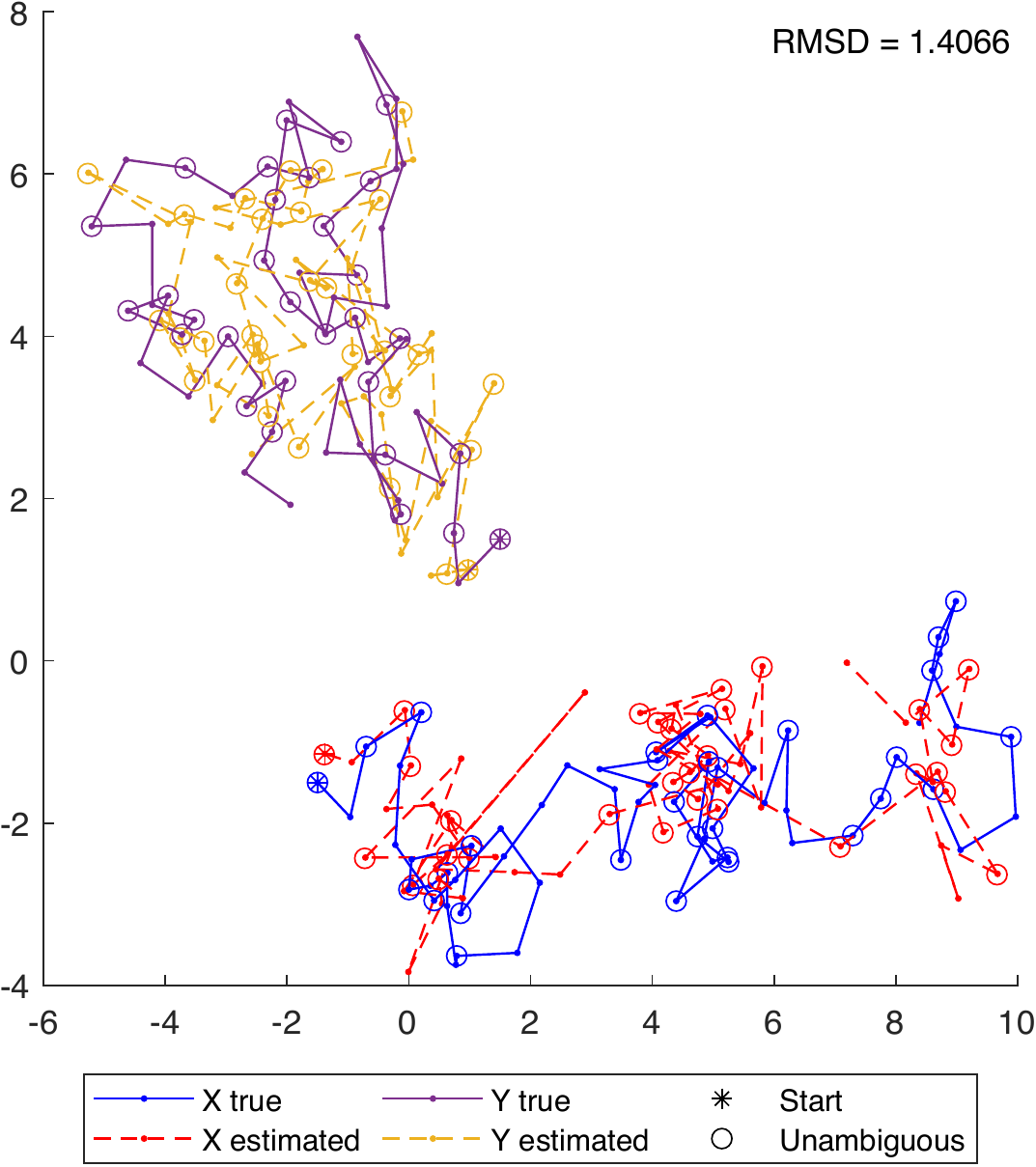}
    \caption{$\varepsilon=0.90$}
    \end{subfigure}
    \centering
    \caption{SNLC reconstructions, without the local optimization step.}
    \label{fig:reconstructions_without_local}
\end{figure}

Figure~\ref{fig:real_data_preprocessing} illustrates the preprocessing steps of the real dataset where loci with low contact counts are removed and the rest of the loci are partitioned into unambiguous and ambiguous. The total contact count for the $i$th locus is defined as the sum of all contacts where it participates:
\[T(i)\hspace{-3pt}=\hspace{-3pt}{\hspace{-5pt}}\sum_{j\in [n]} {\hspace{-6pt}}\left( c^A(i,j) \hspace{-2pt}+\hspace{-1pt}  c^P(i,j)  \hspace{-2pt} + \hspace{-1pt} c^P(i+n,j)\right){\hspace{-2pt}} +{\hspace{-8pt}} \sum_{j\in [2n]} {\hspace{-7pt}}\left(c^P(j,i)\hspace{-2pt}+\hspace{-2pt}c^U(i,j) \hspace{-2pt}+\hspace{-2pt} c^U(i \hspace{-1pt}+ \hspace{-1pt}n,j) \right).\]
Similarly, we define the unambiguity quotient as the proportion of $T(i)$ that consists of contacts where $x_i$ and $y_i$ could be distinguished:
\[\textit{UQ}(i)=\frac{1}{T(i)}\left(\sum_{j\in [n]} \left( c^P(i,j) + c^P(i+n,j)\right)  \hspace{-1pt} + \hspace{-6pt} \sum_{j\in [2n]} \left(c^U(i,j) + c^U(i + n,j) \right)\right).\]

\begin{figure}[h]
    \begin{subfigure}{0.3\textwidth}
    \includegraphics[width=\textwidth]{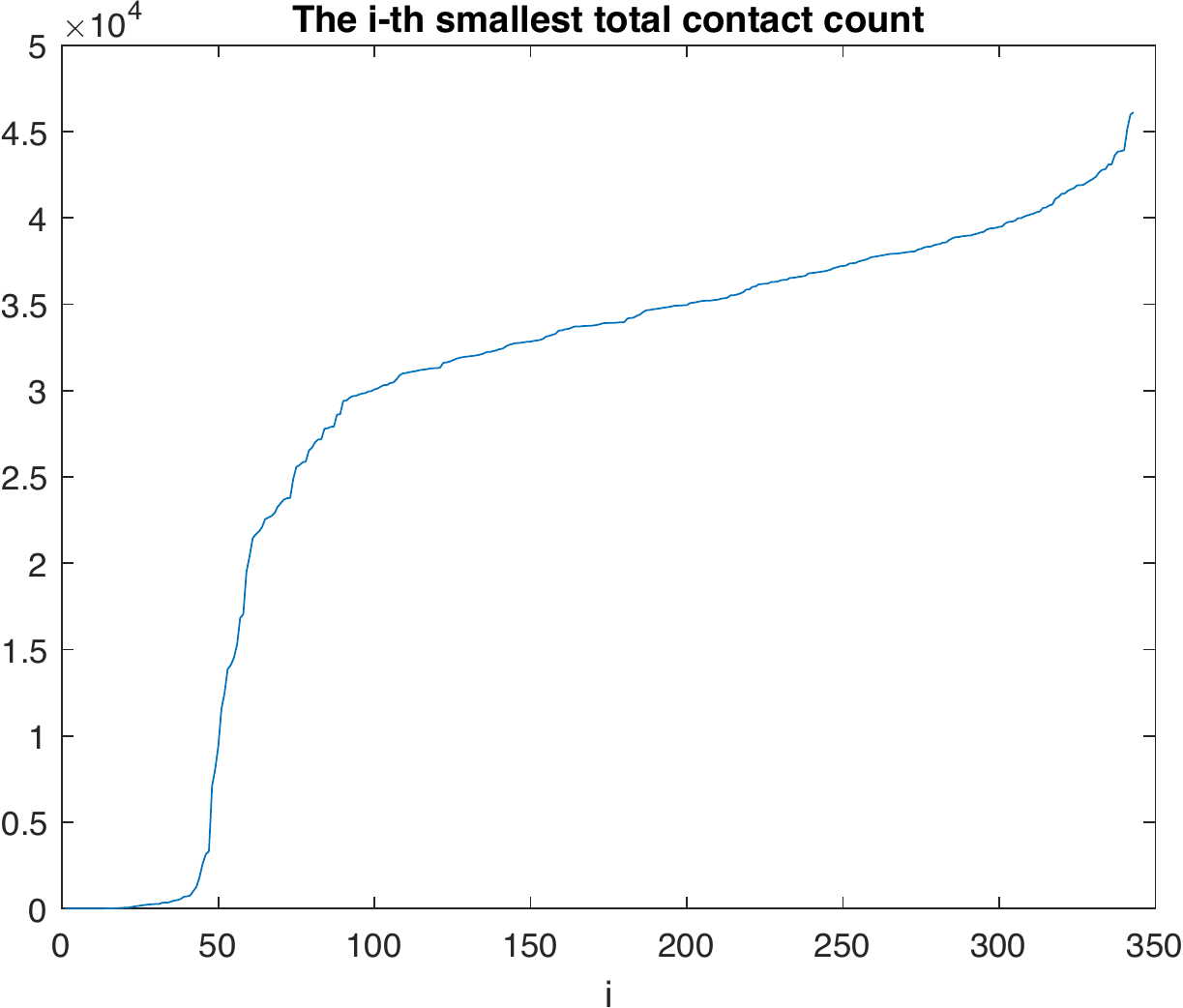}
    \caption{}
    \end{subfigure}
    ~
    \begin{subfigure}{0.3\textwidth}
    \includegraphics[width=\textwidth]{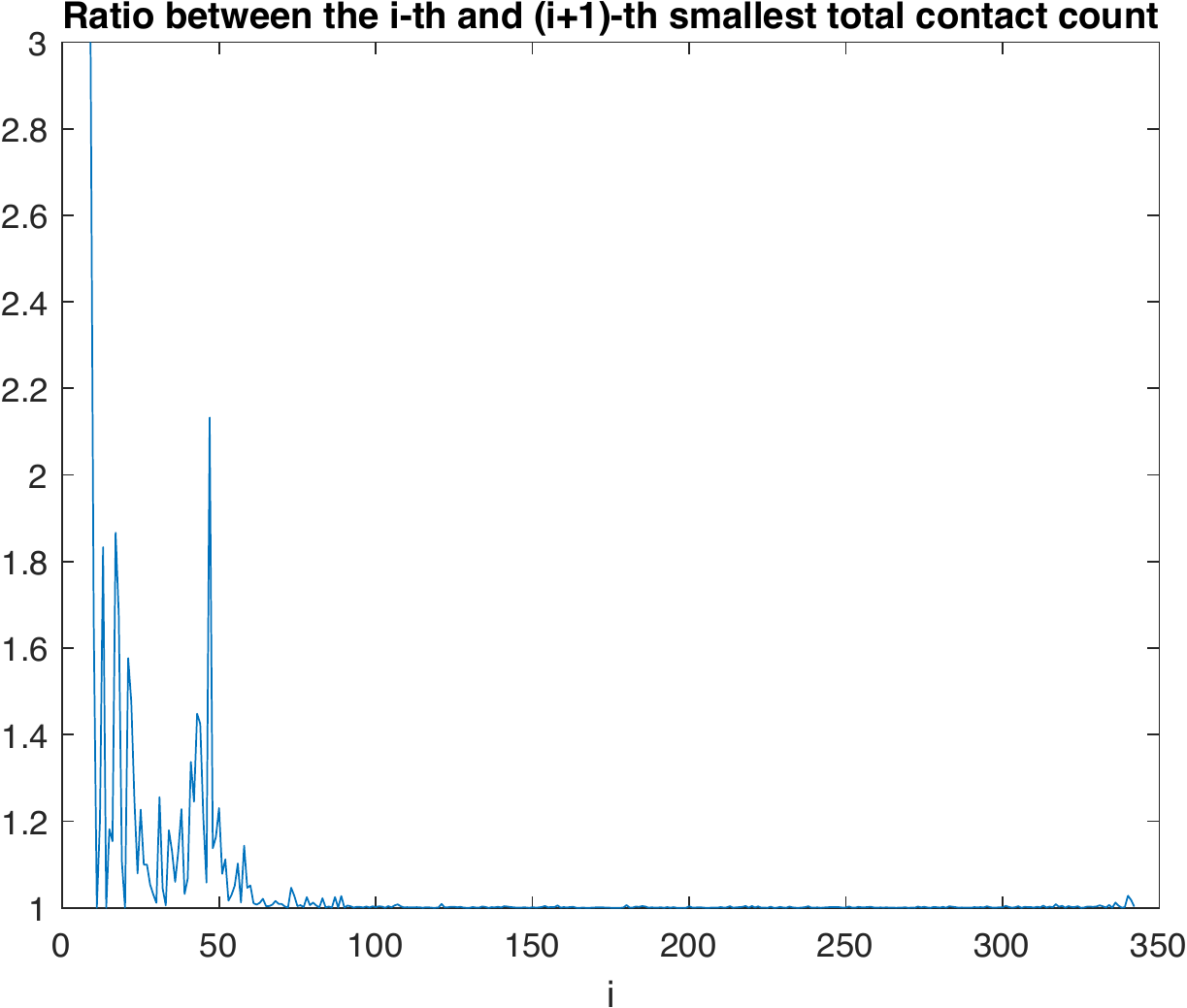}
    \caption{}
    \end{subfigure}
    ~
    \begin{subfigure}{0.3\textwidth}
    \includegraphics[width=\textwidth]{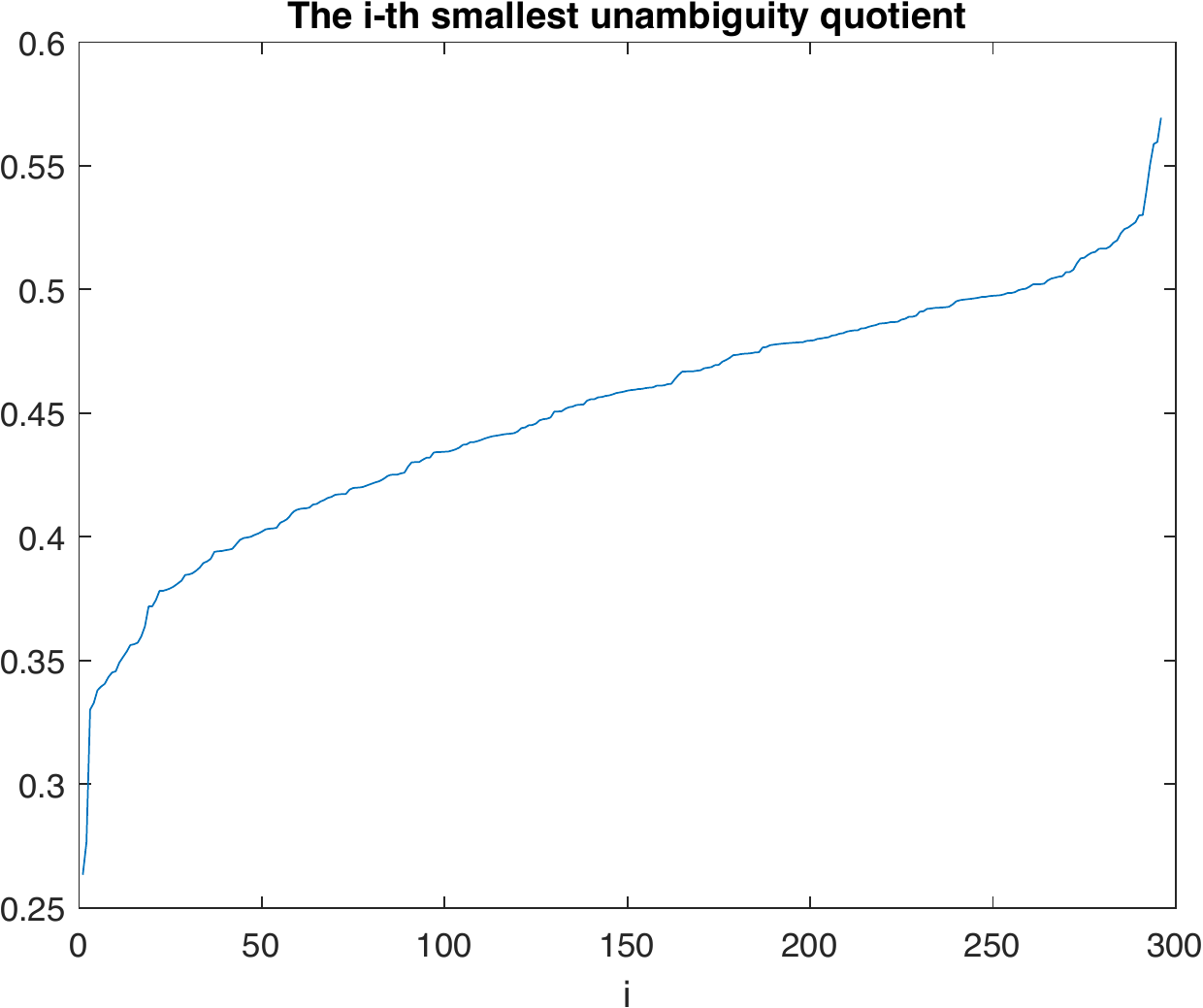}
    \caption{}
    \end{subfigure}
    \centering
    \caption{(a) Total contact counts sorted in increasing order. (b) Ratios between total contact counts. The peak corresponding to the ratio between the 48th and the 47th smallest count is used as a motivation for excluding the 47 loci with smallest total contact from the analysis. (c) Unambiguity quotients for each of the remaining 296 loci, sorted in increasing order. We consider a locus as ambiguous if this ratio is less than 0.4; otherwise, we consider it as unambiguous.}
    \label{fig:real_data_preprocessing}
\end{figure}

To obtain the setting of our paper where loci are partitioned into unambiguous or ambiguous, we reassign the contact counts of $\tilde{C}^U$ $\tilde{C}^P$ and $\tilde{C}^A$ of the Patski dataset according to whether a locus is unambiguous or ambiguous.
For $i,j\in U$, we define
\begin{flalign*}
&c^U_{i,j} \hspace{-2pt}= \hspace{-2pt}\tilde{c}^U_{i,j}  \hspace{-2pt}+ \hspace{-1pt}\tilde{c}^P_{i,j} \frac{\tilde{c}^U_{i,j}}{\tilde{c}^U_{i,j} \hspace{-2pt}+ \hspace{-1pt}\tilde{c}^U_{i,j+n}} \hspace{-1pt} +  \hspace{-1pt}\tilde{c}^P_{j,i}\frac{\tilde{c}^U_{i,j}}{\tilde{c}^U_{i,j} \hspace{-1pt}+ \hspace{-1pt}\tilde{c}^U_{i+n,j}}  \hspace{-1pt}+ \hspace{-1pt} \tilde{c}^A_{i,j} \frac{\tilde{c}^U_{i,j}}{\tilde{c}^U_{i,j} \hspace{-3pt}+ \hspace{-1pt}\tilde{c}^U_{i,j+n} \hspace{-3pt}+ \hspace{-1pt} \hspace{-1pt}\tilde{c}^U_{i+n,j} \hspace{-2pt}+ \hspace{-1pt}\tilde{c}^U_{i+n,j+n}},\\[0.5em]
&c^U_{i,j+n}\hspace{-2pt}=\hspace{-2pt}\tilde{c}^U_{i,j+n} + \tilde{c}^P_{i,j} \frac{\tilde{c}^U_{i,j+n}}{\tilde{c}^U_{i,j}+\tilde{c}^U_{i,j+n}} + \tilde{c}^P_{j+n,i}\frac{\tilde{c}^U_{i,j+n}}{\tilde{c}^U_{i,j+n}+\tilde{c}^U_{i+n,j+n}}+ \\ 
&{\hspace{35pt}}+\tilde{c}^A_{i,j} \frac{\tilde{c}^U_{i,j+n}}{\tilde{c}^U_{i,j}+\tilde{c}^U_{i,j+n}+\tilde{c}^U_{i+n,j}+\tilde{c}^U_{i+n,j+n}},\\[0.5em]
&c^U_{i+n,j}\hspace{-2pt}=\hspace{-2pt}\tilde{c}^U_{i+n,j} + \tilde{c}^P_{i+n,j} \frac{\tilde{c}^U_{i+n,j}}{\tilde{c}^U_{i+n,j}+\tilde{c}^U_{i+n,j+n}} + \tilde{c}^P_{j,i}\frac{\tilde{c}^U_{i+n,j}}{\tilde{c}^U_{i,j}+\tilde{c}^U_{i+n,j}}+ \\ 
&{\hspace{35pt}}+\tilde{c}^A_{i,j} \frac{\tilde{c}^U_{i+n,j}}{\tilde{c}^U_{i,j}+\tilde{c}^U_{i,j+n}+\tilde{c}^U_{i+n,j}+\tilde{c}^U_{i+n,j+n}},\\[0.5em]
&c^U_{i+n,j+n}\hspace{-2pt}=\hspace{-2pt}\tilde{c}^U_{i+n,j+n} + \tilde{c}^P_{i+n,j} \frac{\tilde{c}^U_{i+n,j+n}}{\tilde{c}^U_{i+n,j}+\tilde{c}^U_{i+n,j+n}} + \tilde{c}^P_{j+n,i}\frac{\tilde{c}^U_{i+n,j+n}}{\tilde{c}^U_{i,j+n}+\tilde{c}^U_{i+n,j+n}} +\\
&{\hspace{47pt}}+\tilde{c}^A_{i,j} \frac{\tilde{c}^U_{i+n,j+n}}{\tilde{c}^U_{i,j}+\tilde{c}^U_{i,j+n}+\tilde{c}^U_{i+n,j}+\tilde{c}^U_{i+n,j+n}}.
\end{flalign*}
For $i\in U, j\in A$, we define
\begin{flalign*}
&c^P_{i,j}\hspace{-2pt}=\hspace{-2pt}\tilde{c}^U_{i,j} + \tilde{c}^U_{i,j+n} + \tilde{c}^P_{i,j} + \tilde{c}^P_{j,i}\frac{\tilde{c}^U_{i,j}}{\tilde{c}^U_{i,j}+\tilde{c}^U_{i+n,j}} + \tilde{c}^P_{j+n,i}\frac{\tilde{c}^U_{i,j+n}}{\tilde{c}^U_{i,j+n}+\tilde{c}^U_{i+n,j+n}} +\\ 
&{\hspace{25pt}}+\tilde{c}^A_{i,j} \frac{\tilde{c}^P_{i,j}}{\tilde{c}^P_{i,j}+\tilde{c}^P_{i+n,j}},\\[0.5em]
&c^P_{i+n,j}\hspace{-2pt}=\hspace{-2pt}\tilde{c}^U_{i+n,j} \hspace{-1pt}+\hspace{-1pt} \tilde{c}^U_{i+n,j+n} \hspace{-1pt}+\hspace{-1pt} \tilde{c}^P_{i+n,j} \hspace{-1pt}+\hspace{-1pt} \tilde{c}^P_{j,i}\frac{\tilde{c}^U_{i+n,j}}{\tilde{c}^U_{i,j}+\tilde{c}^U_{i+n,j}} \hspace{-1pt}+\hspace{-1pt} \tilde{c}^P_{j+n,i}\frac{\tilde{c}^U_{i+n,j+n}}{\tilde{c}^U_{i,j+n}\hspace{-1pt}+\hspace{-1pt}\tilde{c}^U_{i+n,j+n}} +\\
&{\hspace{35pt}}+\tilde{c}^A_{i,j} \frac{\tilde{c}^P_{i+n,j}}{\tilde{c}^P_{i,j}+\tilde{c}^P_{i+n,j}}.
\end{flalign*}
Finally, for $i,j\in A$, we define\\[-0.5em]
\begin{flalign*}
c^A_{i,j} = \tilde{c}^U_{i,j}+\tilde{c}^U_{i,j+n}+\tilde{c}^U_{i+n,j}+\tilde{c}^U_{i+n,j+n} + \tilde{c}^P_{i,j}+\tilde{c}^P_{i+n,j}+\tilde{c}^P_{j,i}+\tilde{c}^P_{j+n,i} + 
\tilde{c}^A_{i,j}.&&
\end{flalign*}

In Figure~\ref{fig:reconstructed_contacts} in Appendix, the experimental contact counts from the Patski dataset are compared with the contact counts from the SNLC reconstruction.

\begin{figure}[h]
    \centering
    \begin{subfigure}{0.4\textwidth}
    \includegraphics[width=\textwidth]{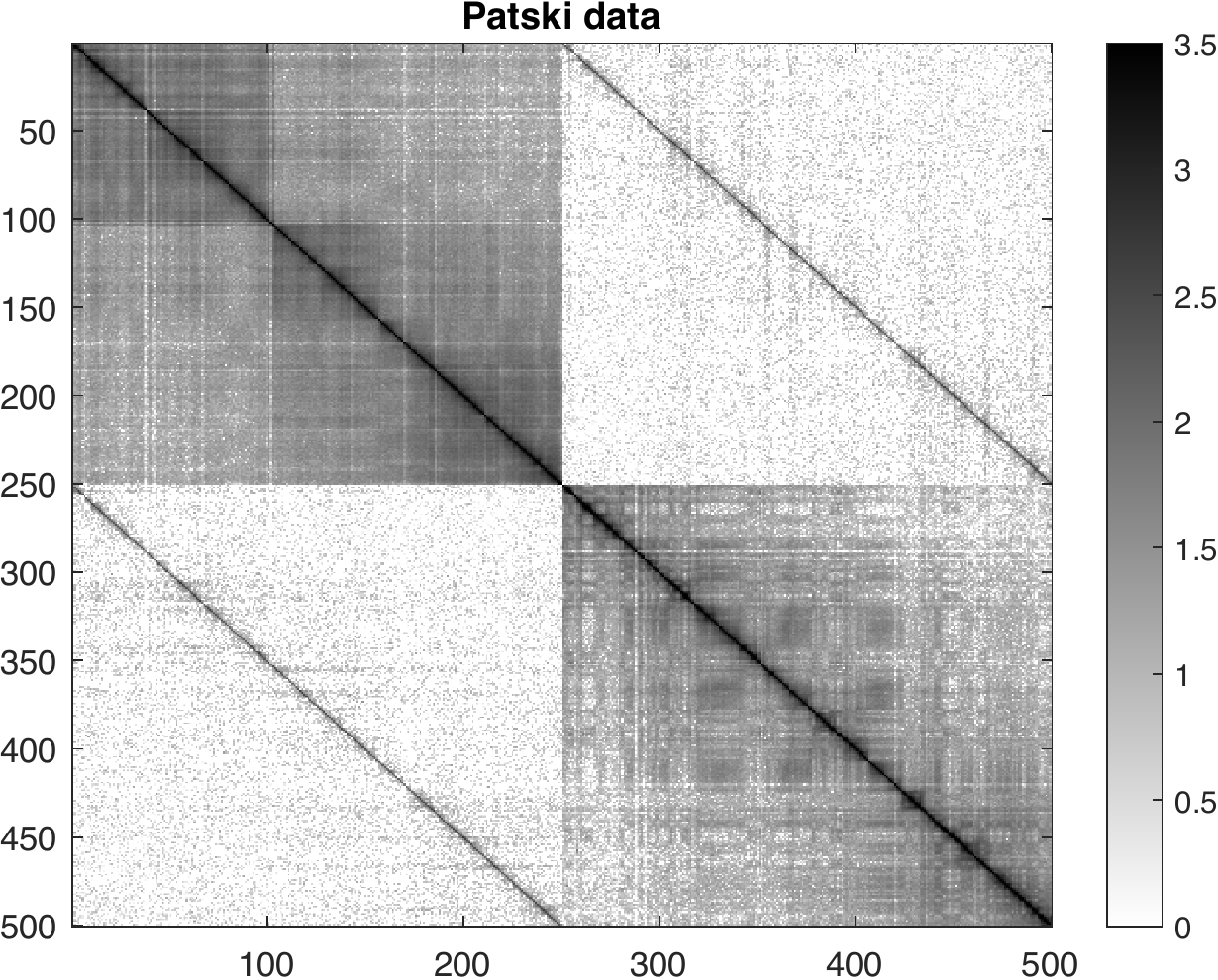}
    \caption{}
    \end{subfigure}
    \quad
    \begin{subfigure}{0.4\textwidth}
    \includegraphics[width=\textwidth]{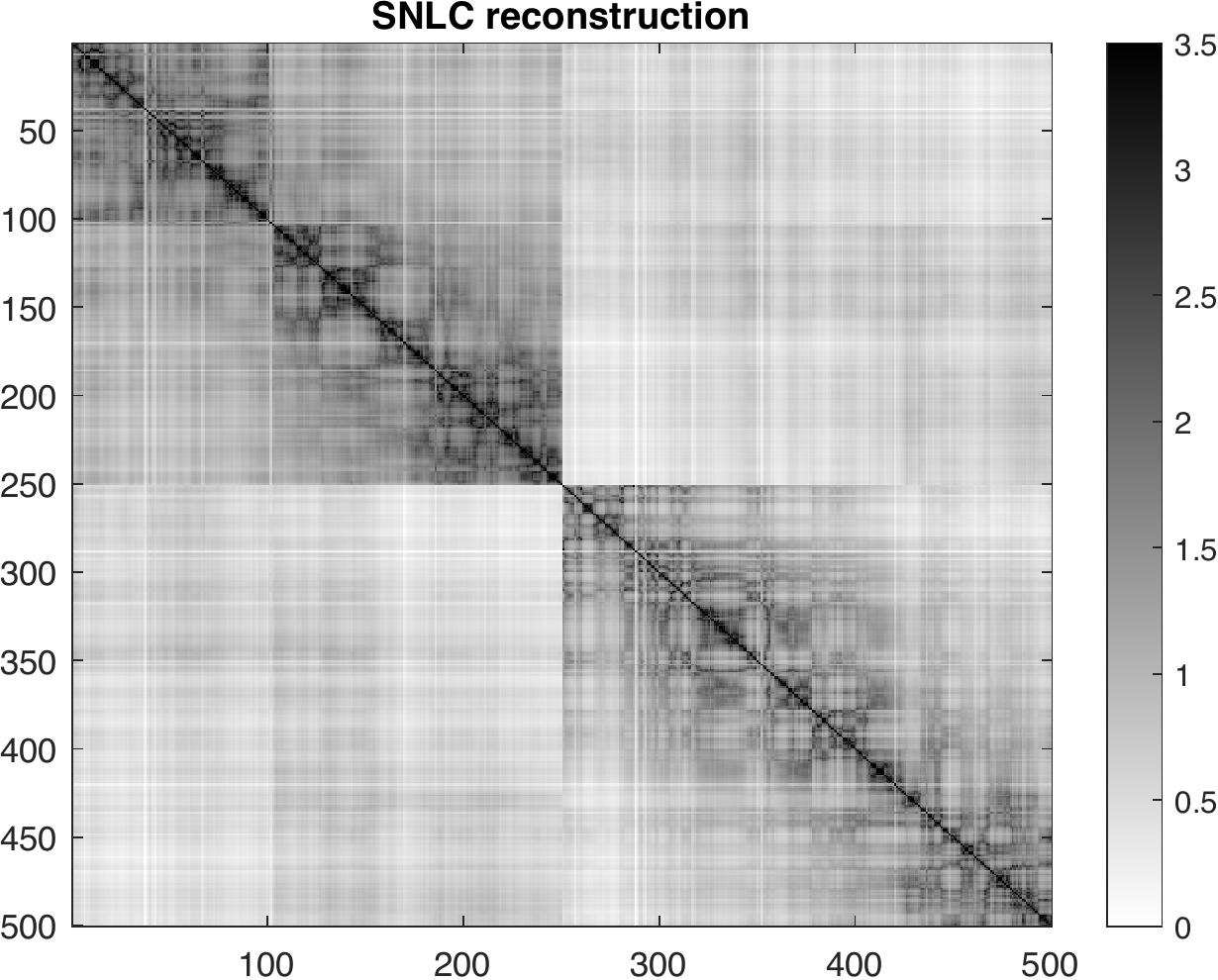}
    \caption{}
    \end{subfigure}\\[1em]
    \begin{subfigure}{0.4\textwidth}
    \includegraphics[width=\textwidth]{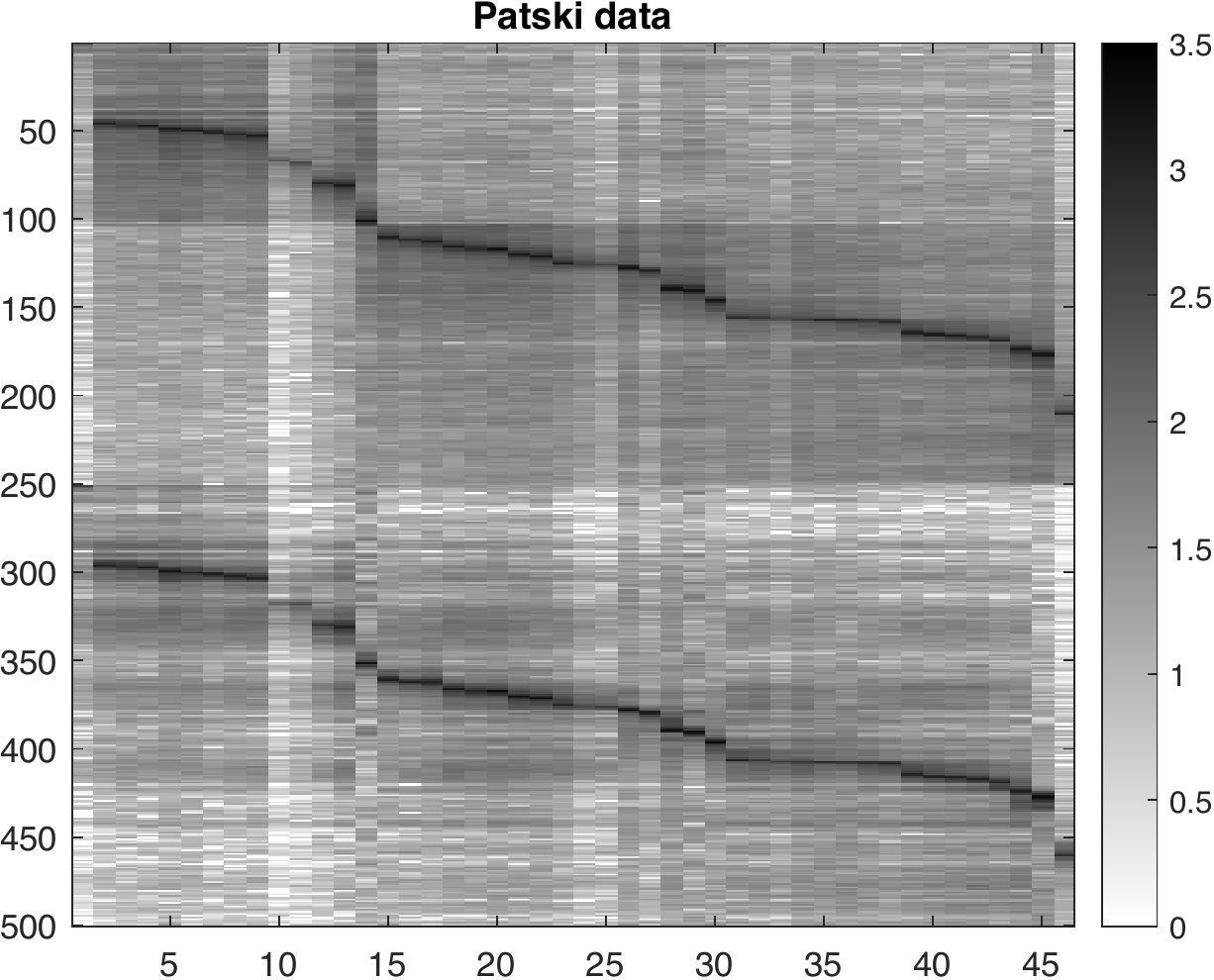}
    \caption{}
    \end{subfigure}
    \quad
    \begin{subfigure}{0.4\textwidth}
    \includegraphics[width=\textwidth]{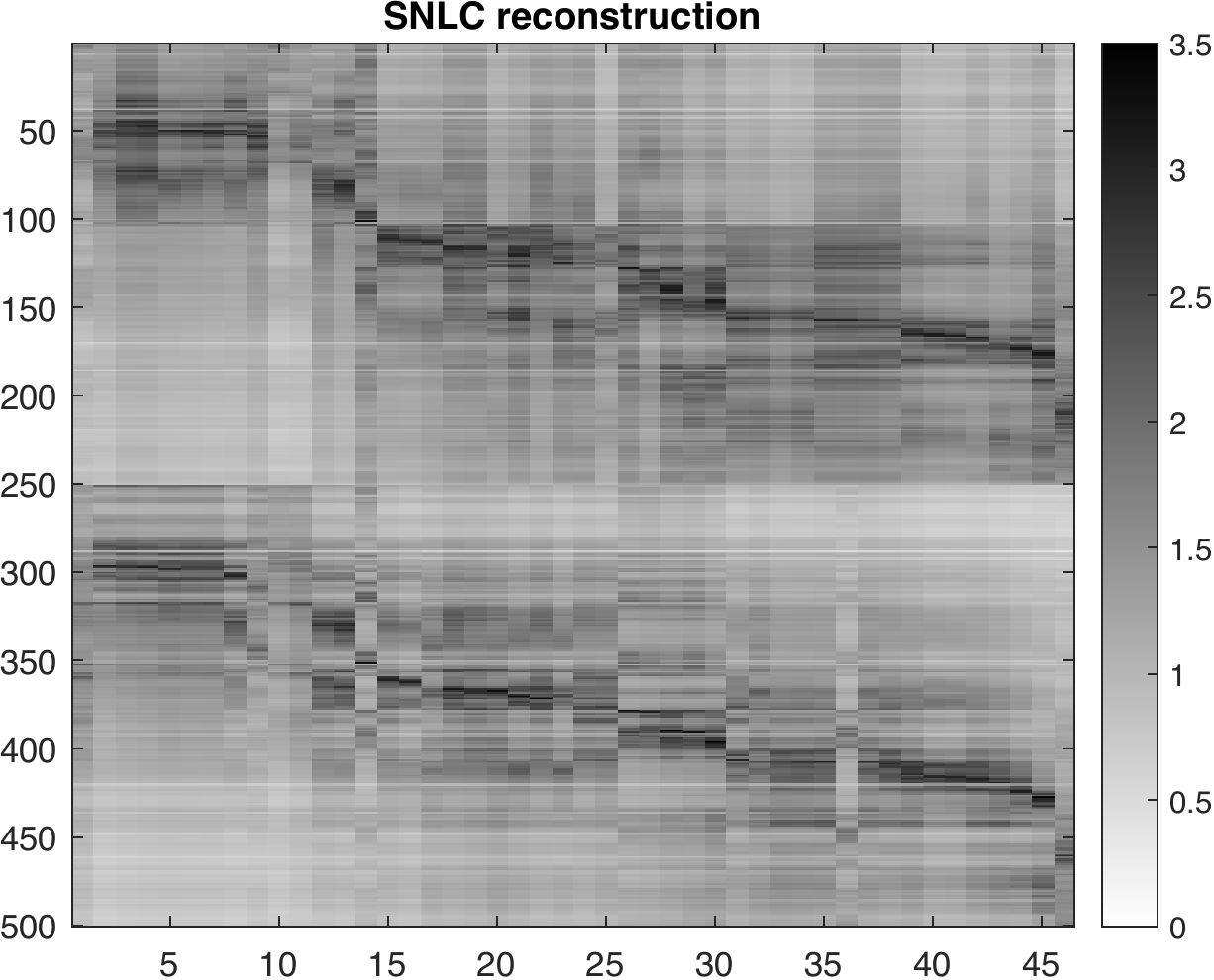}
    \caption{}
    \end{subfigure}\\[1em]
    \begin{subfigure}{0.4\textwidth}
    \includegraphics[width=\textwidth]{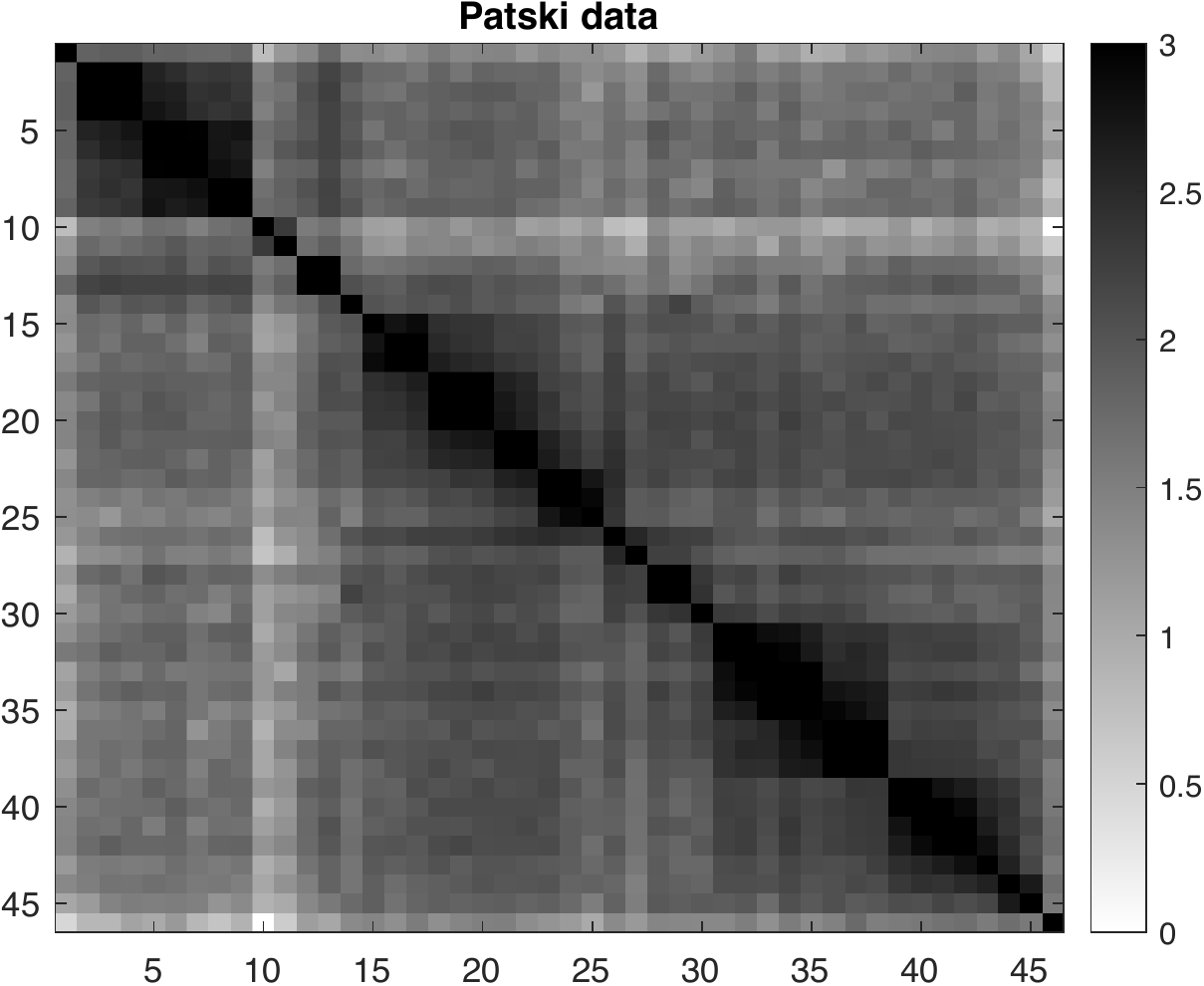}
    \caption{}
    \end{subfigure}
    \quad
    \begin{subfigure}{0.4\textwidth}
    \includegraphics[width=\textwidth]{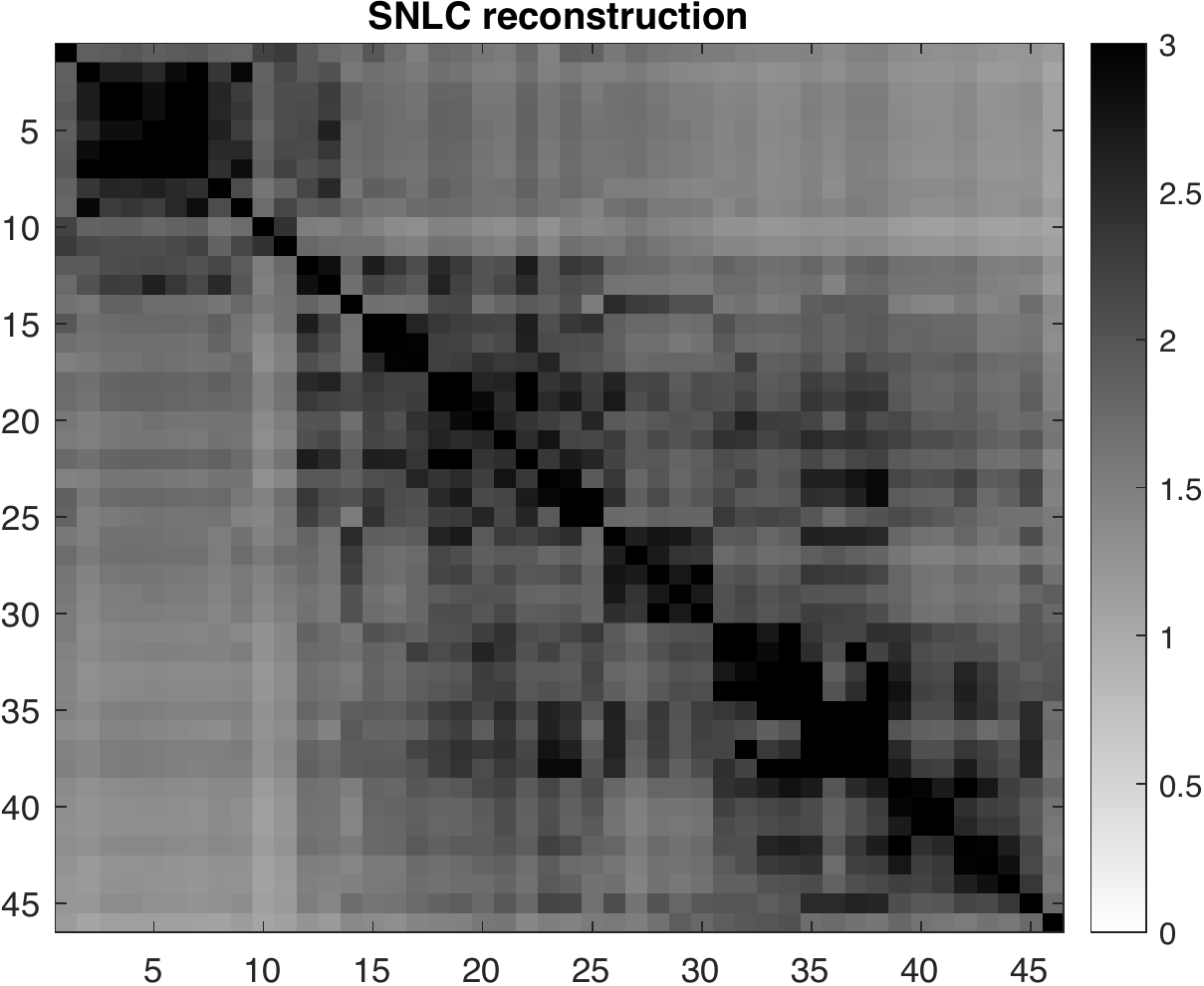}
    \caption{}
    \end{subfigure}
    \caption{Logarithmic heat maps for the reassigned contact count matrices obtained from the original Patski dataset and from the SNLC reconstruction: (a) and (b) $C^U$; (c) and (d) $C^P$; (e) and (f) $C^A$. The axis labels correspond to the 500 unambiguous beads, and the 46 ambiguous loci.}
    \label{fig:reconstructed_contacts}
\end{figure}

Figure~\ref{fig:norms_of_imaginary_parts} shows how the max-norm of the imaginary part of the solutions varies between different instances of the system \eqref{poly_eq_noisy} used for the reconstruction in Figure~\ref{fig:examples_of_reconstructions}(b), and for the reconstruction from the Patski data in Figure~\ref{fig:real_reconstructions}. A complete set of figures for these two datasets can be found in the Github repository. Taken together, the figures indicate that a max-norm of 0.15 was an appropriate threshold for approximate realness for both data sets, in the sense that it is low enough to single out solutions that have significantly smaller imaginary parts than the others, while also ensuring that it is possible to find an approximately real solution for each ambiguous locus.

\begin{figure}[h]
    \centering
    \begin{subfigure}{0.24\textwidth}
        \includegraphics[width=\textwidth]{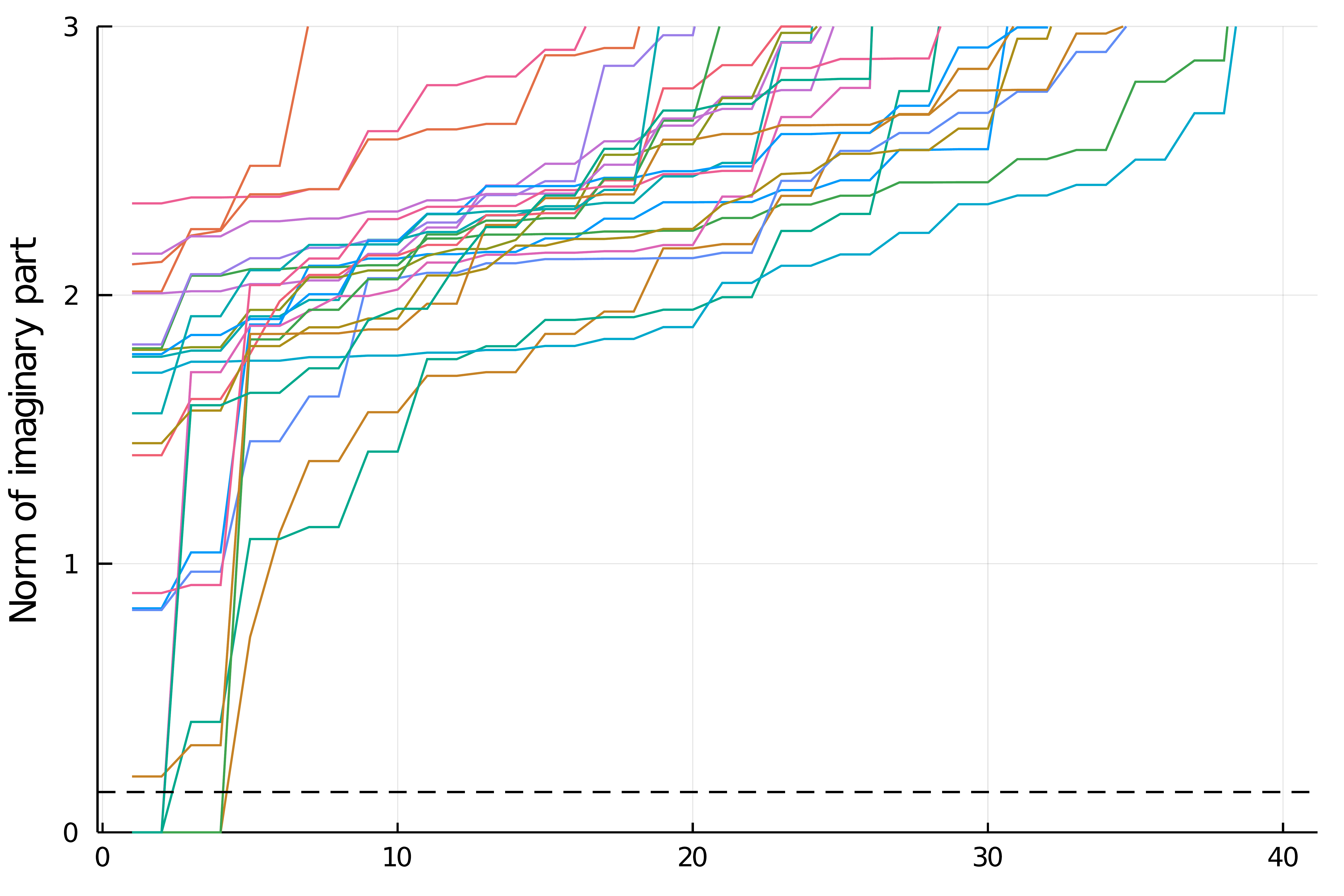}
        \caption{}
    \end{subfigure}
    \begin{subfigure}{0.24\textwidth}
        \includegraphics[width=\textwidth]{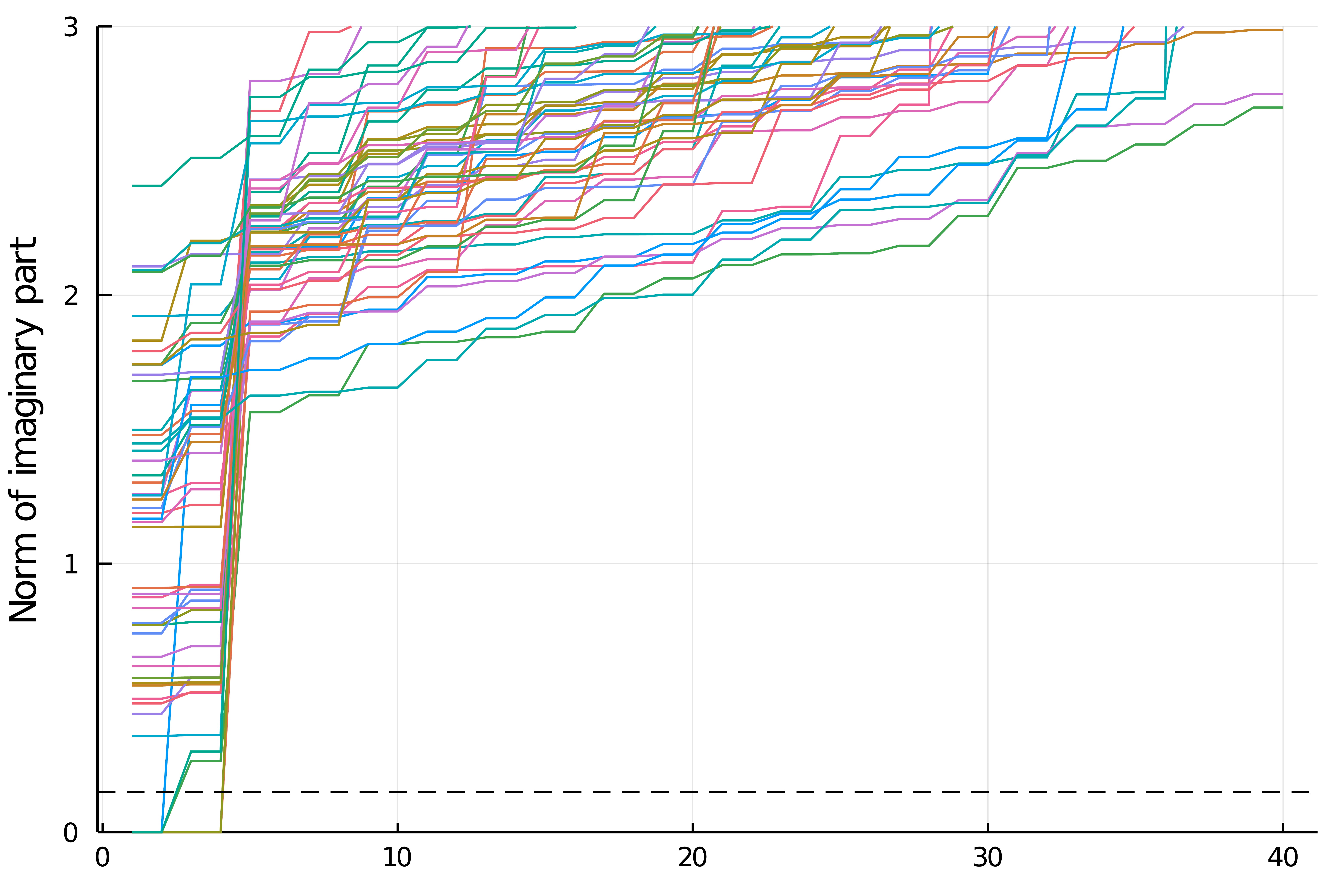}
        \caption{}
    \end{subfigure}
    \begin{subfigure}{0.24\textwidth}
        \includegraphics[width=\textwidth]{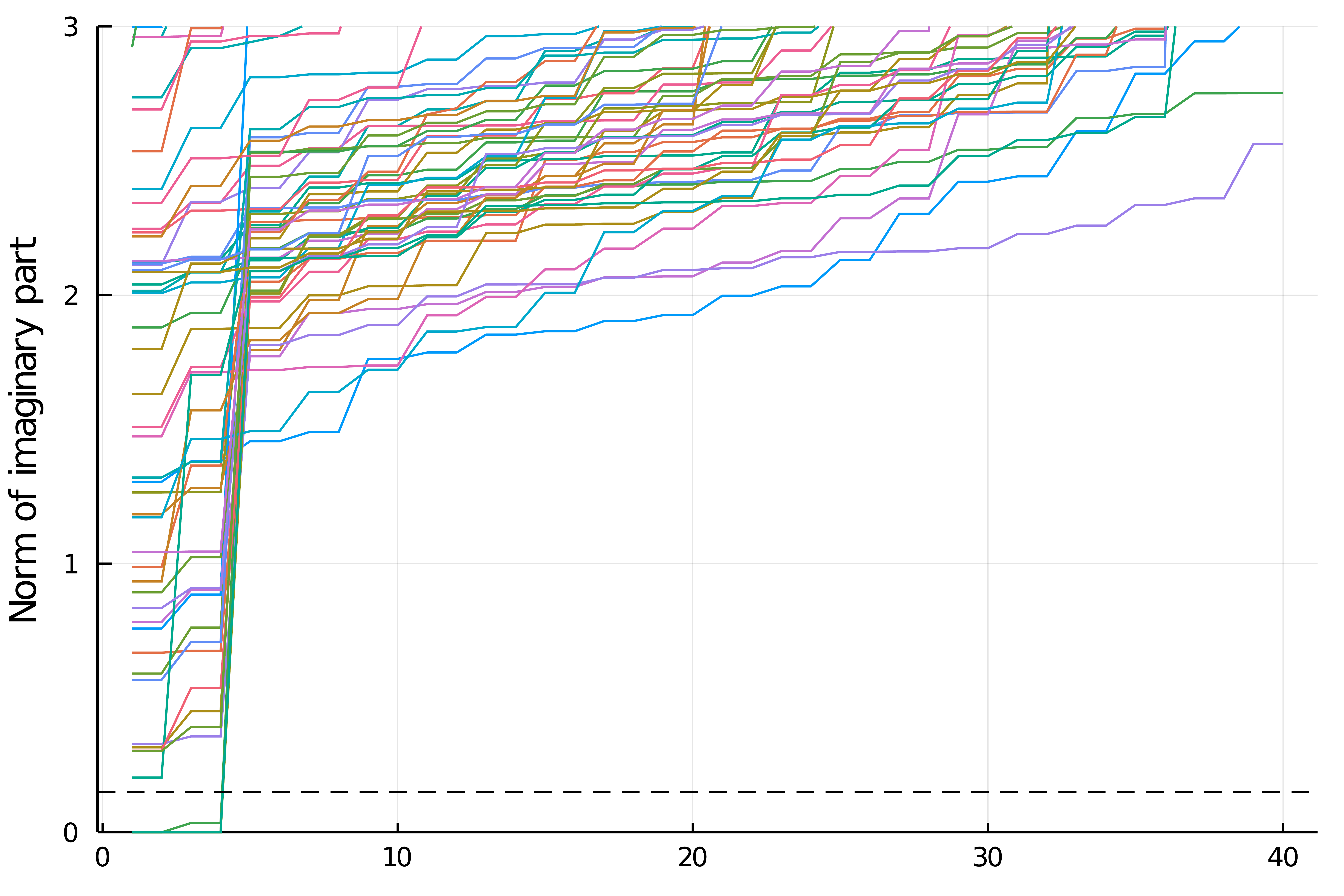}
        \caption{}
    \end{subfigure}
    \begin{subfigure}{0.24\textwidth}
        \includegraphics[width=\textwidth]{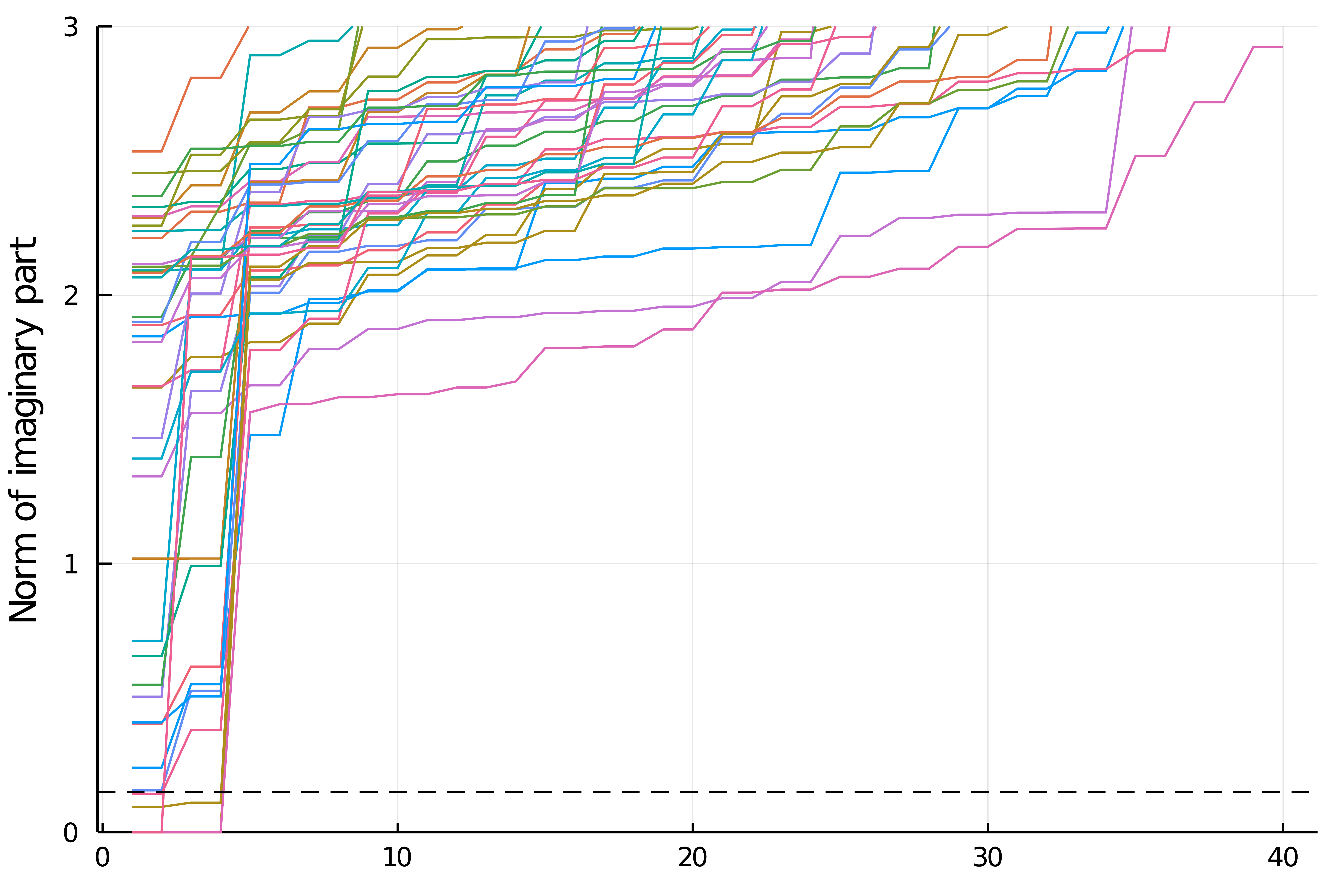}
        \caption{}
    \end{subfigure}

    \begin{subfigure}{0.24\textwidth}
        \includegraphics[width=\textwidth]{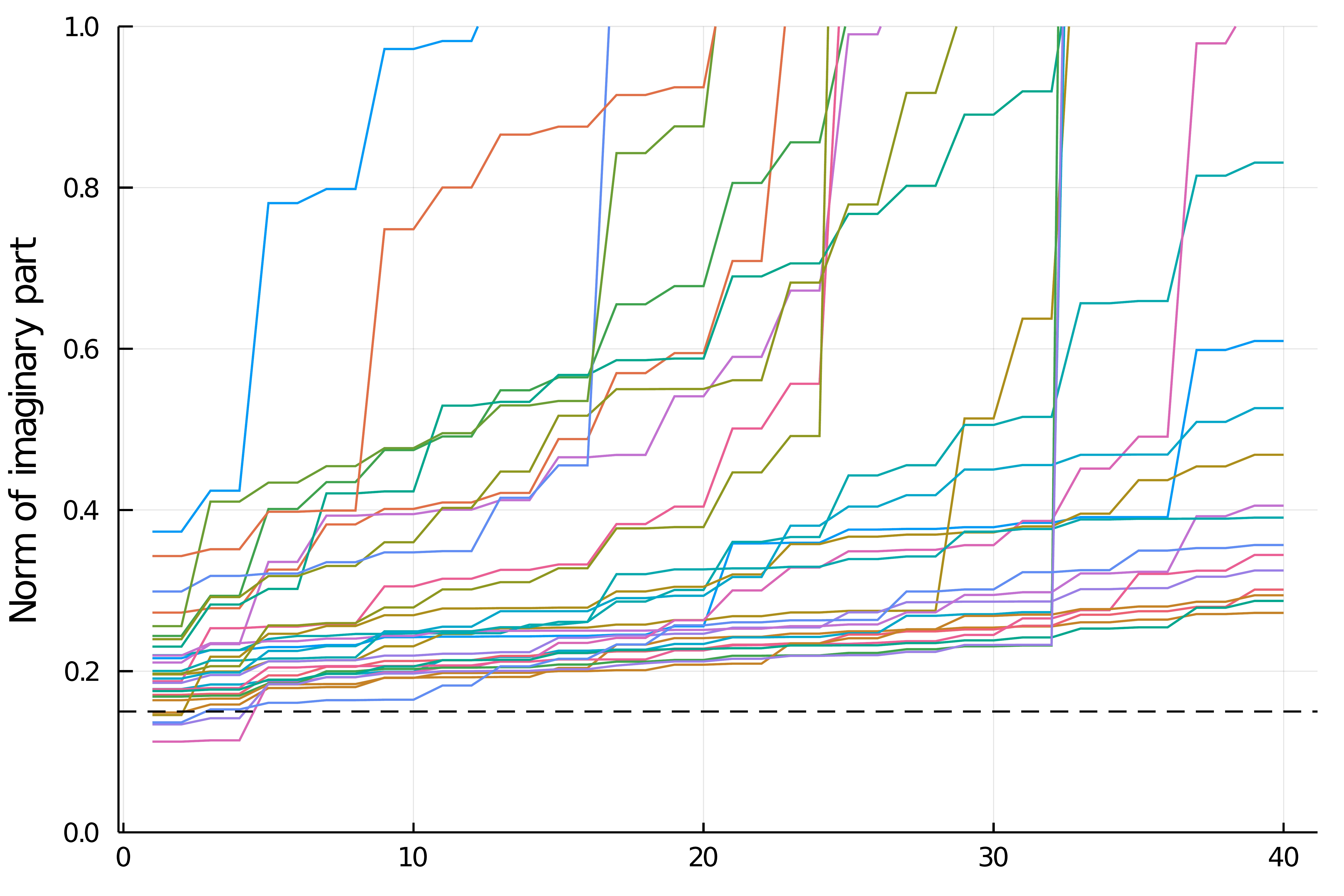}
        \caption{}
    \end{subfigure}
    \begin{subfigure}{0.24\textwidth}
        \includegraphics[width=\textwidth]{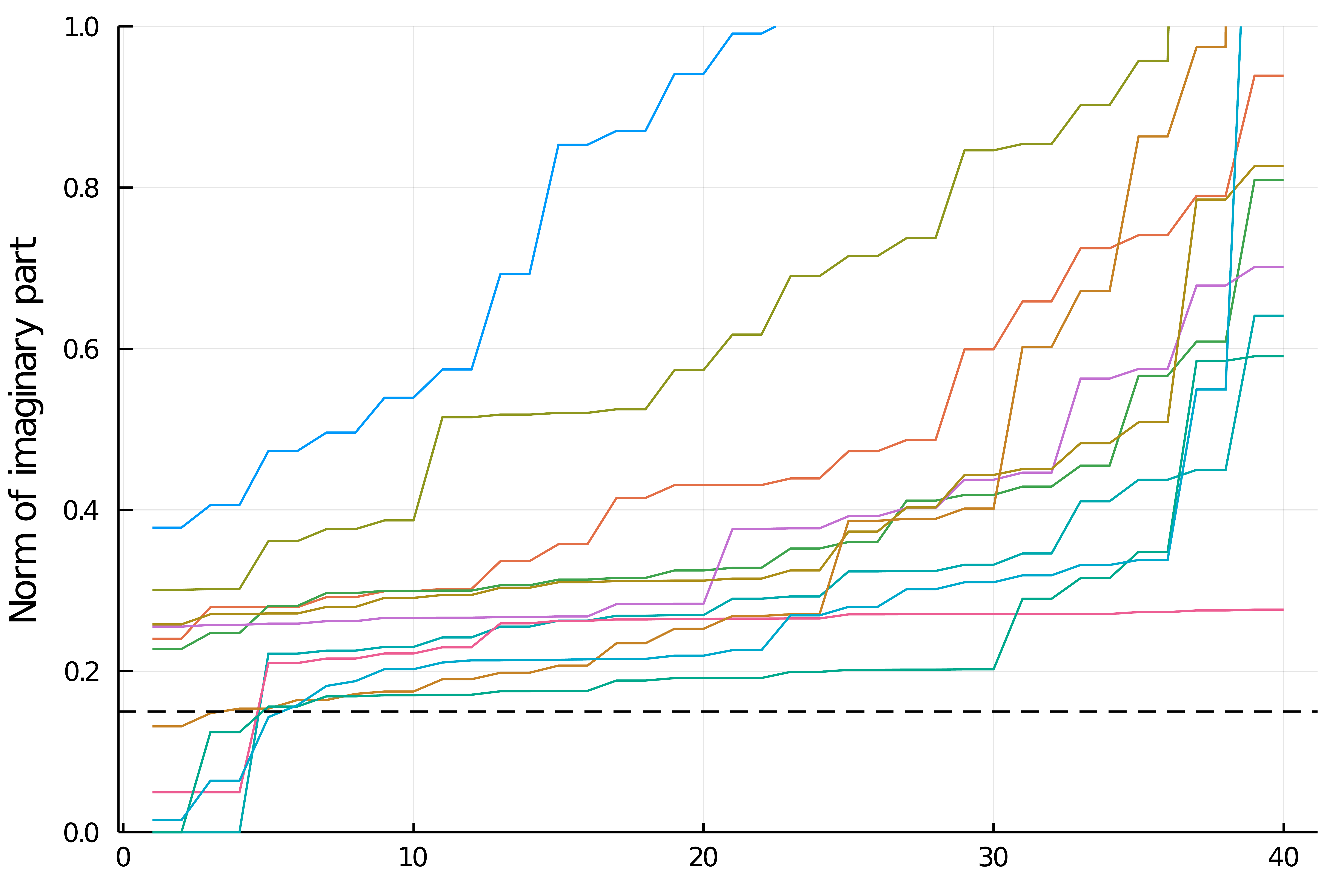}
        \caption{}
    \end{subfigure}
    \begin{subfigure}{0.24\textwidth}
        \includegraphics[width=\textwidth]{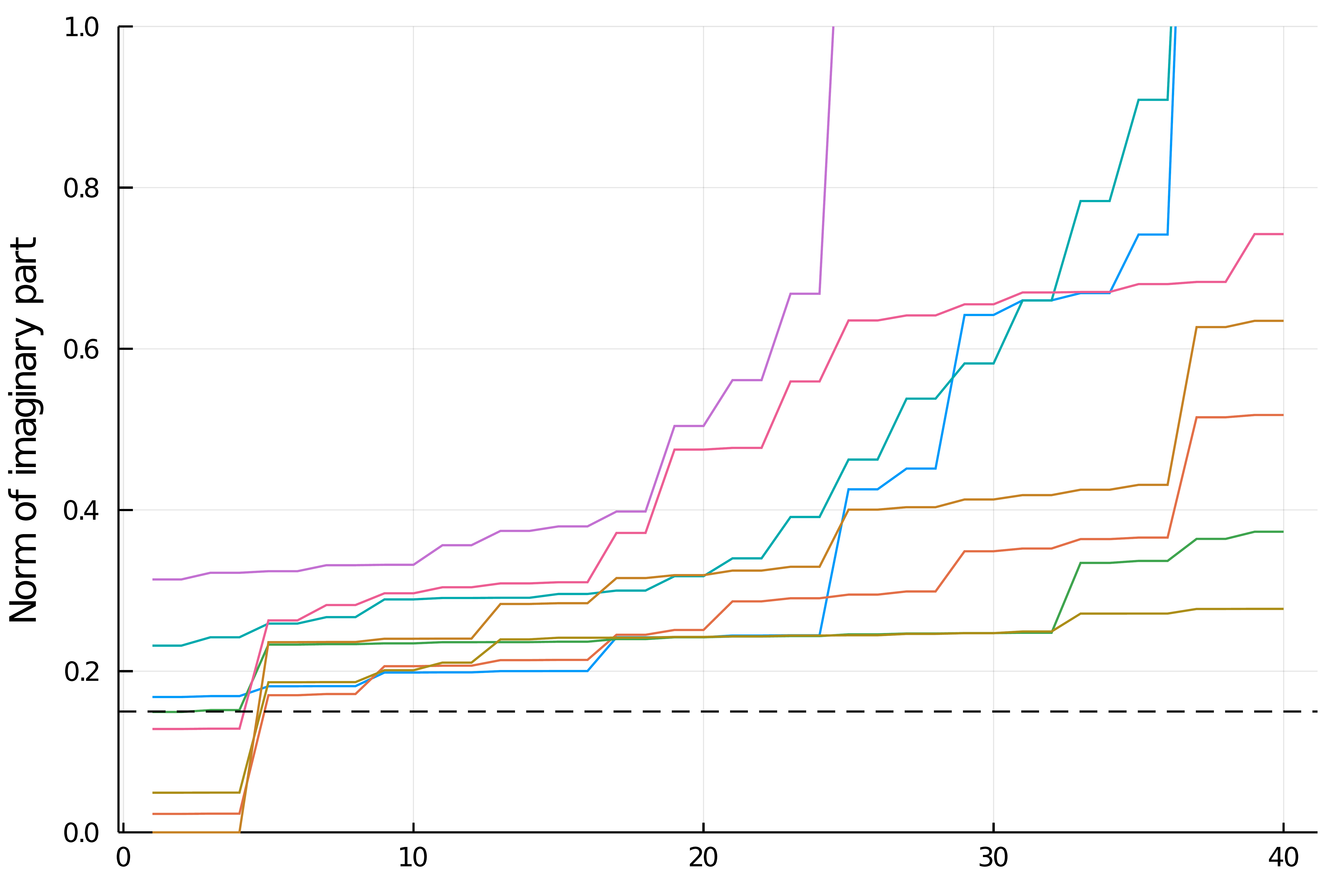}
        \caption{}
    \end{subfigure}
    \begin{subfigure}{0.24\textwidth}
        \includegraphics[width=\textwidth]{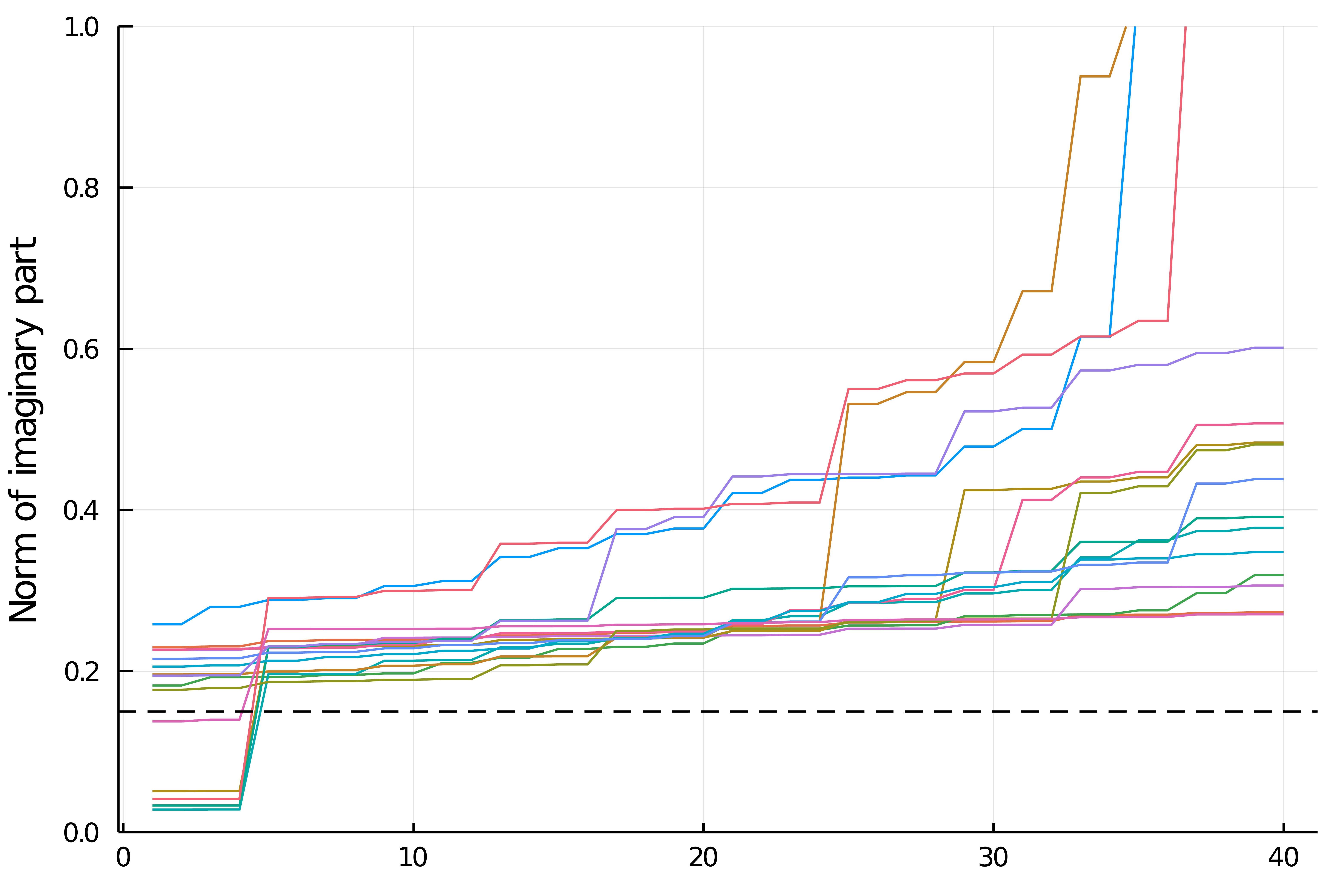}
        \caption{}
    \end{subfigure}   
    \caption{Max-norm of the imaginary parts encountered in the numerical algebraic geometry estimation of various loci. Each subfigure corresponds to an ambiguous locus: (a)--(d) correspond to the first four loci of the synthetic dataset used in Figure~\ref{fig:examples_of_reconstructions}(b); (e)--(h) correspond to the first four ambiguous loci of the Patski dataset. Each colored line corresponds to a specific choice of 6 unambiguous beads used in the estimation of the locus. Each line connects 40 points, that record the max-norm of the imaginary part of a solution (up to symmetry) found for the corresponding choice of 6 unambiguous beads. The dashed line at 0.15 corresponds to the choice of threshold for when a solution is considered approximately real. Similar figures for the rest of the ambiguous loci in the respective chromosome pairs can be found in the Github repository.}
    \label{fig:norms_of_imaginary_parts}
\end{figure}

\end{document}